\documentclass[runningheads]{llncs}
\usepackage[T1]{fontenc}
\usepackage{graphicx}
\usepackage{orcidlink}
\usepackage{amsmath,amsfonts}
\usepackage{subcaption}
\usepackage{bbm}
\usepackage{mathtools}

\usepackage{hyperref}
\usepackage{cleveref}

\DeclareMathOperator{\portal}{portal}

\DeclareMathOperator*{\argmin}{argmin}
\DeclareMathOperator*{\argmax}{argmax}
\newcommand{\gate}[2]{\ensuremath{P_{#2}^{#1}}}
\newcommand{\boundarynode}[2]{\ensuremath{b_{#2}^{#1}}}

\spnewtheorem{observation}{Observation}{\bfseries}{\itshape}

\allowdisplaybreaks

\begin{document}

\title{Logarithmic-Time Geodesically Convex Decomposition in Programmable Matter\thanks{This work has been supported by the DFG Project SCHE 1592/10-1.}} 

\titlerunning{Geodesically Convex Decomposition in Programmable Matter}

\author{Henning Hillebrandt \inst{1}\orcidlink{0009-0007-7029-0014}\and
Andreas Padalkin\inst{1}\orcidlink{0000-0002-4601-9597}\and
Christian Scheideler\inst{1}\orcidlink{0000-0002-5278-528X}\and
Daniel Warner\inst{1}\orcidlink{0000-0002-9423-6094}\and
Julian Werthmann \inst{1}\orcidlink{0000-0002-5110-5625}}

\authorrunning{H. Hillebrandt, A. Padalkin, C. Scheideler, D. Warner and J. Werthmann}

\institute{
Paderborn University, Paderborn, Germany\\
\email{\{hhilleb, padalkin, scheidel, dwarner, jwerth\}@mail.upb.de}}

\maketitle

\begin{abstract}
The decomposition of complex structures into ``simpler'' substructures is a powerful technique with a wide range of applications. We study the computation of decompositions in the context of programmable matter. The \emph{amoebot model} is a well-established model for programmable matter, which places $n$ tiny robots called amoebots on the triangular grid. We consider the \emph{reconfigurable circuit extension} of the geometric amoebot model, which allows amoebots to interconnect via so-called circuits. Amoebots can then instantaneously transmit simple beeps to all amoebots connected by the same circuit. Using reconfigurable circuits, previous papers have described a linear-time triangulation algorithm, and a logarithmic-time decomposition algorithm into so-called tunnel regions. Both algorithms only work on a restricted class of amoebot structures. In this paper, we define a decomposition into $O(|\mathcal H|)$ simple, geodesically convex regions for arbitrary amoebot structures, and show how it can compute such a decomposition in $O(\log n)$ rounds, where $|\mathcal H|$ denotes the number of holes in the amoebot structure. As a byproduct, we also improve the global maxima algorithm of Padalkin \emph{et al.} (Nat. Comput., 2024) for special cases and with that also their spanning tree algorithm to $O(\log n)$ rounds w.h.p.

\keywords{Programmable matter, amoebot model, reconfigurable circuits, decomposition} 
\end{abstract}

\section{Introduction}

The decomposition of complex structures into ``simpler'' substructures is a powerful technique with a wide range of applications.
For example, decomposing a nonconvex polygon into convex subpolygons allows us to apply algorithms for collision detection of convex polygons (e.g., \cite{DBLP:journals/cvgip/ORourkeCON82a}) on any polygon.
Other applications include
pattern recognition \cite{DBLP:journals/tc/FengP75},
Minkowski sum computation \cite{DBLP:journals/comgeo/AgarwalFH02},
motion planning \cite{DBLP:journals/ijcga/HertL98},
origami folding \cite{DBLP:journals/comgeo/DemaineDM00},
routing \cite{DBLP:journals/tcs/CoyCSSW24},
skeleton computation \cite{DBLP:conf/sma/LienKA06},
shape recognition \cite{DBLP:journals/jcb/FeldmannPSD22},
and
shortest path computation \cite{DBLP:conf/podc/PadalkinS24}.
These applications may require different properties for the simpler structures.
Typical properties are
simplicity (i.e., hole-freeness) \cite{DBLP:conf/podc/PadalkinS24},
convexity \cite{DBLP:journals/siamcomp/Keil85},
and
geodesic convexity \cite{DBLP:journals/tcs/CoyCSSW24}.
Special cases of the convex decompositions include the well known trapezoidal decomposition \cite{DBLP:journals/comgeo/Seidel91} and triangulation \cite{DBLP:journals/dcg/Chazelle91}.

In this paper, we investigate the computation of simple, geodesically convex decompositions in the context of programmable matter, i.e., a substance that can adaptively change its physical properties \cite{DBLP:journals/ijhsc/ToffoliM93}.
We consider systems consisting of many small, primitive particles that can communicate with each other to perform tasks on a global level.
In the past, many exciting applications for programmable matter have been proposed, including self-healing materials \cite{an2021self} and minimally invasive surgery \cite{montemagno1999constructing}.
The \textit{geometric amoebot model} \cite{DBLP:conf/spaa/DerakhshandehDGRSS14} is an established model to study active programmable matter systems.
In this model, the particles, called \textit{amoebots}, are placed on an infinite triangular grid graph, where only neighboring amoebots can directly communicate with each other.
Since information can only travel amoebot by amoebot, most problems have a natural lower bound of $\Omega(\mathit{diam})$ where $\mathit{diam}$ denotes the diameter of the structure.

For that reason, we employ the \textit{reconfigurable circuit extension} \cite{DBLP:journals/jcb/FeldmannPSD22}, which allows the instantaneous transmission of very simple signals to far-away amoebots along so-called \textit{circuits}.
Past work on this extension showed that it can enable (poly)logarithmic solutions for many problems \cite{DBLP:conf/wdag/ArtmannPS25,DBLP:journals/jcb/FeldmannPSD22,DBLP:conf/podc/PadalkinS24,DBLP:journals/nc/PadalkinSW24}.
Feldmann \emph{et al.} \cite{DBLP:journals/jcb/FeldmannPSD22} proposed a linear-time triangulation algorithm for a specific class of amoebot structures, and Padalkin and Scheideler \cite{DBLP:conf/podc/PadalkinS24} showed how to decompose a simple structure into so-called tunnel regions (formally defined below) in logarithmic time.
Note that the former algorithm suffers from a high runtime.
Further, both decompositions have restrictions on the amoebot structure.
In this paper, we define a decomposition into $O(|\mathcal H|)$ simple, geodesically convex regions for arbitrary amoebot structures with $|\mathcal{H}|$ holes, and show how it can compute such a decomposition in $O(\log n)$ rounds.

\subsection{Problem Statement and Our Contribution}
\label{sec:problem_statement}

Let $G_\Delta = (V_\Delta, E_\Delta)$ be the infinite regular triangular grid and let $\Gamma = (V, E)$ be an arbitrary connected subgraph of $G_\Delta$.
Each node of $\Gamma$ represents an amoebot and each edge indicates neighboring amoebots.
Let $R \subseteq V$.
We call the maximally connected components of $G_{V_\Delta \setminus R}$ holes, where $G_{V_\Delta \setminus R} = G_\Delta|_{V_\Delta \setminus R}$ is the graph induced by $V_\Delta \setminus R$.
There is a unique unbounded hole, which we refer to as the outer hole.
All other holes of $R$ are its inner holes and we denote the set of inner holes of $R$ with $\mathcal{H}_R$.
We call $R$ a \emph{region} if and only if it is connected in $\Gamma$.
We call $R$ \emph{simple} if and only if $\mathcal{H}_R = \emptyset$, i.e., $R$ contains no inner holes.
We call $R$ \emph{geodesically convex}, or just \emph{convex}, if and only if for each $u,v \in R$, every shortest path between $u$ and $v$ in $\Gamma$ is completely contained within $R$.

A \emph{decomposition} is a family $\mathcal R$ of regions $R \subseteq V$ such that $V = \bigcup_{R \in \mathcal R} R$.
Note that we do not require the regions to be disjoint.
A decomposition is \emph{geodesically convex}, or just \emph{convex}, if and only if each region $R \in \mathcal R$ is simple and convex.

We consider the \emph{convex decomposition problem}.
We say that the amoebot structure computes a decomposition $\mathcal R$ if and only if for each region $R \in \mathcal R$, each amoebot $u \in R$ knows $R \cap N(u)$, where $N(u)$ denotes the neighborhood of $u$ in $\Gamma$.
The goal of the amoebot structure is to compute a convex decomposition.
The main result of this paper is stated in the following theorem.

\begin{theorem}
\label{th:main_theorem}
    An amoebot structure with $|\mathcal{H}|$ holes computes a decomposition consisting of $\Theta(|\mathcal{H}|)$ simple geodesic convex regions within $O(\log n)$ rounds, w.h.p.\footnote{An event holds \emph{with high probability (w.h.p.)} if it holds with probability at least $1 - 1/n^c$ where the constant $c$ can be made arbitrarily large.}
\end{theorem}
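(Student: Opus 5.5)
We prove the theorem by construction. The decomposition cuts the structure along shortest paths into at most $O(|\mathcal H|)$ simple pieces. Each simple piece is then split into a constant number of geodesically convex regions, using known tools: the tunnel-region decomposition of Padalkin and Scheideler \cite{DBLP:conf/podc/PadalkinS24} and the improved global-maxima and spanning-tree primitives mentioned in the abstract.

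\textbf{Step 1: make the pieces simple.} We compute a spanning tree of $\Gamma$ in $O(\log n)$ rounds w.h.p., using the improved global maxima algorithm. For each inner hole $H \in \mathcal H$, the boundary amoebots of $H$ elect a representative by a leader election on the boundary circuit, in $O(\log n)$ rounds w.h.p. All holes do this in parallel, since their boundaries use disjoint circuits.

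We then connect each hole's representative to the outer boundary by a cut that is a shortest path. Concretely, we route the cut to the topmost amoebot of the hole in a fixed grid direction, so that it is a straight line segment. Each segment is a shortest path, hence convex, and it separates the holes.

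Cutting along these $|\mathcal H|$ segments leaves a family of regions. Each region has no inner hole, and by an Euler-characteristic count there are $O(|\mathcal H|)+1$ of them. Since regions are allowed to overlap, the cut amoebots are simply assigned to both sides. Each amoebot learns its cut membership locally by beeping along direction-aligned circuits.

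\textbf{Step 2: make each simple piece convex.} In each simple piece we apply the tunnel decomposition of \cite{DBLP:conf/podc/PadalkinS24}, which runs in $O(\log n)$ rounds. We then argue that each tunnel region, intersected with the ambient geometry, is geodesically convex in $\Gamma$.

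Geodesic convexity is defined with respect to shortest paths in all of $\Gamma$, not within the piece. The key lemma is therefore the following: a region whose boundary consists of straight grid segments on portals and of parts of the structure's boundary is geodesically convex in $\Gamma$. The proof would be a local exchange argument. A shortest path that leaves the region through a portal must cross the portal twice. Because the portal is a straight segment, that detour can be replaced by a path along the portal of no greater length. Hence every shortest path can be rerouted inside the region, and in fact every shortest path already lies inside it, which is the stronger claim required.

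\textbf{Step 3: count regions.} For the bound $\Theta(|\mathcal H|)$, the upper bound requires that the number of tunnel regions per simple piece is charged to the holes. We would choose the cuts so that each piece has $O(1)$ "reflex" boundary portions, each originating from a cut endpoint or a hole. The lower bound is easier. A simple convex region cannot contain a hole, and each hole must be surrounded by amoebots of at least two regions, so at least $\Omega(|\mathcal H|)$ regions are needed (with the convention $\max(1,\cdot)$).

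\textbf{Main obstacle.} The hardest part is Step 2's convexity claim together with the $O(|\mathcal H|)$ count. Tunnel regions of a general simple structure could be numerous. Moreover, a straight-line cut in one direction need not be a shortest path in $\Gamma$ once the structure bends around it.

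To handle this, the first attempt would be to choose the cuts as portals, that is, maximal straight segments in one of the three grid directions. Portals are known to be shortest paths in $\Gamma$, and there are boundedly many relevant ones per hole (the portal through the top and bottom of each hole in each of three directions). We would then prove that the resulting faces of the portal arrangement are convex, and that their number is linear in $|\mathcal H|$ by planarity.
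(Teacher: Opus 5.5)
\textbf{Verdict: the proposal has a genuine gap, and Step 2 is the critical one.}

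Your Step 2 rests on a false lemma. It is not true that a region bounded by portal segments and structure boundary, in particular a tunnel region, is geodesically convex in $\Gamma$. Your exchange argument only covers a shortest path that leaves and re-enters through the \emph{same} portal; that case is exactly \Cref{lem:portal_splitting_remain_pathconvex}. A tunnel has two gates, and a shortest path of $\Gamma$ can leave through one gate, run around a hole through a neighbouring region, and re-enter through the other.

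For a concrete counterexample, surround a single hole $H$ by a ring that is a tall, thin arch above $H$ and a short flat strip below it. Cut at the $y$-portals through the leftmost and rightmost boundary nodes of $H$. The upper piece is a tunnel bounded only by two portal segments and structure boundary. Yet all shortest $\Gamma$-paths between its two ends next to $H$ run below $H$. The paper flags exactly this at the start of \Cref{sec:pathconvex_regions}, and that subsection is the idea you are missing:
\begin{itemize}
\item Each tunnel $T$ is split further at the $x$- and $z$-portals meeting both gates ($P_{\uparrow,q},P_{\downarrow,q}$). If no such portal exists, it is split at the portals $P^{G}_{q},P^{G'}_{q}$ through each gate that are closest to the other gate. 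In both cases there are node splits at the first boundary node touched.
\item Every resulting piece except possibly one region $M$ is then a parallelogram, a triangle, or bounded by two gates meeting at $120^\circ$.
\item $M$ is treated as a tunnel with single-node gates $g,g'$. By \Cref{lem:g_not_on_G}, $g\notin G$ and $g'\notin G'$, so any detour outside $M$ costs at least $2$.
\item $M$ is cut at the ``midpoint'' portals $P_x,P_y,P_z$, at $q$-distance $\lceil d_q/2\rceil$ from $g$, plus a node split at $b_q$ where needed.
\end{itemize}
Proving these last pieces convex (\Cref{lem:point_shape_tunnel_path_convex}) needs the closest-shortest-path structure of \Cref{lem:closest_path} and a careful count based on $d=\frac12(d_x+d_y+d_z)$. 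It is not a local exchange.

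Your fallback does not substitute for this. Convexity of the faces of a portal arrangement through hole extremes is only asserted. The count is also not linear: for $k$ holes stacked along a non-grid direction, the $x$- and $z$-portals through their extremes cross pairwise and create $\Theta(k^2)$ faces. The paper keeps the count at $O(|\mathcal H|)$ by splitting only inside the current region, with $O(1)$ portals per tunnel. It implements this with root-and-prune and PASC on the region's portal trees.

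\textbf{Step 1 also fails under the problem's definitions.} A region is a node set $R\subseteq V$, and simplicity refers to the components of $V_\Delta\setminus R$. With one cut per hole, your own Euler count gives a single piece for a one-hole structure. That piece contains both copies of every cut node, so as a node set it is the whole structure and still encloses $H$. This is why the paper splits at \emph{two} $y$-portals per hole, through the $\mathit{WNW}$-most and $\mathit{ESE}$-most boundary nodes, with node splits. It then uses the chains $S_\searrow,S_\nearrow,S_\nwarrow,S_\swarrow$ to show that copies of a node always land in different regions (\Cref{lem:decomposition:simple}).

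Locating those extreme nodes requires the inner/outer boundary test and the $O(\log n)$ global-maxima routine for boundary sets (\Cref{th:global_maxima:boundary_set}). A leader election on the boundary circuit does not give you an extreme amoebot, and the spanning tree is not needed.

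Finally, your lower-bound argument does not follow. That each hole is bordered by at least two regions does not force $\Omega(|\mathcal H|)$ regions, since the same two regions could border many holes.
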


As a byproduct, we improve the global maxima algorithm of \cite{DBLP:journals/nc/PadalkinSW24} for special cases and with that also their spanning tree algorithm to $O(\log n)$ rounds w.h.p.


\subsection{Related Work}

In the computational geometry community, decompositions have been extensively studied:
Given an $n$-gon with $r$ reflex vertices, the task is to decompose it into a minimum number of ``simpler'' subpolygons.
The complexity of the problem depends on 
(i) whether the polygon has holes,
(ii) whether Steiner points are allowed, and
(iii) which shapes the subpolygons are allowed to have.
Shapes can be, for example, convex \cite{chazelle1985optimal}, triangular \cite{DBLP:journals/dcg/AmatoGR01,bern1995mesh,DBLP:journals/dcg/Chazelle91}, geodesic triangular \cite{DBLP:journals/algorithmica/ChazelleEGGHSS94,DBLP:journals/dcg/SpeckmannT05,DBLP:journals/dcg/Streinu05}, trapezoidal \cite{DBLP:journals/jacm/AsanoAI86,DBLP:journals/comgeo/Seidel91}, star-convex \cite{DBLP:journals/theoretics/AbrahamsenBNZ26}, or monotone \cite{DBLP:journals/ipl/LiuN88}.


For simple polygons, the best known convex decomposition algorithm with Steiner points runs in time $O(n^3)$ \cite{chazelle1985optimal}, and without Steiner points in time $O(n + r^2 \min \{ r^2, n \})$ \cite{DBLP:journals/ijcga/KeilS02}.
For the latter case, Chazelle proposed a $\frac{13}{3}$-approximation algorithm that runs in time $O(n \log n)$ \cite{DBLP:conf/focs/Chazelle82}.
However, for general polygons, finding a convex decomposition with a minimum number of convex components is NP-complete \cite{DBLP:journals/tit/ORourkeS83}.
For that reason, approximate convex decompositions were introduced, where some non-convex features are considered less significant and are consequentially ignored, e.g., \cite{DBLP:journals/comgeo/LienA06,DBLP:conf/icip/MamouG09,DBLP:journals/tog/WeiLLS22}.


For polygons with $|\mathcal{H}|$ holes, it is possible to compute a triangulation in time $O(n + |\mathcal{H}|\log |\mathcal{H}|)$ \cite{DBLP:journals/corr/abs-2603-21617}.
There is also a randomized algorithm solving the problem in expected time $O(n)$, which is considerably simpler \cite{DBLP:journals/dcg/AmatoGR01}.
This matches the known lower bound of $\Omega(n \log n)$ for general polygons \cite{DBLP:books/lib/BergCKO08}.


A geodesic triangle (also called pseudo-triangle) is a polygon with exactly three convex corners.
Note that any triangulation is also a geodesic triangulation (also called pseudo-triangulation).
A geodesic triangulation is called pointed if and only if each vertex has an incident face with an angle larger than $180^\circ$.
It turns out that the definition of a minimum geodesic triangulation is equivalent to the one of a pointed geodesic triangulation \cite{DBLP:journals/dcg/Streinu05}.
For simple polygons, a pointed geodesic triangulation can be computed in linear time \cite{DBLP:journals/algorithmica/ChazelleEGGHSS94,DBLP:journals/dcg/SpeckmannT05,DBLP:journals/dcg/Streinu05}, and for general polygons, in time $O(n \log n)$ \cite{DBLP:journals/dcg/Streinu05}.


The literature on graph decompositions is vast.
One of the most important classes of graph decompositions is the tree decomposition (first introduced in \cite{DBLP:journals/jal/RobertsonS86}, see, e.g., \cite{DBLP:journals/iandc/BodlaenderK10,DBLP:journals/iandc/BodlaenderK11} for surveys), where the nodes of a graph are covered with subsets called bags, that are associated with the nodes of a tree $T$. 
The decompositions ensure that every node is in at least one bag and, for every edge, at least one bag contains both of its endpoints.
Additionally, if a node appears in two bags, it must also appear in every bag corresponding to a tree node on the unique path in $T$ between their corresponding tree nodes.

Another important class of graph decompositions is the core decomposition (first introduced in \cite{seidman1983network}, see, e.g., \cite{DBLP:journals/vldb/MalliarosGPV20} for a recent survey).
For $k\in\mathbb{N}$, the $k$-core of a graph is the largest subgraph where every node has degree at least $k$. 
It can be constructed by iteratively removing nodes of degree smaller than $k$ from the graph until it ceases to change.
By construction, the $(k+1)$-core of any graph is a subset of its $k$-core, i.e., the resulting cores form a hierarchy on the nodes based on their connectivity.

The decomposition we present in this paper is closely related to a decomposition presented for square grid graphs in \cite{DBLP:journals/tcs/CoyCSSW24}.
The authors motivate their decomposition to enable routing on a square grid graph with holes.
To this end, they ensure that each of the $O(|\mathcal{H}|)$ node sets created by the decomposition is hole-free and what they call path-convex, i.e., for each pair of nodes in the set, at least one shortest path remains entirely inside the region.
They prove that traversing the right sequence of regions yields a shortest path in square grid graphs.
Their work employs a construction of \cite{DBLP:journals/corr/abs-2202-08008} to extend their results to Unit Disk Graphs where nodes have arbitrary positions in $\mathbb{R}^2$ and are connected if and only if they have euclidean distance $1$.
Contrary to their approach, we focus on triangular grid graphs and show that for each node pair of the same region \emph{all} shortest paths stay within that region.
Moreover, we present implementation details for the amoebot model.


The reconfigurable circuit extension to the geometric amoebot model was introduced by Feldmann \emph{et al.} \cite{DBLP:journals/jcb/FeldmannPSD22}.
Since then, polylogarithmic solutions were proposed for various problems including leader election \cite{DBLP:journals/jcb/FeldmannPSD22}, orientation agreement \cite{DBLP:journals/jcb/FeldmannPSD22}, shape recognition \cite{DBLP:journals/jcb/FeldmannPSD22,DBLP:journals/nc/PadalkinSW24}, spanning trees \cite{DBLP:conf/wdag/EmekGH24,DBLP:journals/nc/PadalkinSW24}, shape containment \cite{DBLP:conf/wdag/ArtmannPS25}, and shortest path forests \cite{DBLP:conf/podc/PadalkinS24}.
The universal shape recognition algorithm by Feldmann \emph{et al.} \cite{DBLP:journals/jcb/FeldmannPSD22} is based on a triangulation.
The runtime of the triangulation depends on the shape of the amoebot structure and is linear in $n$ in the worst case.
In order to avoid a linear runtime, they abort the triangulation of the amoebot structure if it takes longer than the triangulation of the input shape.
Padalkin and Scheideler \cite{DBLP:conf/podc/PadalkinS24} utilize a divide and conquer approach to construct shortest path forests within $O(|\mathcal H| \log^3 n)$ rounds.
For that, they decompose a simple amoebot structure into tunnel regions.
We will use their decomposition algorithm in our second phase (see \Cref{sec:tunnel_regions}).
The shortest path forests decompose an amoebot structure with respect to a set of sources.
However, in general, this decomposition is neither simple nor geodesically convex.


\section{Preliminaries}\label{sec:Triangular_grid_graphs}

For a triangular grid graph $\Gamma=(V_\Gamma,E_\Gamma)$, we define a path $\Pi = (v_0, \dots, v_k)$ as a sequence of nodes with $\{v_i, v_{i+1}\} \in E_\Gamma$ for all $i \in \{0, \dots, k-1\}$.
Slightly abusing notation, we write $v \in \Pi$ if $v \in \{v_0, \dots, v_k\}$ and denote the length of the path by $|\Pi|$ which is the number of edges of $\Pi$. 
If $v_0 = v_k$, $\Pi$ is called a cycle. 
If $u = v_0$ and $v = v_k$, $\Pi$ is called a $uv$-path.
Further, if there is an $i\in \{1,\dots k-1\}$ s.t. $w=v_i$, $\Pi$ is called a $uwv$-path.
For a $uv$-path $\Pi$ and a $vw$-path $\Pi'$, we denote by $\Pi \circ \Pi'$ the path obtained by concatenating $\Pi$ and $\Pi'$.

The distance between two nodes $u,v \in V_\Gamma$ of some triangular grid graph $\Gamma = (V_\Gamma, E_\Gamma)$ is the number of edges on a shortest $uv$-path, i.e. $d_\Gamma(u,v) \coloneqq \min_{\text{$uv$-path $\Pi$}}|\Pi|$.
We extend the distance function to sets by taking the minimum distance between two nodes.
More specifically, we define for $V_1,V_2\subseteq V_\Gamma$ distance $d_\Gamma(V_1,V_2)\coloneqq \min_{v_1\in V_1,\:v_2\in V_2}d_\Gamma(v_1,v_2)$.

We define the boundary of an (inner or outer) hole $H$ to be the set of nodes that neighbor a node of $H$ and call a node incident to $H$, if it is part of $H$'s boundary.
A node set is incident to $H$, if one of its nodes is incident to $H$.

As in previous works on decompositions of grid graphs, our construction is based on the concept of \emph{portal graphs} \cite{DBLP:journals/tcs/CoyCSSW24,DBLP:conf/podc/PadalkinS24}.
For a triangular grid graph $\Gamma = (V_\Gamma, E_\Gamma)$, let $E_x \subseteq E_\Gamma$ be the set of edges parallel to the $x$-axis.
Then, the \emph{$x$-portals} of $\Gamma$ are the connected components of the subgraph $(V_\Gamma, E_x)$.
For each $u \in V$, we denote the unique $x$-portal that contains $u$ by $\portal_x(u)$.
We define \emph{$y$-} and \emph{$z$-portals} analogously.
Two portals $P_1$ and $P_2$ are adjacent if there exists an edge $\{v_1,v_2\} \in E_\Gamma$ such that $v_1 \in P_1$ and $v_2 \in P_2$.

The \emph{$x$-portal graph} $\mathcal P_x = (V_{\mathcal{P}_x}, E_{\mathcal{P}_x})$ of $\Gamma$ has a node for each $x$-portal of $\Gamma$.
Two nodes of $\mathcal P_x$ are connected by an edge if and only if the corresponding portals are adjacent.
We define the $y$-portal graph $\mathcal P_y$ and the $z$-portal graph $\mathcal P_z$ analogously.
An example of the three portal graphs is depicted in \Cref{fig:portal_graphs}.

\begin{figure}[tbp]
    \centering
    \includegraphics[scale=.6]{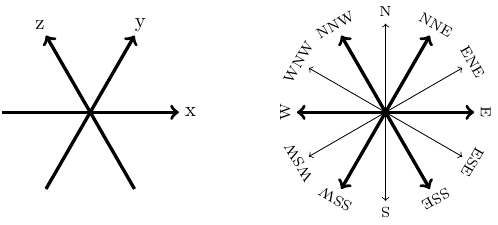} 
    
    \vspace{2mm}
    
    \includegraphics[scale=.9]{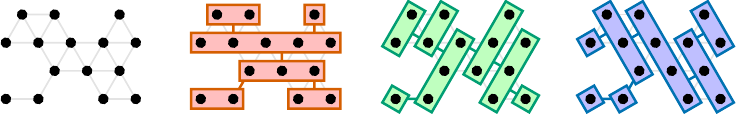}    
    \caption{
        A triangular grid graph and the corresponding portal graphs $\mathcal{P}_x$, $\mathcal{P}_y$ and $\mathcal{P}_z$ (from left to right).
        The figure is adapted from \cite{DBLP:conf/podc/PadalkinS24}.
    }
    \label{fig:portal_graphs}
\end{figure}

The definition of portal graphs yields a notion of distances in the direction of a specific axis. 
For two nodes $u,v \in V_\Gamma$, we define the $x$-distance $d_{\Gamma,x}(u,v)$ between $u$ and $v$ as the distance between $\portal_x(u)$ and $\portal_x(v)$ in $\mathcal{P}_x$.
Similarly, we define $d_{\Gamma,y}(u,v)$ and $d_{\Gamma,z}(u,v)$ as the distances in $\mathcal{P}_y$ and $\mathcal{P}_z$, respectively.

We conclude by restating two important lemmas from \cite{DBLP:journals/tcs/CoyCSSW24,DBLP:conf/podc/PadalkinS24}.

\begin{lemma}[Adapted from \cite{DBLP:journals/tcs/CoyCSSW24,DBLP:conf/podc/PadalkinS24}]
\label{lem:portal_tree}
    Let $\Gamma = (V_\Gamma, E_\Gamma)$ be a simple triangular grid graph.
    Then, the portal graphs $\mathcal{P}_x$, $\mathcal{P}_y$, and $\mathcal{P}_z$ are trees.
\end{lemma}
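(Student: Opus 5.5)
By symmetry of the triangular grid under rotation by $60^\circ$, it suffices to prove that $\mathcal{P}_x$ is a tree. Connectivity is immediate: $\Gamma$ is connected, and every edge of $\Gamma$ either lies inside an $x$-portal or joins two adjacent $x$-portals. So every path in $\Gamma$ projects onto a walk in $\mathcal{P}_x$. The real work is to rule out cycles in $\mathcal{P}_x$, and there I will use that $\Gamma$ is simple, i.e., its complement in $G_\Delta$ has only the outer hole.

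Suppose, for contradiction, that $\mathcal{P}_x$ has a cycle $P_0,P_1,\dots,P_{k-1},P_0$ with $k\ge 3$ distinct portals. First I show that adjacent portals lie on consecutive horizontal grid lines. Each $x$-portal is a horizontal segment of nodes on one grid line $y=c$, and every non-horizontal grid edge changes $c$ by exactly one. Moreover, two distinct portals on the same line are separated by at least one missing node, so they cannot be adjacent.

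Next I lift the cycle to a closed curve. Choose edges $\{a_i,b_{i+1}\}$ of $\Gamma$ with $a_i\in P_i$ and $b_{i+1}\in P_{i+1}$ (indices mod $k$). Inside each $P_i$, connect $b_i$ to $a_i$ by the horizontal segment. This yields a closed curve $C$ in the plane that uses only nodes and edges of $\Gamma$. Since the portals are distinct, $C$ visits each horizontal portal segment in one contiguous stretch. The claim is that, because the portal cycle is nontrivial, $C$ encloses a grid node that is not in $V_\Gamma$, which then lies in a bounded component of $G_{V_\Delta\setminus V_\Gamma}$, i.e., an inner hole.

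To find this node, take a level $c$ strictly between the minimum and maximum levels used by the cycle. The cycle must cross level $c$ at least twice, going up and coming down. It cannot return through the same portal, since the portals are distinct; and if it passes through distinct portals on the same line $c$, then the gap between two such portals on that line consists of non-$\Gamma$ nodes. If instead the cycle lives on only two consecutive levels, then $k\ge 3$ forces two distinct portals on one of the levels, both adjacent to a common portal (or path of portals) on the other level. The gap between them again consists of missing nodes. In either case, the missing nodes in the gap are enclosed by $C$: they lie on a horizontal line segment whose endpoints belong to $C$, and a Jordan-curve/parity argument places them inside $C$ rather than outside.

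I expect this last step to be the main obstacle. The natural approach is to count crossings of $C$ with the rightward horizontal ray from a gap node $g$: the parity should be odd, because $C$ passes through the portal to the right of $g$ on its line exactly once, and I must handle the cases where the ray touches $C$ tangentially at a vertex. Once $g$ is shown to be inside $C$, its hole component cannot reach infinity without crossing $C$, and $C$ consists only of $\Gamma$ nodes and edges. Since grid edges meet only at nodes, this gives an inner hole, contradicting simplicity.
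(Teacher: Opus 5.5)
\textbf{Gap: the two-row case.} Your reduction to a simple cycle $C$ of $\Gamma$ is sound, and the ``missing node inside $C$'' strategy works when the portal cycle spans at least three rows. The two-row case, however, fails as written. Suppose all portals of the cycle lie on two consecutive rows $c$ and $c+1$. Then $C$ lies in the closed strip between these rows. Its interior is an open subset of the convex hull of $C$, hence of the \emph{open} strip, so it contains no grid node at all. Your own ray test confirms this: every stretch of $C$ on row $c$ is entered from and left towards row $c+1$, so a gap node on row $c$ has crossing count $0$ and lies outside $C$. No inner hole can be produced here, so the contradiction must come from elsewhere: such a cycle does not exist in any grid graph.

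One way to prove this uses the rungs. Consecutive portals alternate between the two rows, so $k$ is even and $k\ge 4$. Hence $C$ consists of $k$ horizontal stretches joined by $k$ diagonal edges (rungs), each spanning the strip. The midline of the strip meets $C$ exactly once on each rung, transversally. Order the rungs by these crossing points. Along the midline, the piece between the first and second rung is inside $C$, the piece between the second and third is outside, and the piece between the third and fourth is inside again. A path in the connected interior of $C$ joining the first and third pieces would have to cross the second rung, since that rung separates the open strip. This is a contradiction. Alternatively, on two consecutive rows portal adjacency is the intersection graph of intervals (widen each lower-row portal by half a unit on both sides) with no intersections within a row. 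A bipartite interval graph is chordal and triangle-free, hence a forest.

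\textbf{Choice of the gap node (three or more rows).} Here the parity is not governed by ``$C$ passes through the portal to the right of $g$ exactly once''. Shift the ray slightly upwards. Then a stretch of $C$ on row $c$ to the right of $g$ contributes $1$ if it is a crossing stretch (entered from one side of row $c$ and left to the other), and $0$ or $2$ if it only touches row $c$. So the parity equals the number of crossing stretches to the right of $g$. If the cycle crosses row $c$ four or more times, the gap between an arbitrarily chosen up-crossing and down-crossing portal can lie outside $C$. Instead, take $g$ to be the missing node immediately to the right of the portal containing the leftmost crossing stretch. The number $m$ of crossing stretches is positive and even, since up-crossings equal down-crossings, so exactly $m-1$ of them lie to the right of $g$. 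Hence $g$ is inside $C$.

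With these two repairs, the rest of your argument goes through: the component of $g$ in $G_{V_\Delta\setminus V_\Gamma}$ cannot meet $C$, so it is an inner hole.
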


\begin{lemma}[Adapted from \cite{DBLP:journals/tcs/CoyCSSW24,DBLP:conf/podc/PadalkinS24}]
\label{lem:portal_distances}
    Let $\Gamma = (V_\Gamma, E_\Gamma)$ be a simple triangular grid graph. 
    Then, $d_\Gamma(u,v) = \frac{1}{2}(d_{\Gamma,x}(u,v) + d_{\Gamma,y}(u,v) + d_{\Gamma,z}(u,v))$ holds.
\end{lemma}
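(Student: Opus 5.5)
The plan is to prove the two inequalities $d_\Gamma(u,v) \ge \frac12(d_{\Gamma,x}+d_{\Gamma,y}+d_{\Gamma,z})$ and $d_\Gamma(u,v) \le \frac12(d_{\Gamma,x}+d_{\Gamma,y}+d_{\Gamma,z})$ separately. Both rest on how a single edge acts on the three portal graphs. An edge parallel to the $x$-axis keeps its endpoints in the same $x$-portal. It moves them into adjacent (hence distinct) $y$-portals and into adjacent $z$-portals, since two nodes on a common $x$-edge never share a $y$- or $z$-line. The analogous statement holds for $y$- and $z$-edges. Consequently a $uv$-path $\Pi$ of length $L$ projects to walks $W_x, W_y, W_z$ in $\mathcal P_x, \mathcal P_y, \mathcal P_z$, from the portal of $u$ to the portal of $v$, with $|W_x|+|W_y|+|W_z| = 2L$.

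For the lower bound, take $\Pi$ to be a shortest path. Each projected walk is at least as long as the corresponding portal distance. Hence $2d_\Gamma(u,v) \ge d_{\Gamma,x}(u,v)+d_{\Gamma,y}(u,v)+d_{\Gamma,z}(u,v)$. Simplicity is not needed here.

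For the upper bound, I would use \Cref{lem:portal_tree}. In a tree, a walk between two nodes has length equal to their distance exactly when it visits no node twice. So it suffices to exhibit a shortest $uv$-path whose three projections visit every portal at most once. Among all shortest $uv$-paths, I choose $\Pi$ minimizing $|W_x|+|W_y|+|W_z|$ as a secondary potential. Note that this sum equals $2|\Pi|$ for every path, so the secondary potential must instead be chosen as the number of portal changes counted with multiplicity, or simply the number of revisits. I then argue by exchange. Suppose $\Pi$ visits some $x$-portal $P$, leaves it, and later returns, say at nodes $a$ and $b$ with the subpath $\Pi[a,b]$ leaving $P$ in between. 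Since $P$ is a connected set of collinear nodes, the straight segment $\sigma \subseteq P$ from $a$ to $b$ lies in $\Gamma$. Its length equals the horizontal displacement $\Delta$ between $a$ and $b$, measured in unit steps. Every grid edge changes the horizontal coordinate by at most one unit; diagonal edges change it by only one half. Moreover, $\Pi[a,b]$ leaves the line of $P$ and comes back, which needs at least two diagonal steps. Hence $|\Pi[a,b]| \ge \Delta$, and replacing $\Pi[a,b]$ by $\sigma$ gives a path that is no longer, so it is still shortest. It also strictly reduces the number of portal revisits, because $\sigma$ stays inside a single $x$-portal and crosses $y$- and $z$-portals monotonically.

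The step I expect to be the main obstacle is making this exchange rigorous. First, I must choose a secondary potential that provably drops; the total number of revisits over all three axes seems to work. Second, I must check that the replacement does not create new revisits in the other two axes. The segment $\sigma$ visits each $y$- and $z$-portal once, but those portals might already be visited elsewhere on $\Pi$. To handle this, I would pick the potential lexicographically: first the length, then the total number of node visits to portals in all three trees. I would show this quantity strictly decreases, since the detour $\Pi[a,b]$ must itself contain a revisit in the $y$- or $z$-tree to undo its diagonal moves.

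Once a shortest path with all three projections repetition-free is obtained, each $W_\ast$ is the unique tree path. Then $2d_\Gamma(u,v) = |W_x|+|W_y|+|W_z| = d_{\Gamma,x}+d_{\Gamma,y}+d_{\Gamma,z}$, which completes the proof.
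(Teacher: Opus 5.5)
The paper gives no proof of this lemma; it is imported from Coy et al.\ and Padalkin and Scheideler, so a complete version of your argument would be a self-contained proof. Your skeleton is sound. Every edge keeps the portal of its own axis and moves to an adjacent, distinct portal in each of the other two portal graphs, so $|W_x|+|W_y|+|W_z| = 2|\Pi|$. The lower bound follows for arbitrary grid graphs. For the upper bound, by \Cref{lem:portal_tree} it suffices to have a shortest path none of whose three projections revisits a portal.

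The step you single out as the main obstacle is handled incorrectly, and it is also unnecessary, because you stopped one step short in your own estimate. Suppose $\Pi[a,b]$ uses $k$ edges parallel to $P$ and $m$ other edges. Then $\Delta \le k + m/2$. Leaving $P$ and coming back forces $m \ge 2$, since an edge of $\Pi$ parallel to $P$ never changes the $x$-portal. Hence $|\Pi[a,b]| = k+m \ge \Delta + m/2 \ge \Delta+1 > |\sigma|$, so the replacement is \emph{strictly} shorter. The same holds symmetrically for $y$- and $z$-portals, projecting onto the respective axis.

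So no shortest path in any triangular grid graph ever leaves a portal and returns to it. For an arbitrary shortest $uv$-path, all three walks $W_q$ are repetition-free, hence equal the unique tree paths, and $2d_\Gamma(u,v) = d_{\Gamma,x}(u,v)+d_{\Gamma,y}(u,v)+d_{\Gamma,z}(u,v)$ follows at once. Simplicity enters only through \Cref{lem:portal_tree}.

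The secondary-potential machinery you sketch instead would not work as stated. Read literally, the total number of portal visits in the three walks is $2|\Pi|+3$, which is constant among shortest paths. Your claim that the detour must contain a revisit in the $y$- or $z$-tree is also false. Take a detour of one $y$-edge followed by one $z$-edge that lands back on the line of $P$ one unit from $a$. It changes its $y$-portal once and its $z$-portal once, with no revisit. It is, however, strictly longer than $\sigma$, and that is the fact that actually closes the argument.
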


\section{Convex Decomposition}
\label{sec:pathconvex_region_decomposition}


In this section, we present our convex decomposition for regular triangular grid graphs.
Our construction closely follows the approach of Coy \emph{et al.} \cite{DBLP:journals/tcs/CoyCSSW24} for square grid graphs.
However, since we consider triangular grid graphs, the proofs turn out to be more complicated.
The general idea of the approach is to split the grid graph at certain strategic portals and nodes.
All nodes involved in a split become part of all adjacent resulting regions.

Our construction consists of three phases.
In the first phase, we decompose the grid graph into simple regions.
We call the intersection of a region and a portal at which we split the grid graph a \emph{gate}.
In the second phase, we split the regions further into \emph{tunnel regions}, i.e., regions that intersect at most two gates.
In the third phase, we divide the tunnel regions into convex regions.

Throughout the paper, we frequently split triangular grid graphs along portals.
In the following, we describe how we perform such splitting operations for $y$-portals.
Splitting operations for $x$- and $z$- portals are defined analogously.
An example of a splitting operation is depicted in \Cref{fig:splitting_operations}.

\begin{definition}[Splitting Operations]
    \label{def:splitting_operations}
    Let $\Gamma=(V_\Gamma,E_\Gamma)$ be a triangular grid graph and $P$ be a $y$-portal in $\Gamma$.
    Further, let $v_1, \dots, v_\ell$ be nodes on $P$ that are each adjacent to a distinct specified empty grid point.
    We distinguish two types of splitting operations.
    \begin{enumerate}
        \item \textbf{Splitting $\Gamma$ at $P$:}
        We replace each node $p\in P$ with two copies $p_{\mathit{WNW}}$ and $p_{ESE}$.
        Let $q\in P$ and $r\in P$ be the neighbors of $p$ in $P$ in $\mathit{NNE}$ and $\mathit{SSW}$ direction, if they exist, respectively.
        $p_\mathit{WNW}$ has edges to $q_\mathit{WNW}$, $r_\mathit{WNW}$, and to $p$'s neighbors in the $\mathit{NNW}$ and $W$ directions, if they exist, respectively.
        Analogously, $p_\mathit{ESE}$ has edges to $q_\mathit{ESE}$, $r_\mathit{ESE}$, and to $p$'s neighbors in the $\mathit{SSE}$ and $E$ directions, if they exist, respectively.
        \item \textbf{Splitting $\Gamma$ at $P$ and $v_1,\dots,v_\ell$:}
        We start by splitting $\Gamma$ at $P$ as described in Case (1).
        Afterwards, for $1\leq i\leq \ell$, if the empty gridpoint specified for $v_i$ is in the directions $SSE$ or $E$ from $v_i$, we replace $v_{\mathit{ESE}}$ with two copies $v_{\mathit{ESE}}^{\mathit{NNE}}$ and $v_{\mathit{ESE}}^{\mathit{SSW}}$.
        $v_{\mathit{ESE}}^{\mathit{NNE}}$ has edges to the neighbors of $v_\mathit{ESE}$ in the $\mathit{NNE}$ and $\mathit{E}$ directions, if they exist, respectively.
        Analogously, $v_{\mathit{ESE}}^{\mathit{SSW}}$ has edges to the neighbors of $v_\mathit{ESE}$ in the $\mathit{SSW}$ and $\mathit{SSE}$ directions, if they exist, respectively.
        The case where the empty grid point specified for $v_i$ is in the directions $\mathit{NNW}$ or $W$ from $v_i$ is analogous.
    \end{enumerate}
\end{definition}

\begin{figure}[tb]
    \begin{minipage}[t]{.3\textwidth}
        \centering
        \includegraphics[scale=1.3]{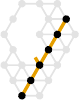}
        \subcaption{A grid graph with a splitting portal and a splitting node.}
    \end{minipage}
    \hfill
    \begin{minipage}[t]{.3\textwidth}
        \centering
        \includegraphics[scale=1.3]{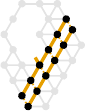}
        \subcaption{The grid graph has been cut along the splitting portal.}
    \end{minipage}
    \hfill
    \begin{minipage}[t]{.3\textwidth}
        \centering
        \includegraphics[scale=1.3]{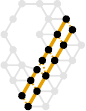}
        \subcaption{The grid graph has been cut along the splitting portal and at the splitting node.}
    \end{minipage}
    \caption{A detailed depiction of the splitting operations (see \Cref{def:splitting_operations}). The nodes connected by a dotted line occupy the same grid point.}
    \label{fig:splitting_operations}
\end{figure}

\subsection{Decomposition into Simple Regions}
\label{sec:simple_regions}

The first phase splits our grid graph into simple regions, i.e., regions without holes.
For each inner hole $H \in \mathcal{H}$ of $\Gamma$ we consider the $\mathit{WNW}$-most node $v_{\mathit{WNW}}(H)$ and the $\mathit{ESE}$-most node $v_{\mathit{ESE}}(H)$ on the boundary of $H$ (breaking ties by picking the $\mathit{NNE}$-most node respectively).
We conduct splits at $\portal_y(v_{\mathit{WNW}}(H))$ and $v_{\mathit{WNW}}(H)$ w.r.t.\ $H$, and $\portal_y(v_{\mathit{ESE}}(H))$ and $v_{\mathit{ESE}}(H)$ w.r.t.\ $H$ (see \Cref{def:splitting_operations} Case (2)).
An example of the procedure is depicted in \Cref{fig:splitting_operations:a,fig:splitting_operations:b}.

\begin{figure}[tb]
    \begin{minipage}[t]{.3\textwidth}
        \centering
        \includegraphics[width=0.9\textwidth]{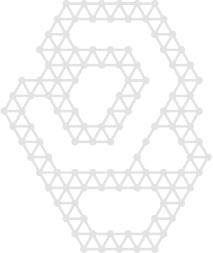}
        \subcaption{A grid graph with four holes.}
        \label{fig:splitting_operations:a}
    \end{minipage}
    \hfill
    \begin{minipage}[t]{.3\textwidth}
        \centering
        \includegraphics[width=0.9\textwidth]{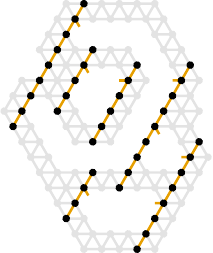}
        \subcaption{The result of the simple region decomposition.}
        \label{fig:splitting_operations:b}
    \end{minipage}
    \hfill
    \begin{minipage}[t]{.3\textwidth}
        \centering
        \includegraphics[width=0.9\textwidth]{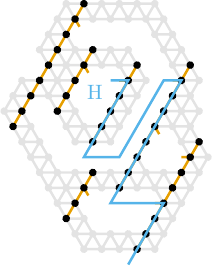}
        \subcaption{The blue line corresponds to $S_\searrow$ for hole $H$.}
        \label{fig:splitting_operations:c}
    \end{minipage}
    \caption{A detailed depiction of the simple region decomposition and one of the sequences $S_\searrow$ used in the proof of \Cref{lem:decomposition:simple}. No path can cross $S_\searrow$.}
    \label{fig:simple_decomposition}
\end{figure}

We prove that the construction only creates $O(|\mathcal{H}|)$ regions and that the created regions are simple triangular grid graphs.

\begin{lemma}
\label{lem:decomposition:simple}
We decompose a triangular grid graph into at most $3|\mathcal{H}|+1$ simple regions. Each resulting region is a triangular grid graph.
\end{lemma}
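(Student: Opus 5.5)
We prove the three parts of \Cref{lem:decomposition:simple} in turn: the count of regions, that each region is again a triangular grid graph, and simplicity.

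\textbf{Counting regions.} Think of each split as cutting the plane along a vertical-ish line, namely the $y$-portal, plus a notch at the specified node. Each $y$-portal is a segment. A split at a portal at which we also split the node adjacent to the hole can increase the number of connected components by at most one per additional "side" that gets disconnected. We therefore track the number of components $c$ together with the total number of inner holes $h$ of all current pieces, and maintain the potential $\Phi = c + h$ (or a suitably weighted version).

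Process the holes one at a time. First, the split at $\portal_y(v_{\mathit{WNW}}(H))$ together with $v_{\mathit{WNW}}(H)$ cuts from the outer boundary of the current piece into $H$. Since $v_{\mathit{WNW}}(H)$ touches $H$ in direction $\mathit{ESE}$ and the portal segment ends at holes or the outer face at both ends, this cut consists of two sub-cuts ($\mathit{NNE}$ and $\mathit{SSW}$ of the notch). Each sub-cut either merges $H$ into an outer boundary, reducing $h$ by one, or separates a piece, increasing $c$ by one. The second split, at $v_{\mathit{ESE}}(H)$, behaves the same way.

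A cleaner bound comes from Euler-type counting. Each portal split is a path cut with two endpoints on hole boundaries. The $2|\mathcal H|$ portals, each broken by its notch into two segments, give at most $4|\mathcal H|$ cut-segments. Each segment either reduces the number of holes by one or increases the number of components by one, so $c_{\text{final}} \le 1 + 4|\mathcal H| - |\mathcal H|$. Since $h$ ends at $0$, this gives $3|\mathcal H|+1$. Portals that are shared by two holes only help.

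\textbf{Regions are grid graphs.} Every copied node sits at a grid point, and every edge kept is a grid edge. Within one resulting component, no two copies occupy the same grid point: copies of $p$ lie on opposite sides of the cut, and the two copies at a notch lie on opposite sides of the empty point. Hence each region embeds injectively as a subgraph of $G_\Delta$. We must also check that it is induced, i.e. that no grid edge between two nodes of the region was dropped. A dropped edge crosses the cut, and we need its endpoints to lie in different components; showing this is part of the simplicity argument.

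\textbf{Simplicity (the main obstacle).} Suppose a resulting region $R$ had an inner hole. Its boundary then contains a cycle $C$ in $R$ enclosing a grid point not in $R$, and that point lies in some original hole $H$, or is an empty point created by the cut. Following \Cref{fig:splitting_operations:c}, from $v_{\mathit{ESE}}(H)$ we build a monotone sequence $S_\searrow$ of the split portals and notches: go along the portal, reach the next hole or the outer face, and continue via that hole's extreme node to its split portal. Since each step moves strictly in the $\mathit{ESE}$ direction, the sequence reaches the outer hole. No path in the split graph crosses $S_\searrow$, yet $C$ must cross any curve from the interior point to infinity, which is a contradiction. The delicate part is constructing $S_\searrow$ correctly: the extreme-node choice with the $\mathit{NNE}$ tie-break must guarantee that continuing from the hole boundary reaches the next split portal without a gap, and that the notches close the corners. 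I would handle this with a case analysis on the directions where the portal hits a hole boundary. The same non-crossing argument also shows that the endpoints of dropped edges lie in different regions.
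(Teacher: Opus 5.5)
\paragraph{Gap: injectivity of the regions.} You justify this step by saying that copies of $p$ ``lie on opposite sides of the cut''. That is a local fact, and it does not put the copies into different components. Suppose $\portal_y(v_{\mathit{WNW}}(H))$ ends at another inner hole $H_1$ on its $\mathit{NNE}$ side and at the outer hole on its $\mathit{SSW}$ side. Then splitting at it and at $v_{\mathit{WNW}}(H)$ disconnects nothing, and $p_{\mathit{WNW}}$, $p_{\mathit{ESE}}$ stay connected around $H_1$ until the cuts belonging to $H_1$ are made. So separating copies is just as global as the inducedness you defer, and this is exactly where the paper invokes the non-crossing sequences. Concretely, join the portal, through the holes at its two ends, to uncrossable sequences from those holes (if inner) to the outer hole. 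The result is a separating curve, closed or running to infinity at both ends, that no path of the split graph crosses and that has the two copies locally on opposite sides. You already plan to use this argument for dropped edges; it has to be used for copies too.

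\paragraph{Two smaller points in the simplicity part.} First, cuts create no empty grid points. Your second case is therefore really a cycle $C \subseteq R$ that encloses only occupied points, i.e.\ nodes of other regions, and $S_\searrow$ does not address it directly. You need an extra remark: a portal or notch meeting the interior of $C$ cannot cross $C$, so its continuation ends at an empty point enclosed by $C$, which reduces to the original-hole case. If no cut meets the interior of $C$, the enclosed nodes are joined to $C$ inside $R$. Second, the construction of $S_\searrow$ needs neither a delicate case analysis nor the tie-break:
\begin{itemize}
\item successive $v_{\mathit{ESE}}$ nodes strictly increase in the $\mathit{ESE}$ coordinate, so the sequence ends at the outer hole;
\item the curve passes through each hole itself, so there is no gap;
\item crossings are blocked by the portal split, by the notch where the curve leaves a hole, and by the empty point beyond the portal's $\mathit{SSW}$ endpoint.
\end{itemize}

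\paragraph{Counting: a different and more careful route.} The paper credits at most $+1$ per portal split and $+1$ for the two node splits of a hole together. That per-hole claim holds only in aggregate: if another hole's portal has already merged $H$ into an outer boundary, both notches of $H$ can separate. Your Euler-type bookkeeping avoids this. There are at most $4|\mathcal H|$ proper cut arcs, each either merging two boundary components or separating a piece, and exactly $|\mathcal H|$ of them merge because $h$ must end at $0$. This gives $3|\mathcal H|+1$ cleanly, under two conditions:
\begin{itemize}
\item you fix a model in which portal cuts and notches really are proper arcs, e.g.\ hexagonal Voronoi cells, where a portal cut runs through cell interiors and touches the boundary only at its two ends;
\item you prove simplicity first, independently of the count.
\end{itemize}
Likewise, your use of a single uncrossable curve from $H$ to the outer hole already excludes every cycle around $H$, so you need only one sequence per hole where the paper uses four.
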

\begin{proof}
    Each split along a portal increases the number of regions by at most one. Splitting at the $\mathit{WNW}$-most and the $\mathit{ESE}$-most node further increases the number of regions by at most one.
    Thus, we increase the number of regions by at most three for each hole (two from portals and one from the splitting nodes).
    Consequently, the total number of regions will not exceed $3|\mathcal{H}|+1$.
    
    To prove that the regions resulting from the procedure are simple, we argue that the splitting actions induce four sequences $S_\searrow$, $S_\nearrow$, $S_\nwarrow$, and $S_\swarrow$ for each hole $H\in \mathcal{H}$.
    Each sequence starts with $H$, followed alternatingly by splitting portals and holes until it ends with the outer hole. 
    We describe the construction of $S_\searrow$ in detail and provide a depiction of $S_\searrow$ in \Cref{fig:splitting_operations:c}.
    The other sequences are defined analogously.
    We start by adding $H$ to $S_\searrow$.
    After adding an inner hole $H'$ to $S_\searrow$, we add the splitting portal $\portal_y(v_{\mathit{ESE}}(H'))$ to $S_\searrow$.
    After adding a splitting portal $P$ to $S_\searrow$, we add the hole that is in the $\mathit{SSW}$ direction of $P$'s $\mathit{SSW}$-endpoint to $S_\searrow$.
    By construction, we never add a hole twice.
    Thus, we eventually add the outer hole to $S_\searrow$ and the construction terminates.
    By construction, no region boundary can pass any of the sequences $S_\searrow$, $S_\nearrow$, $S_\nwarrow$, or $S_\swarrow$.
    Thus, no region can enclose $H$.
    As we can perform this argument for every hole, no region can enclose any hole, i.e. the regions must be simple.
    
    It remains to prove that the resulting regions are triangular grid graphs.
    This can only be false if two copies of the same node end up in the same region.
    As no path inside a region can cross any of the above sequences, however, each node copy must end up in a different region.
\hfill $\square$ 
\end{proof}

\begin{lemma}
\label{lem:number_of_gates}
    The construction creates at most $6|\mathcal{H}|$ gates.
\end{lemma}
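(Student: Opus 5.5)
The plan is to count gates portal by portal. By definition, a gate is the intersection of a resulting region with a splitting portal. The construction uses at most $2|\mathcal{H}|$ splitting portals, namely $\portal_y(v_{\mathit{WNW}}(H))$ and $\portal_y(v_{\mathit{ESE}}(H))$ for each inner hole $H\in\mathcal{H}$. It therefore suffices to show that each splitting portal, together with its splitting node, contributes at most three gates. Summing then gives at most $3\cdot 2|\mathcal{H}| = 6|\mathcal{H}|$ gates.

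Fix a splitting portal $P = \portal_y(v)$ with $v = v_{\mathit{ESE}}(H)$; the case $v = v_{\mathit{WNW}}(H)$ is symmetric.
\begin{itemize}
\item After Case (1) of \Cref{def:splitting_operations}, $P$ is replaced by two copies, $P_{\mathit{WNW}}$ and $P_{\mathit{ESE}}$.
\item By the argument in the proof of \Cref{lem:decomposition:simple}, two copies of the same node never lie in the same region.
\item A $y$-portal is a path, so each copy is itself a path and, as a whole, lies in one region. Hence each copy meets exactly one region.
\item Case (2) then splits the copy lying on the side of $H$ (here $P_{\mathit{ESE}}$, since the empty grid point adjacent to $v$ belongs to $H$ in direction $\mathit{SSE}$ or $E$) at the node $v$. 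This yields two subpaths, the $\mathit{NNE}$ part ending in $v^{\mathit{NNE}}_{\mathit{ESE}}$ and the $\mathit{SSW}$ part ending in $v^{\mathit{SSW}}_{\mathit{ESE}}$.
\item Each subpath is again connected, so it meets only one region.
\end{itemize}
Consequently, the copies of $P$ meet at most three regions, so $P$ gives rise to at most three gates.

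One subtlety must be handled: a portal $P$ may be chosen for several holes. For example, $P$ may be $\portal_y(v_{\mathit{ESE}}(H))$ and also $\portal_y(v_{\mathit{WNW}}(H'))$, and then it carries several splitting nodes. In that case I would count per (portal, hole) pair rather than per distinct portal. Each additional splitting node on a portal copy cuts one path into two, so it raises the number of pieces, and hence gates, by at most one. So a portal used $k$ times has at most $2 + k \le 3k$ gates. Summing over the at most $2|\mathcal{H}|$ pairs still gives $6|\mathcal{H}|$.

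The main obstacle is making rigorous the claim that each connected piece of a split portal lies in a single region, i.e.\ that distinct pieces are not merged into one gate or miscounted. This follows because a gate is exactly a region's intersection with the portal copies, and each piece is connected within the split graph. The remaining step is the careful bookkeeping for portals shared by multiple holes.
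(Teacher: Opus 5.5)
Your proof is correct and matches the paper's argument: at most three gates per splitting portal (one from the copy away from $H$, two from the copy on $H$'s side cut at the splitting node), summed over the at most $2|\mathcal{H}|$ splitting portals. Your per-(portal, hole) count for shared portals makes precise the paper's remark that overlapping gates can only decrease the total. One harmless slip: for $v = v_{\mathit{ESE}}(H)$ the hole lies to the $\mathit{WNW}$, so the empty grid point is in direction $\mathit{NNW}$ or $W$ and $P_{\mathit{WNW}}$ is the copy that gets split; the count is unaffected.
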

\begin{proof}
    For each hole $H\in\mathcal{H}$ we split at $\portal_y(v_{\mathit{WNW}}(H))$ and $v_\mathit{WNW}(H)$ as well as $\portal_y(v_{\mathit{ESE}}(H))$ and $v_\mathit{ESE}(H)$. For $\portal_y(v_{\mathit{WNW}}(H))$, this creates at most three gates --- one in the $\mathit{WNW}$ direction of $\portal_y(v_{\mathit{WNW}}(H))$, one in the $\mathit{ESE}$ direction of $\portal_y(v_{\mathit{WNW}}(H))$ and in the $\mathit{NNE}$ direction of $v_{\mathit{WNW}}(H)$,  one in the $\mathit{ESE}$ direction of of $\portal_y(v_{\mathit{WNW}}(H))$ and in the $\mathit{SSW}$ direction of $v_{\mathit{WNW}}(H)$. For $\portal_y(v_{\mathit{ESE}}(H))$, we can argue analogously, that at most three gates are created. Note that any overlapping gates can only decrease the total number of gates.
\hfill $\square$ 
\end{proof}

We conclude the section by stating an observation that we frequently use implicitly throughout the paper.

\begin{observation}
    Let $R$ be a simple region and let $P$ be a portal through $R$. 
    If we split $R$ at $P$ (and possibly some nodes on $P$), then the resulting regions remain simple and no region contains multiple copies of the same node.
\end{observation}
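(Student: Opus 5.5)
We prove the two claims of the observation separately, reducing both to the tree structure of portal graphs in simple regions (\Cref{lem:portal_tree}). Throughout, let $R$ be simple and $P$ a $y$-portal of $R$; the $x$- and $z$-cases are symmetric. Since $R$ is simple, $\mathcal P_y(R)$ is a tree, and $P$ is one of its vertices. Splitting at $P$ gives each node $p \in P$ a $\mathit{WNW}$ copy and an $\mathit{ESE}$ copy. Every edge of $R$ not contained in $P$ is kept by exactly one side, so each resulting region is the union of one copy of $P$ with the portals lying in one or more branches of the tree $\mathcal P_y(R)$ at $P$. These branches are the subtrees obtained by deleting $P$, together with the single side of $P$ they attach to. Additionally splitting at nodes $v_i$ subdivides a copy of $P$ into consecutive sub-segments, and each sub-segment keeps only the edges on its own side.

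\textbf{No duplicate copies.} Suppose a resulting region contains both $p_{\mathit{WNW}}$ and $p_{\mathit{ESE}}$. Then it contains a path between them that avoids the split edges. Project this path into the original $R$: the $\mathit{WNW}$ end leaves $P$ on the west side, the $\mathit{ESE}$ end leaves it on the east side, and the interior never uses a crossing edge of $P$. In $\mathcal P_y(R)$ this yields a walk from a neighbour of $P$ on one side to a neighbour on the other side that does not pass through $P$. Together with $P$ this gives a cycle in a tree, a contradiction. For the node splits, the two sub-copies $v^{\mathit{NNE}}$ and $v^{\mathit{SSW}}$ lie on the same side. A path joining them inside that side, together with the empty grid point adjacent to $v$, would enclose that empty point. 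This point belongs to a hole of $R$, and because $R$ is simple it can only be the outer hole, so it cannot be enclosed. Hence the two sub-copies are separated.

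\textbf{Simplicity.} Each resulting region $R'$ is, as a point set, a subset of $R$ whose nodes carry no duplicates, so $R'$ is a subgraph of the grid. Suppose $R'$ had an inner hole $H'$. Then $H'$ is surrounded by a cycle $C$ in $R'$. Because $R$ is simple, the region enclosed by $C$ contains only nodes of $R$, so $H'$ must contain a node $w \in R \setminus R'$. Take the horizontal ($x$-parallel) line through $w$. It meets $C$ on both sides of $w$, and the part of $R$ between those two crossing points is a contiguous row of nodes, because there are no holes of $R$ inside $C$. That row is a single $x$-portal, which is not a $y$-portal. So the only way $w$ can fail to lie in $R'$ is if $R'$ is cut off from $w$ by $P$. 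But $P$ is a straight segment with both endpoints on the outer hole of $R$, so it cannot lie inside the bounded region enclosed by $C$ while also separating $w$ from $C$.

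The cleanest way to make this rigorous is through \Cref{lem:portal_tree}: check that all three portal graphs of $R'$ remain trees. For $\mathcal P_y(R')$ this is immediate, since it is a subtree of $\mathcal P_y(R)$. For the $x$- and $z$-portals, each such portal of $R'$ is a sub-segment of a portal of $R$ cut at its single intersection with $P$. A cycle among them would lift to a cycle in $\mathcal P_x(R)$ or in $\mathcal P_z(R)$, again a contradiction.

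\textbf{Main obstacle.} The hardest step is the converse of \Cref{lem:portal_tree}, namely that trees as portal graphs imply simplicity. Its justification is that a triangular grid graph with an inner hole has a cycle around that hole, and this cycle produces a cycle in, for example, $\mathcal P_x$: the $x$-portals directly above and below the hole are distinct, yet they are connected around both the left and the right side of the hole. This converse has to be argued here rather than cited, since \Cref{lem:portal_tree} as stated only gives the other direction.
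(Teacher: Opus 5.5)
The paper gives no proof of this Observation. It relies on the same intuition as the proof of \Cref{lem:decomposition:simple}: a cut along a portal whose two ends abut the outer hole cannot be crossed by any path. Your separation argument for the plain portal split is correct and makes that intuition rigorous. A path in one piece from $p_{\mathit{WNW}}$ to $p_{\mathit{ESE}}$ contains a subpath inside $R\setminus P$ joining a western neighbour of $P$ to an eastern one, and together with $P$ this closes a cycle in the tree $\mathcal P_y(R)$. You should pick that subpath explicitly, since the projected path as a whole may run along $P$. The same argument shows that no piece contains both a western and an eastern vertex, which is what you need below. The simplicity part, however, is not proved, and the node-split case has an unjustified step.

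\textbf{Node splits.} You assert that your path together with $v$ encloses the empty point. A priori the cycle could instead wind around the western sector at $v$, and this has to be ruled out. The tree argument handles it without Jordan curves. A path from $v^{\mathit{NNE}}_{\mathit{ESE}}$ to $v^{\mathit{SSW}}_{\mathit{ESE}}$ contains a subpath in $R\setminus P$ that starts at an eastern neighbour of a node at or north of $v$ and ends at an eastern neighbour of a node strictly south of $v$, using only the edges the two copies of $v$ keep. By the tree property both ends lie in one $y$-portal $Q$ on the line east of $P$. Since $Q$ is contiguous, it then contains both the $E$- and the $\mathit{SSE}$-neighbour of $v$. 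One of these is the specified empty point, a contradiction.

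\textbf{Simplicity.} Your rigorous route rests on the converse of \Cref{lem:portal_tree}, which the paper does not provide and your sketch does not establish. Two walks in $\mathcal P_x$ from the portal above a hole to the portal below it, one passing left and one passing right, are no contradiction in a tree. Both may simply contain the unique tree path, and showing that they cannot is exactly the topological fact you set out to avoid. The converse is true (for instance, Euler's formula shows that the cyclomatic number of each portal graph equals the number of inner holes), but it needs its own proof.

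You do not need it, because your first, row-based argument closes directly:
\begin{enumerate}
\item If $H'$ is an inner hole of $R'$, then $H'\subseteq R\setminus R'$. Otherwise $H'$ would contain the connected, unbounded set $V_\Delta\setminus R$, which is disjoint from $R'$.
\item Take $w\in H'$ and move west and east along its $x$-row to the first nodes $c_L,c_R\in R'$. All nodes strictly between them lie in $H'$.
\item Splitting only modifies edges incident to nodes of $P$, and the row meets the line of $P$ in at most one node $p$.
\item If the edge from $c_L$ to its eastern neighbour avoided $p$, that neighbour, which lies in $H'$, would be in $R'$. Hence $p$ is $c_L$ or its eastern neighbour. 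Symmetrically, $p$ is $c_R$ or its western neighbour.
\item Since $p$ is unique, this forces $w=p$, with $c_L$ and $c_R$ the western and eastern neighbours of $p$.
\item These two nodes lie on opposite sides of $P$, so by your separation argument they cannot both belong to $R'$. This is a contradiction, so $R'$ is simple.
\end{enumerate}
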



\subsection{Decomposition into Tunnel Regions}
\label{sec:tunnel_regions}

Next, we decompose each simple region into tunnel regions, i.e., regions that intersect at most two gates.
We can adopt the construction used by Coy \emph{et al.} \cite{DBLP:journals/tcs/CoyCSSW24} for square grids without any changes
since it only relies on portal graphs.
For the sake of completeness, we still describe the construction in the following.
However, we deviate from Coy \emph{et al.} and use the description by Padalkin and Scheideler \cite{DBLP:conf/podc/PadalkinS24} since we will use the techniques of the latter to compute the tunnel regions in \Cref{sec:amoebot:decomposition}.
See \Cref{fig:tunnel_decomposition} for an example.

Consider an arbitrary region and its portal graph $\mathcal P_y$.
Since the region is simple, $\mathcal P_y$ is a tree.
First, we iteratively prune leaves that are not gates until each leaf is a gate.
Let $\mathcal P'_y$ be the resulting portal tree.
Then, we split the region at each non-gate portal in $\mathcal P'_y$ of degree at least $3$.

Consider one of the resulting regions and its portal tree $\mathcal P''_y$ without the previously pruned portals.
By construction, each leaf and each portal of degree at least $3$ in $\mathcal P''_y$ is a gate. 
We split each gate $G$ of degree at least $2$ as follows.
Observe that by definition of gates, all adjacent portals of $G$ are either on the west or east side of $G$.
This allows us to order the adjacent portals $P_1, \dots, P_\ell$ from north to south.
Let $g_i$ denote the northernmost node of $G$ adjacent to a node of $P_i$.
We split the region at nodes $g_2, \dots, g_\ell$.
This splits $G$ into $\ell$ gates of degree $1$ in the portal tree.
Since now each gate has a degree of $1$, we obtain the following lemma.

\begin{figure}[tbp]
    \centering
    \includegraphics[scale=1.2]{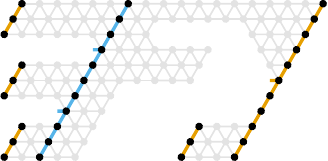}    
    \caption{
        Example of splitting a simple region into tunnel regions. Initially, the region intersects the $5$ orange gates. After performing a split along the blue gate and a node split at the east-most orange gate, the region is decomposed into $5$ tunnel regions.
    }
    \label{fig:tunnel_decomposition}
\end{figure}

\begin{lemma}[Adapted from \cite{DBLP:conf/podc/PadalkinS24}]
\label{lem:tunnel_region}
    We decompose a simple region that intersects $k$ gates into $\Theta(k)$ tunnel regions.
\end{lemma}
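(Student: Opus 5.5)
The plan is to argue entirely on the $y$-portal tree of the simple region, using \Cref{lem:portal_tree}. We prove the upper bound $O(k)$ and the lower bound $\Omega(k)$ separately.

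\textbf{Pruning and first split.} By \Cref{lem:portal_tree}, $\mathcal P_y$ is a tree. Pruning non-gate leaves leaves a tree $\mathcal P'_y$ with at most $k$ leaves, since every leaf is now a gate. The pruned portals form subtrees hanging off $\mathcal P'_y$. Each such subtree stays attached to whichever region receives its attachment portal, so pruning creates no new regions.

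\textbf{Counting regions.} In a tree with at most $k$ leaves, the number of nodes of degree at least $3$ is at most $k-2$. Hence we split at $O(k)$ non-gate portals. Splitting a tree at a node of degree $d$ yields $d$ pieces, and in a tree $\sum_{\deg\ge 3}(\deg-2)\le k-2$. So the first splitting step produces at most $1+\sum(\deg-1) = O(k)$ regions.

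In each resulting region, every leaf and every branching portal of $\mathcal P''_y$ is a gate. Splitting a gate of degree $\ell$ at $g_2,\dots,g_\ell$ creates $\ell-1$ additional regions. Summing over gates, the degrees of gate nodes in these trees total $O(\#\text{edges incident to gates})$. Since the region trees together have $O(k)$ leaves and $O(k)$ branching nodes, we have $\sum(\ell-1)=O(k)$. The final regions are therefore $O(k)$ in number.

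\textbf{Tunnel property.} Each final region's portal tree has all gate-leaves of degree $1$ and no internal branching node, because branching non-gates and branching gates have both been split. It is therefore a path, possibly with pruned non-gate subtrees attached. The gates lie only at its ends, so the region intersects at most two gates.

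\textbf{Lower bound.} Each gate is contained in some final region, and each region contains at most two gates. Hence there are at least $k/2$ regions.

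\textbf{Main obstacle.} The delicate step is making the degree accounting rigorous, namely that $\sum(\ell-1)$ over gates, together with the gate copies created by node splits, stays $O(k)$. I would handle this by charging every tree edge incident to a split node to a leaf or branching node of $\mathcal P'_y$. I would also check that node splits at $g_i$ genuinely separate the $P_i$, which follows from the north-to-south ordering and planarity.
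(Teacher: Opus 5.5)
Your proposal is correct and follows the paper's route. The paper only observes that after pruning, splitting the branching non-gate portals, and node-splitting every gate of degree at least $2$, each gate has degree $1$, so every region's pruned $y$-portal tree is a path with gates only at its ends; it defers the $\Theta(k)$ count to Padalkin and Scheideler, and your leaf-counting upper bound and $k/2$ lower bound supply that count. Two small fixes. First, a correction: splitting at a $y$-portal produces only its $\mathit{WNW}$ and $\mathit{ESE}$ copies, so at most two pieces, not $d$. Those copies can still have degree $\geq 2$; your node-splitting step already handles them, and your count merely overestimates. Second, a simplification that closes your ``main obstacle'': after the first step every non-gate in a pruned region tree has degree exactly $2$, so the handshake identity $\sum_v(\deg v-2)=-2$ bounds the gate pieces per tree by $2\cdot\#\text{gates}$, with at most $3k$ gates in total.
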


Since by \Cref{lem:number_of_gates}, there are $\Theta(|\mathcal H|)$ gates overall, we obtain the following result.

\begin{corollary}
\label{cor:tunnel_regions}
    We decompose the grid graph into $\Theta(|\mathcal H|)$ tunnel regions.
\end{corollary}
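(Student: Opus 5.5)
The plan is to sum \Cref{lem:tunnel_region} over all simple regions produced in the first phase, using \Cref{lem:decomposition:simple} and \Cref{lem:number_of_gates} to control the total. Let $R_1,\dots,R_m$ be these regions; by \Cref{lem:decomposition:simple} we have $m \le 3|\mathcal H|+1$. Let $k_i$ be the number of gates that $R_i$ intersects.

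\textbf{Step 1: bound $\sum_i k_i$.} Every gate is, by definition, the intersection of a single region with a splitting portal. So each gate is counted for exactly one region, and \Cref{lem:number_of_gates} gives $\sum_i k_i \le 6|\mathcal H|$.

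\textbf{Step 2: upper bound.} \Cref{lem:tunnel_region} decomposes $R_i$ into $\Theta(k_i)$ tunnel regions. A region with $k_i = 0$ still contributes one region, so I would read the lemma as giving at most $c\cdot\max\{1,k_i\}$ tunnel regions for some constant $c$. Summing gives at most
\[
c\,(m + 6|\mathcal H|) = O(|\mathcal H|)
\]
tunnel regions. The additive term $m$ matters when $|\mathcal H| = 0$. In that case there is exactly one region with no gates, it is already a tunnel region, and the count is $1$. I would interpret $\Theta(|\mathcal H|)$ as $\Theta(|\mathcal H|+1)$ to cover this case.

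\textbf{Step 3: lower bound.} It suffices to show that the number of gates is $\Omega(|\mathcal H|)$ and that each tunnel region intersects at most two gates.
\begin{itemize}
\item For each hole $H$, the split at $\portal_y(v_{\mathit{WNW}}(H))$ produces at least one gate.
\item Distinct holes have distinct extremal nodes, and a single portal $P$ can serve as the splitting portal for only a few holes. The holes for which $P$ is the $\mathit{WNW}$-portal lie on its $\mathit{ESE}$ side, and those for which it is the $\mathit{ESE}$-portal lie on its $\mathit{WNW}$ side. Each such hole touches $P$ at its own splitting node, so it produces a separate gate piece after the node splits.
\item Together these give at least $|\mathcal H|$ distinct gates.
\item Every gate is contained in some tunnel region, and each tunnel region contains at most two gates. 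Gates split in the second phase only increase this count. Hence there are at least $|\mathcal H|/2$ tunnel regions.
\end{itemize}

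\textbf{Main obstacle.} I expect the hardest part to be the lower-bound counting: showing that gates belonging to different holes stay distinct, especially when several holes share a splitting portal. If that argument gets messy, I would fall back on the $\Theta(k)$ lower bound already built into \Cref{lem:tunnel_region}, applied to a region adjacent to each hole's splitting node. The upper bound is routine.
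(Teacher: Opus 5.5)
Your proposal is correct and follows the paper's argument, which simply sums Lemma~\ref{lem:tunnel_region} over the simple regions using the gate count of Lemma~\ref{lem:number_of_gates}. Your Step~3 and your $|\mathcal H|=0$ caveat add something the paper leaves implicit: the paper asserts $\Theta(|\mathcal H|)$ gates, but Lemma~\ref{lem:number_of_gates} only proves the upper bound $6|\mathcal H|$, and your argument (each hole gets its own gate piece from its own split node $v_{\mathit{WNW}}(H)$, and each tunnel region holds at most two gates) supplies the lower bound.

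One correction: delete the aside that a portal can be the splitting portal for ``only a few holes''. It is false, since a long $y$-portal can be the $\mathit{WNW}$-portal of arbitrarily many holes lined up along its $\mathit{ESE}$ side. Your count never uses it, because each such hole still contributes its own node split and hence its own gate piece.
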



\subsection{Decomposition into Convex Regions}
\label{sec:pathconvex_regions}

In the final phase, we decompose each tunnel region into convex regions.
Since the tunnel regions can take on complex shapes, there can still be pairs of nodes in the same tunnel region, such that the shortest path in the entire grid graph $\Gamma$ between them \textit{leaves} the tunnel and later \textit{reenters} it from the other gate.
The goal of this phase will be to separate those pairs of nodes by additional splits along portals.
In particular, we will see that for each tunnel only a constant number of additional splits will be needed to make the tunnels convex.

Consider a tunnel region $T$ with gates $G$ and $G'$.
Remember that $G$ and $G'$ are portals in the $y$-direction.
We first split $T$ by portals in $x$- and $z$-direction that intersect $G$ or $G'$.
For both directions $q \in \{x,z\}$, we distinguish between the following two cases (see \Cref{fig:convex_decomposition:a,fig:convex_decomposition:b,fig:convex_decomposition:d} for examples):

\begin{enumerate}
    \item If there is a $q$-portal that intersects both $G$ and $G'$ within $T$, let $P_{\uparrow,q}$ and $P_{\downarrow,q}$ be the northernmost and southernmost of these $q$-portals, respectively.
    Then, we split $T$ at $P_{\uparrow,q}$ and $P_{\downarrow,q}$.
    Furthermore, in case $P_{\uparrow,q}$ touches a boundary node $b_{\uparrow,q} \not\in G \cup G'$ to the north of $P_{\uparrow,q}$, we split the region to the north of $P_{\uparrow,q}$ at the closest such node to $G$. 
    We do this symmetrically for the region below $P_{\downarrow,q}$.
    \item Otherwise, let \gate{G}{q} be the unique $q$-portal intersecting $G$ that minimizes $d_{T,q}(\gate{G}{q},G')$.
    Analogously, let \gate{G'}{q} be the unique $q$-portal intersecting $G'$ that minimizes $d_{T,q}(\gate{G'}{q},G)$.
    Then, we split $T$ at $\gate{G}{q}$ and $\gate{G'}{q}$.
    Note that after splitting $T$ at (only) $\gate{G}{q}$ and $\gate{G'}{q}$ there is a region $R_q$ enclosed by $\gate{G}{q}$ and $\gate{G'}{q}$.
    In case $\gate{G}{q}$ touches a boundary node $\boundarynode{G}{q} \not\in G$ from the side of $R_q$, we split $R_q$ at the closest such node to $G$.
    Similarly, in case $\gate{G'}{q}$ touches a boundary node $\boundarynode{G'}{q} \not\in G'$ from the side of $R_q$, we split $R_q$ at the closest such node to $G'$.
\end{enumerate}

\begin{figure}[tb]
    \centering
    \begin{minipage}[t]{.45\textwidth}
        \centering
        \setcounter{subfigure}{0}
        \includegraphics[width=0.85\textwidth]{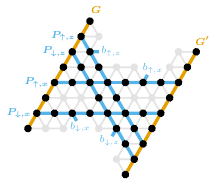}
        \subcaption{Case 1 for both $q=x$ and $q=z$}
        \label{fig:convex_decomposition:a}
    
        \vspace{0.5cm}

        \setcounter{subfigure}{2}
        \includegraphics[width=0.85\textwidth]{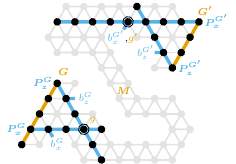}
        \subcaption{Case 2 for both $q=x$ and $q=z$}
        \label{fig:convex_decomposition:d}
    \end{minipage}
    \hspace{0.03\textwidth}
    \begin{minipage}[t]{.45\textwidth}
        \centering
        \setcounter{subfigure}{1}
        \includegraphics[width=0.85\textwidth]{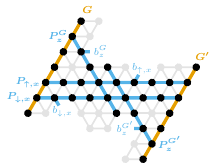}
        \subcaption{Case 1 for $q=x$ and case 2 for $q=z$}
        \label{fig:convex_decomposition:b}
    
        \vspace{0.35cm}
    
        \setcounter{subfigure}{3}
        \includegraphics[width=\textwidth]{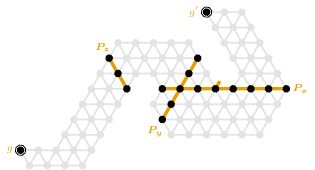}
        \subcaption{Splitting a region $M$ with gates consisting of single nodes.}
        \label{fig:convex_decomposition:point_shaped}
    \end{minipage}
    \caption{Examples of splitting tunnel regions into convex regions. Black circles around nodes indicate the nodes $g$ and $g'$ which are interpreted as single node gates for splitting according to Definition \ref{def:point_shape_decomposition}.}
    \label{fig:convex_decomposition}
\end{figure}

Note that if for both values of $q$ the second case applies (see \Cref{fig:convex_decomposition:d}), there is a unique region $M$ that intersects at least one gate on $G$'s side ($\gate{G}{x}$ or $\gate{G}{z}$) and at least one gate on $G'$'s side ($\gate{G'}{x}$ or $\gate{G'}{z}$).
We now argue that all resulting regions, except $M$, are convex.
To this end, we make the following two observations.

\begin{observation}
\label{obs:simple_case}
    Let $R$ be a region of some grid graph $\Gamma$ that intersects (a) exactly one gate or (b) intersects exactly two gates that meet in an obtuse angle (i.e. the angle between the gates within $R$ is $120^\circ$).
    Then $R$ is convex, as any path leaving and reentering $R$ can be made shorter by traversing the gate(s) instead.
\end{observation}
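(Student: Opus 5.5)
Let $u,v\in R$ and let $\Pi$ be a shortest $uv$-path in $\Gamma$. Assume $\Pi$ leaves $R$. We want to contradict the minimality of $\Pi$. Because $R$ arises from the splitting operations, its only "exits" into the rest of $\Gamma$ are its gates. These are the portal segments (and split nodes) along which $R$ was cut. Every other boundary node of $R$ borders a hole, that is, an empty grid point. So any maximal subpath of $\Pi$ outside $R$ starts at a node $a$ on a gate and ends at a node $b$ on a gate, with $a,b\in R$ in both cases. It is enough to show that each such excursion can be replaced by a path inside $R$ of length at most $|\Pi[a,b]|$. We then need strict improvement, or else an argument that some shortest path stays inside and all of them do. The convexity definition asks that \emph{every} shortest path stays in $R$, so I would aim to show the replacement is strictly shorter. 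Failing that, I would show the excursion cannot even tie.

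\textbf{Case (a): one gate $G$.} Both $a$ and $b$ lie on the single portal (straight line) $G$. The segment of $G$ between $a$ and $b$ is a straight grid line, so its length equals the grid distance $d_{G_\Delta}(a,b)$ in the full infinite grid. This is a lower bound for any path in $\Gamma$. Hence the excursion is at least as long as the gate segment. Equality would force the excursion to be a shortest path in the infinite triangular grid between two points on a common axis line. Such a path is unique and is exactly the straight segment, which lies in $G\subseteq R$. So the excursion did not really leave $R$, a contradiction. Thus every shortest path stays in $R$.

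\textbf{Case (b): two gates $G_1,G_2$ meeting at $120^\circ$ inside $R$.} If $a,b$ are on the same gate, argue as in (a). Otherwise $a\in G_1$, $b\in G_2$, and the two lines meet at a common corner $c$ with an interior angle of $120^\circ$. The path $a\to c\to b$ along the gates lies in $R$ and has length $|ac|+|cb|$. In the triangular grid, a path consisting of two straight segments with a $120^\circ$ turn is a geodesic of the infinite grid $G_\Delta$. Indeed, it uses only two adjacent unit directions (for example $E$ and $\mathit{NNE}$ in lattice terms), and every path using only two adjacent directions is shortest. So its length equals $d_{G_\Delta}(a,b)\le|\Pi[a,b]|$.

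The main obstacle is the equality case. Shortest paths between $a$ and $b$ in $G_\Delta$ are not unique here: they fill a parallelogram spanned by the two directions. I must therefore show that any other monotone path from $a$ to $b$ stays inside $R$. The sides of this parallelogram are $G_1,G_2$ and their translates. An excursion stays outside $R$ between $a$ and $b$, so it lies on the far side of the gates. But the parallelogram lies entirely on the $R$-side of the corner $a\,c\,b$, because the angle is $120^\circ$ measured within $R$. Every geodesic of $G_\Delta$ from $a$ to $b$ lies in the parallelogram with corners $a,c,b,a+b-c$, and these points are on $R$'s side of both gate lines. So a path on the other side must be strictly longer, contradicting minimality.

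Making "the other side" precise uses the portal structure. I would use the portal distances in $G_\Delta$ (the analogue of \Cref{lem:portal_distances} for the infinite grid, which is simple). Leaving across $G_1$ to the outside forces the path to increase its distance in the axis perpendicular to $G_1$ and later undo it, which costs at least 2 extra steps.
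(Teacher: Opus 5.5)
Your strategy is the paper's own one-line justification: replace each excursion by the walk along the gate(s). Case (a) is sound. So is the inequality $|ac|+|cb| = d_{G_\Delta}(a,b)\le|\Pi[a,b]|$ in case (b). The gap is in the tie case of (b), which you rightly flag as the crux.

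You argue that an excursion, being outside $R$, ``lies on the far side of the gates'', while the $G_\Delta$-geodesics lie on $R$'s side. That inference is not valid, because a node of $\Gamma$ on $R$'s side of both gate lines need not belong to $R$. A node split (\Cref{def:splitting_operations}, Case~(2)) creates two regions on the \emph{same} side of the portal that share the split node. For example, after the split at $b_{\downarrow,x}$ in Case~1 of the convex phase, the part attached to $G'$ contains nodes next to $b_{\downarrow,x}$. These lie inside the $120^\circ$ wedge of the region bounded by $G$ and $P_{\downarrow,x}$. So an excursion can leave $R$ at a gate endpoint without crossing either gate line. Knowing that the parallelogram is on $R$'s side therefore does not rule out an excursion that is itself a $G_\Delta$-geodesic. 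Your last paragraph (``leaving across $G_1$ \dots{} costs at least 2 extra steps'') rests on the same presupposition.

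What closes the gap is a local induction showing that every $G_\Delta$-geodesic from $a$ to $b$ that is a path in $\Gamma$ stays in $R$. Let $d_1,d_2$ be the unit directions from $c$ along $G_1$ and $G_2$ into $R$, so $a=c+\alpha d_1$ and $b=c+\beta d_2$. Such a geodesic uses only the steps $-d_1$ and $+d_2$. It stays in $\{c+sd_1+td_2 : 0\le s\le\alpha,\ 0\le t\le\beta\}$, whose only gate nodes are the sides $[c,a]\subseteq G_1$ and $[c,b]\subseteq G_2$. Now check each kind of node:
\begin{itemize}
\item A non-gate node of $R$ has all its $\Gamma$-neighbors in $R$.
\item From a node $w$ of $[c,a]$, the step $-d_1$ stays on the contiguous gate. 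The step $+d_2$ goes to the $R$-side neighbor of $w$ on $c$'s side. $R$'s copy of $w$ keeps that neighbor even when $w$ is a split endpoint of $G_1$, since by \Cref{def:splitting_operations} the copy on $c$'s side keeps exactly that neighbor.
\item From a node of $[c,b]$, the step $-d_1$ leaves the parallelogram, so only the step along $G_2$ remains.
\end{itemize}
Hence no excursion is a $G_\Delta$-geodesic, every excursion is strictly longer than $|ac|+|cb|$, and $R$ is convex.
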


\begin{lemma}
\label{lem:portal_splitting_remain_pathconvex}
    Let $R$ be a convex region and let $P$ be a portal through $R$. 
    If we split $R$ at $P$ (and possibly some nodes on $P$), then the resulting regions remain convex.
\end{lemma}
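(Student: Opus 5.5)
We argue by contradiction, reducing convexity of a piece to convexity of $R$ plus a local shortcut along $P$. Let $R_1$ be one of the regions obtained by splitting $R$ at $P$ (and possibly at some nodes of $P$). Take $u,v \in R_1$ and a shortest $uv$-path $\Pi$ in $\Gamma$, and suppose $\Pi \not\subseteq R_1$. Since $R$ is convex and $u,v\in R$, we have $\Pi \subseteq R$. So $\Pi$ leaves $R_1$ only by entering another region obtained from the split. Because the pieces of $R$ are glued only along $P$ (and the split nodes on $P$), $\Pi$ must cross from $R_1$ into another piece through nodes of $P$. Moreover, it must return to $R_1$ through $P$ as well.

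Let $a$ be the last node of $\Pi$ on $P$ before it leaves $R_1$, and let $b$ be the first node of $\Pi$ on $P$ after it re-enters $R_1$. Both $a$ and $b$ lie on $P$, and the subpath $\Pi[a,b]$ has at least one node outside $R_1$. We replace $\Pi[a,b]$ by the straight segment of $P$ between $a$ and $b$.

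\begin{itemize}
\item This segment is contained in $R_1$, since the copies of $P$'s nodes belong to every adjacent resulting region.
\item If node splits occurred on $P$, the segment still lies in $R_1$, because $a$ and $b$ are both copies on $R_1$'s side and $P$ is a straight line in $R$. The delicate point is a node split lying strictly between $a$ and $b$. In that case the split node is adjacent to an empty grid point on $R_1$'s side, so the straight segment still exists in $R_1$. Alternatively, $a$ and $b$ lie in different pieces, which contradicts $a,b\in R_1$.
\item The segment between $a$ and $b$ is a shortest $ab$-path in $\Gamma$: its length is $d_{\Delta}(a,b)$, the distance in the full triangular grid, which is a lower bound on $d_\Gamma(a,b)$.
\end{itemize}

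Hence $|\Pi[a,b]| \ge$ the length of the segment. The modified path is therefore again shortest. However, a shortest path in $\Gamma$ that is not the straight segment could still exist with equal length, and that path could leave $R_1$. The real claim is that \emph{every} shortest path stays inside $R_1$, so we must show that any $ab$-path with the same length as the straight segment along $P$ coincides with $P$.

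This is the main obstacle. In the triangular grid, two nodes on a common line have a unique shortest path in $G_\Delta$, namely the straight segment. This follows from \Cref{lem:portal_distances} applied to $G_\Delta$, or directly: moving along the line direction is the only step that decreases distance by $1$ in two of the three axial coordinates while keeping the third fixed. Therefore $\Pi[a,b]$ equals the segment on $P$, which lies in $R_1$. This contradicts the assumption that $\Pi[a,b]$ leaves $R_1$, and the lemma follows.

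We also need to check that $\Pi$ genuinely must pass through $P$ to leave $R_1$. This follows because, within $R$, the splitting removes exactly the edges crossing $P$ (and those at the split nodes). Any edge of $\Pi$ from $R_1$ into another piece therefore has an endpoint on $P$, which we may take as $a$ after noting that nodes of $P$ are shared.
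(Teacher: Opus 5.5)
Your proof is correct and uses the same idea as the paper's: a shortest path that leaves the piece must exit and re-enter through $P$, and the detour between the exit node $a$ and the re-entry node $b$ is beaten by the straight segment of $P$. You go further by making explicit why the detour is \emph{strictly} longer, namely that the straight segment is the unique shortest path in $G_\Delta$ between collinear nodes; the paper's ``can be shortened'' leaves this implicit, yet it is exactly what convexity requires, since \emph{every} shortest path must stay inside.

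One small fix: your node-split bullet states the case backwards. A split node strictly between $a$ and $b$ whose empty neighbour lies on $R_1$'s side is precisely the case that cannot occur. It would force both copies of that node into one piece, which the paper's observation on splitting simple regions rules out, so $a$ and $b$ would lie in different pieces. A split node whose empty neighbour lies on the other side leaves $R_1$'s copy of the segment intact.
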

\begin{proof}
    Let $R', R'' \subseteq R$ denote the regions resulting from splitting $R$ at $P$.
    W.l.o.g., let $u,v \in R'$.
    Since $R$ is convex, every shortest $uv$-path stays inside $R$.
    Now, assume that there is a shortest $uv$-path $\Pi$ that crosses $P$ and reenters $R'$ later through $P$.
    Then, this contradicts the fact that $\Pi$ is a shortest path, as $\Pi$ can be shortened by traversing $P$ instead.
    Thus, a shortest path between $u$ and $v$ stays within $R'$.
    As this is true for all nodes of $R'$, $R'$ is convex.
\hfill $\square$ 
\end{proof}

Intuitively, our construction ensures that all resulting regions, except $M$, are (subregions of regions that are) bordered by exactly to gates that meet in an obtuse angle, parallelograms or triangles, which are all convex.
This gives the following lemma.

\begin{lemma}
\label{lem:path_convex_regions_correctness}
    Except for region $M$, all resulting regions are convex.
\end{lemma}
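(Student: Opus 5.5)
The plan is to classify every region produced by the third phase into a small number of shapes. For each shape we argue convexity either directly via \Cref{obs:simple_case}, or by viewing the region as obtained from an already convex region through further portal splits and invoking \Cref{lem:portal_splitting_remain_pathconvex}.

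Throughout, fix a tunnel region $T$ with gates $G,G'$. We treat the two directions $q\in\{x,z\}$ separately and then combine them. Splitting at an $x$-portal and then at a $z$-portal commutes as far as the final family of regions is concerned. Hence it suffices to show that, after performing the splits for one direction $q$, every region except the one that would contain $M$ is convex. Further splits in the other direction then preserve convexity by \Cref{lem:portal_splitting_remain_pathconvex}, together with the Observation that simplicity is preserved.

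\textbf{Case 1 for $q$.} The portals $P_{\uparrow,q}$ and $P_{\downarrow,q}$ both run from $G$ to $G'$ inside $T$.

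The region strictly between them is bounded by two parallel $q$-portals and two parallel $y$-portal pieces. I would show it is a parallelogram of the triangular grid. Any hole boundary inside it would yield a $q$-portal between $P_{\uparrow,q}$ and $P_{\downarrow,q}$ that does not reach both gates. But the extremality of $P_{\uparrow,q},P_{\downarrow,q}$ together with simplicity of $T$ forces every intermediate $q$-portal to connect $G$ and $G'$. A parallelogram is convex: by \Cref{lem:portal_distances}, any path that leaves it must strictly increase one of the portal distances.

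The region north of $P_{\uparrow,q}$ touches $G$ and $G'$ only in pieces. After the additional node split at $b_{\uparrow,q}$, each resulting piece meets only one of $G,G'$ together with a part of $P_{\uparrow,q}$. So each piece is bounded by (a) one gate, or (b) two portal pieces meeting at $120^\circ$. Both situations are covered by \Cref{obs:simple_case}, where we treat the portals along which we split as gates. The region south of $P_{\downarrow,q}$ is symmetric.

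\textbf{Case 2 for $q$.} The regions on the far side of $\gate{G}{q}$ (i.e., containing $G$ but not $R_q$) are bounded by $G$ and $\gate{G}{q}$. Inside the region these meet at an angle, and the split at $\boundarynode{G}{q}$ separates the acute-angle part from the rest. The part adjacent to the acute corner is a triangle, bounded by $G$, $\gate{G}{q}$ and the boundary. I would argue it contains no boundary indentations: otherwise some $q$-portal intersecting $G$ would be closer to $G'$, contradicting the choice of $\gate{G}{q}$ as the minimizer of $d_{T,q}$. Triangles are convex by the same distance argument. The remaining parts meet at obtuse angles and are handled by \Cref{obs:simple_case}. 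The side of $G'$ is symmetric. The region $R_q$ itself is exactly where $M$ lives when both directions are in Case 2. If the other direction is in Case 1, then $R_q$ gets cut by $P_{\uparrow,q'},P_{\downarrow,q'}$ into pieces already shown convex above.

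\textbf{Main obstacle.} The step I expect to be hardest is proving rigorously that the ``parallelogram'' and ``triangle'' pieces are really hole-free and boundary-straight, i.e., that the extremal choices of portals rule out any indentation. My first approach is a contradiction argument: take a boundary node inside the claimed shape and follow the $q$-portal through it. This portal either violates the northernmost/southernmost choice (Case 1) or has smaller $d_{T,q}$ to the opposite gate (Case 2). Once the shapes are established, convexity follows from \Cref{lem:portal_distances}. Any path leaving such a shape crosses a bounding portal twice, and rerouting along that portal shortens it, exactly as in the proof of \Cref{lem:portal_splitting_remain_pathconvex}.
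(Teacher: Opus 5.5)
There is a genuine gap: you never actually handle the pieces at the \emph{acute} ($60^\circ$) corners, and these pieces are exactly why your reduction to one direction at a time fails.

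\textbf{Case 1.} You claim that, after the node split at $b_{\uparrow,q}$, each piece north of $P_{\uparrow,q}$ has its bounding portals meeting at $120^\circ$. This is false for one of the two pieces. Take $q=x$ with $G$ on the west. The piece next to $G$ is bounded by $G$ (running $\mathit{NNE}$ from the corner) and $P_{\uparrow,x}$ (running $E$), and these meet at $60^\circ$. The piece south of $P_{\downarrow,x}$ next to $G'$ is likewise acute, and for $q=z$ the same happens with $G$ and $G'$ exchanged.

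Observation~\ref{obs:simple_case}(b) does not cover these pieces. Take points at distances $\alpha,\beta$ from the corner on the two bounding portals. The route through the corner has length $\alpha+\beta$. At a $120^\circ$ corner this equals their grid distance, but at a $60^\circ$ corner the grid distance is only $\max\{\alpha,\beta\}$. So ``traverse the gates instead'' no longer automatically beats a detour that leaves through one bounding portal and comes back through the other.

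\textbf{Case 2.} Your claim that the acute piece is a triangle is also false. Note first that the node split at $\boundarynode{G}{q}$ acts inside $R_q$, because $\boundarynode{G}{q}$ touches $\gate{G}{q}$ from the side of $R_q$. It does not subdivide the region on the far side of $\gate{G}{q}$. The minimality of $d_{T,q}(\gate{G}{q},G')$ only says that the $q$-portals through $G$ on the acute side reach $G'$ in the $q$-portal tree via $\gate{G}{q}$. It imposes nothing on the shape of the part of $T$ attached to those portals. So this wedge can be an arbitrary simple region with indentations, and your contradiction argument has nothing to contradict.

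\textbf{What is missing.} The splits in the other direction are essential, not merely convexity-preserving. The paper takes the two acute wedges left by the $x$-splits (say) and argues that each is entered by a $z$-split:
\begin{itemize}
\item If $z$ is in Case~1, this is because $P_{\uparrow,z}$ and $P_{\downarrow,z}$ delimit the band of $z$-portals at $z$-distance $0$ between $G$ and $G'$, and therefore touch a boundary node on their outer side.
\item If $z$ is in Case~2, this is because $\gate{G}{z}$ minimizes the $z$-distance to $G'$.
\end{itemize}
In both cases, simplicity of $T$ is used to place this inside the wedge. The $z$-portal, together with its node split, cuts the wedge into a triangle bounded by three portal pieces and at most two pieces whose bounding portals meet at $120^\circ$. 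All of these are convex by Observation~\ref{obs:simple_case}. Only after that is Lemma~\ref{lem:portal_splitting_remain_pathconvex} invoked for the remaining splits.

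Your arguments for the parallelogram and the obtuse pieces are fine. What needs replacing is your classification of the acute pieces, and with it the claim that one direction's splits already suffice.
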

\begin{proof}
    First, assume that for at least one value of $q$ the first case applies.
    W.l.o.g. assume that the first case applies for $q=x$ (the other case is analogous).
    Then, after splitting the tunnel at $P_{\downarrow,x}$ and $P_{\uparrow,x}$ in $x$-direction, we end up with up to 5 different regions.
    The region between $P_{\downarrow,x}$ and $P_{\uparrow,x}$ is a parallelogram and thus trivially convex.
    Furthermore, the region defined by gates $G$ and $P_{\downarrow,x}$ below $P_{\downarrow,x}$ and the region defined by gates $G'$ and $P_{\uparrow,x}$ above $P_{\uparrow,x}$ are convex by Observation \ref{obs:simple_case}, as the gates intersect in an obtuse angle.
    
    Now consider the two remaining regions, that are (so far) defined by two gates that intersect in an acute angle.
    Both of these regions are intersected by a gate in $z$-direction, no matter whether $q=z$ falls into case 1 or 2.
    To see why this is the case, consider one of the remaining regions, for example the region $R$ that is defined by $G$ and $P_{\uparrow,x}$ above $P_{\uparrow,x}$.
    If $q=z$ falls into case 1, then $P_{\uparrow,z}$ touches a boundary node on its north-eastern side by definition, because the region between $P_{\uparrow,z}$ and $P_{\downarrow,z}$ encompasses \textit{all} nodes with distance 0 between $G$ and $G'$ in $z$-direction.
    Similarly, if $q=z$ falls into case 2, then $\gate{G}{z}$ touches a boundary node on its north-eastern side because it minimizes the distance in the $z$-portal graph from $G$ to $G'$ by definition.
    Now note that in any case, one of these boundary nodes has to be inside $R$, because otherwise $P_{\uparrow,x}$ wouldn't connect $G$ with $G'$ since $T$ is simple.
    Thus, a gate in $z$-direction indeed splits up $R$.
    More precisely, it splits the region into up to three parts, depending on where it touches its first boundary node.
    The same argument can be made to show that the region defined by $G'$ and $P_{\downarrow,x}$ below $P_{\downarrow,x}$ is split up by a gate in $z$-direction into up to three parts.
    One of these parts is a triangle and thus trivially convex.
    The other two parts are defined by two gates that intersect in an obtuse angle and are thus convex by Observation \ref{obs:simple_case}.
    Finally, note that adding gates along portals to already convex regions preserves convexity by \Cref{lem:portal_splitting_remain_pathconvex}.
    Therefore, all regions created are convex.

    The last case to consider is that for both $q=x$ and $q=z$ the second case applies.
    Similarly to before, after splitting at $\gate{G}{x}$ and $\gate{G'}{x}$ in the first step, up to 5 regions are created.
    The region defined by gates $G$ and $\gate{G}{x}$ below $\gate{G}{x}$ and the region defined by gates $G'$ and $\gate{G'}{x}$ above $\gate{G'}{x}$ are convex by Observation \ref{obs:simple_case}, as the gates intersect in an obtuse angle.
    The remaining two regions (besides $M$) are defined by two gates that intersect in an acute angle.
    Denote by $R$ the region defined by $G$ and $\gate{G}{x}$ above $\gate{G}{x}$ and denote by $R'$ the region defined by $G'$ and $\gate{G'}{x}$ below $\gate{G'}{x}$.
    After splitting at $\gate{G}{z}$ and $\gate{G'}{z}$ (and ignoring $\gate{G}{x}$ and $\gate{G'}{x}$ again), we also end up with up to 5 regions.
    Two of them are convex by Observation \ref{obs:simple_case}.
    Note that $\gate{G}{z}$ has to intersect $R$, because otherwise, $\gate{G}{z}$ wouldn't minimize the distance in the $z$-portal graph to $G'$ as $T$ is simple.
    Similarly, $\gate{G'}{z}$ intersects $R'$.
    Thus, after splitting up $\gate{G}{z}$ and $\gate{G'}{z}$ at one of their nodes in the first step, $R$ and $R'$ are each split up into up to three parts.
    As before, one of these parts is a triangle and therefore trivially convex.
    The other two parts are defined by two gates that intersect in an obtuse angle and are thus convex by Observation \ref{obs:simple_case}.
    Hence, all regions are convex.
    The only remaining region is $M$, which is excluded from the lemma statement.
\hfill $\square$ 
\end{proof}

It remains to deal with region $M$, which might not yet be convex and needs further splits.
To this end, let $g$ be the closest node to $G$ on either $\gate{G}{x}$ or $\gate{G}{z}$ that is in $M$, i.e. $g := \argmin_{v \in (\gate{G}{x} \cap M) \cup (\gate{G}{z} \cap M)} d(v,G)$.
Note that if $M$ is enclosed by both $\gate{G}{x}$ and $\gate{G}{z}$, this is exactly the intersection point of $\gate{G}{x}$ and $\gate{G}{z}$, and if $M$ is enclosed by only one of $\gate{G}{x}$ or $\gate{G}{z}$, this is either $\boundarynode{G}{x}$ or $\boundarynode{G}{z}$.
Analogously, we define $g'\ := \argmin_{v \in (\gate{G'}{x} \cap M) \cup (\gate{G'}{z} \cap M)} d(v,G')$ on $G'$'s side.
By choice of $g$ and $g'$, we get the following crucial property: for any path between two nodes $u,v \in M$ that leaves $T$ through $G$ and reenters $T$ through $G'$, there is a $uv$-path of the same length that goes via $g$ and $g'$.
Hence, we can w.l.o.g. interpret $M$ as a tunnel region with single node gates $g$ and $g'$ and show how to split up this tunnel into convex regions, as the resulting regions remain convex even without this restriction.

Furthermore, we have the following property which will be important later.

\begin{lemma}\label{lem:g_not_on_G}
    It is $g \not\in G$ and $g' \not\in G'$.
\end{lemma}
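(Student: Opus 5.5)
By the symmetry between $G$ and $G'$ it suffices to show $g \notin G$. I would argue by contradiction: assume $g \in G$. Recall the situation. Region $M$ only exists when Case~2 applies for both $q=x$ and $q=z$. Then $\gate{G}{x}$ and $\gate{G}{z}$ are the $x$- and $z$-portals through $G$ minimizing the respective portal-graph distance to $G'$. By definition, $g$ is the node of $(\gate{G}{x}\cup\gate{G}{z})\cap M$ closest to $G$, so $g\in G$ means some node of $M$ on $\gate{G}{x}$ or $\gate{G}{z}$ lies on $G$ itself. I distinguish the two descriptions of $g$ given after \Cref{lem:path_convex_regions_correctness}.

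\textbf{Case A: $M$ is enclosed by both $\gate{G}{x}$ and $\gate{G}{z}$.} Here $g$ is their intersection point. Both portals meet $G$ and are straight lines in different directions, so they can meet on $G$ only if they cross $G$ at the same node $w$. I then want to show that $M$ cannot touch $G$ at $w$. The region $M$ lies on the side of both portals facing $G'$, i.e.\ in the $60^\circ$ wedge at $w$ bounded by the rays of $\gate{G}{x}$ and $\gate{G}{z}$ pointing away from $G$. Both portals leave $G$ on the side of $T$ containing $G'$, and $G$ is a $y$-line through $w$. Hence this wedge contains no further node of $G$, and the region between the two portals on $G'$'s side is bounded away from $G$. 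So $w\in M$ can only happen if $M$ reaches $w$ as the apex of the wedge. I would then use the Case~2 hypothesis: no $x$-portal and no $z$-portal meets both $G$ and $G'$. Together with the minimality of $\gate{G}{x}$ and $\gate{G}{z}$, this should show that the two portals actually cross strictly inside $T$, beyond $G$. Otherwise one of them could be shifted along $G$ to a portal with smaller distance to $G'$, contradicting minimality.

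\textbf{Case B: $M$ is enclosed by only one of the two portals, say $\gate{G}{x}$.} Here $g=\boundarynode{G}{x}$, which by its definition in Case~2 is chosen to be a boundary node not in $G$. So $g\notin G$ holds immediately. I would still verify that this is indeed the $\argmin$. Any other node of $\gate{G}{x}\cap M$ lies on the far side of the node split at $\boundarynode{G}{x}$, so its distance to $G$ is at least as large.

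\textbf{Main obstacle.} I expect Case A to be the hard part: excluding that the two minimizing portals cross exactly on $G$. My plan is a minimality argument. If $\gate{G}{x}$ and $\gate{G}{z}$ crossed at $w\in G$, the boundary nodes they touch on the side of $R_x$, resp.\ $R_z$, lie on opposite sides near $w$. Using simplicity of $T$ (\Cref{lem:portal_tree}), I would show that some neighbouring $x$- or $z$-portal through $G$ lies strictly closer to $G'$ in its portal tree, which contradicts the choice of $\gate{G}{q}$. If that fails, the fallback is to observe that the split at $\boundarynode{G}{q}$ (or the split of the gate itself) separates $w$ from $M$, so $w\notin M$ after all. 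The argument for $g'$ is symmetric.
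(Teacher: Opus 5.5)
There is a genuine gap in Case~A. Your reduction is fine, and it matches the paper's first step: Case~B is immediate because $\boundarynode{G}{q}\notin G$ by definition, and in Case~A a crossing on $G$ must happen at a common node $w\in G$.

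The problem is how you then exclude $w\in M$. Your main plan is a minimality contradiction showing that $\gate{G}{x}$ and $\gate{G}{z}$ cannot cross on $G$. That cannot work, because such crossings genuinely occur. For example:
\begin{itemize}
\item let $w$ be the $\mathit{SSW}$-endpoint of $G$;
\item let the tunnel run from $G$ down along the $z$-line through $w$ and then widen to its $\mathit{WSW}$ side;
\item place $G'$ there, below the $x$-line through $w$.
\end{itemize}
Then Case~2 holds for both $q$, and the $x$- and $z$-portals through $w$ are the minimizers. Moving along $G$ to $w+\mathit{NNE}$ only increases both portal distances, so no shift along $G$ yields a contradiction.

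Your wedge picture is also incomplete. $M$ need not lie in the $60^\circ$ wedge between the two portal rays. It can lie in a wedge bounded by one portal ray and the continuation of $G$ beyond $w$:
\begin{itemize}
\item between $\mathit{SSW}$ and $\mathit{SSE}$ in the example above;
\item between $E$ and $\mathit{NNE}$ when $w$ is the $\mathit{NNE}$-endpoint and $G'$ lies above $\gate{G}{x}$ and on the $\mathit{ENE}$ side of $\gate{G}{z}$.
\end{itemize}
These are exactly the configurations in which minimality gives nothing.

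What actually excludes $g\in G$ is your fallback, the node split at $\boundarynode{G}{q}$. However, you give no reason why such a boundary node exists on the $G'$-side when the portals cross at $w$, and that existence is the whole content of the paper's proof. The paper argues as follows.
\begin{itemize}
\item By minimality, $\gate{G}{x}$ either touches a boundary node $\notin G$ from the side of $R_x(G')$, or $w$ is the endpoint of $G$ on that side.
\item If $w$ is the $\mathit{NNE}$-endpoint, then $R_x(G')$ lies above $\gate{G}{x}$. The node $w+E$ touches $w+\mathit{NNE}\notin T$ from above, so $\boundarynode{G}{x}$ exists after all.
\item If $w$ is the $\mathit{SSW}$-endpoint, switch to $\gate{G}{z}$, which also passes through $w$. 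Either the same dichotomy yields a boundary node on the side of $R_z(G')$, or $R_z(G')$ lies on the $\mathit{WSW}$ side. In the latter case $w+\mathit{SSE}$ touches $w+\mathit{SSW}\notin T$ there, so $\boundarynode{G}{z}$ exists.
\end{itemize}
In every case some $\boundarynode{G}{q}$ exists on the $G'$-side, and splitting at the closest such node detaches $w$ from $M$, contradicting $g=w$. You need to supply this existence argument in place of the minimality contradiction.
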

\begin{proof}
    We only prove that $g \not\in G$, as $g' \not\in G'$ can be proved analogously.
    W.l.o.g., let $G$ be to the left of the tunnel region.
    The other case follows from symmetry.
    Now, assume for contradiction that $g \in G$.
    By definition of $g$, we have $g \in \{\boundarynode{G}{x}, \boundarynode{G}{z}\} \cup (\gate{G}{x} \cap \gate{G}{z})$.
    Since $\boundarynode{G}{x}, \boundarynode{G}{z} \not\in G$ by definition, it follows that $g \in \gate{G}{x} \cap \gate{G}{z}$, i.e. $\gate{G}{x}$ and $\gate{G}{z}$ have to intersect on $G$.
    Now, let $R_x(G')$ be the region in which $G'$ lies after (only) splitting at $\gate{G}{x}$ and let $R_z(G')$ be the region in which $G'$ lies after (only) splitting at $\gate{G}{z}$.
    Then, $\gate{G}{x}$ either touches a boundary node $\boundarynode{G}{x} \not\in G$ from the side of $R_x(G')$ or an endpoint of $G$ lies on $\gate{G}{x}$, as otherwise $\gate{G}{x}$ would not be the $x$-portal intersecting $G$ that minimizes the $x$-distance to $G'$.
    In the former case, we split at $\boundarynode{G}{x}$ and immediately get that $g \not\in M$, which is a contradiction.
    In the latter case, we consider two subcases.
    If the $\mathit{NNE}$-endpoint of $G$ lies on $\gate{G}{x}$, then $R_x(G')$ has to lie above $\gate{G}{x}$. 
    As the $\mathit{NNE}$-endpoint of $G$ lies on $\gate{G}{x}$, it follows that $\gate{G}{x}$ has to touch a boundary node $\boundarynode{G}{x} \not\in G$ from the side of $R_x(G')$. 
    Since we split at $\boundarynode{G}{x}$, this again implies that $g \not\in M$, which is a contradiction.
    Otherwise, if the $\mathit{SSW}$-endpoint of $G$ lies on $\gate{G}{x}$, we have that the $\mathit{SSW}$-endpoint of $G$ also lies on $\gate{G}{z}$.
    Then, $R_z(G')$ has to lie below $\gate{G}{z}$.
    As before, this implies that $\gate{G}{z}$ touches a boundary node $\boundarynode{G}{z} \not\in G$ from the side of $R_z(G')$.
    Since we split at $\boundarynode{G}{z}$, we get that $g \not\in M$, which is a contradiction.
    We reach a contradiction in any case and conclude that $g \not\in G$ has to hold.
\hfill $\square$ \end{proof}

For the remainder of this section, we show how to decompose the tunnel $M$ with single node gates $g$ and $g'$ into convex regions.
We do this by splitting $M$ at the portals $P_x$, $P_y$ and $P_z$, which we define in the following way (see Figure \ref{fig:convex_decomposition:point_shaped} for an example).

\begin{definition}
\label{def:point_shape_decomposition}
    Let $M$ be a tunnel region with two gates $g, g' \in M$ each consisting of a single node.
    For all directions $q \in \{x,y,z\}$, we split $M$ as follows.
    Let $d_q := d_{M,q}(g,g')$ be the $q$-distance between $g$ and $g'$ within $M$. 
    We split $M$ at the portal $P_q := \{v \in M \mid d_q(g,v) = \lceil \frac{d_q}{2} \rceil \wedge d_q(g',v) = \lfloor \frac{d_q}{2} \rfloor\}$.
    After splitting $M$ at (only) $P_q$, let $R_q(g)$ be the region in which $g$ lies and $R_q(g')$ be the region in which $g'$ lies.
    In case $R_q(g) = R_q(g')$, $P_q$ has to touch a boundary node $b_q$ from the side of $R_q(g)$ that lies on every shortest $gg'$-path within $M$.
    Then, we split $R_q(g)$ at the westernmost such node on $P_q$.
\end{definition}

\begin{lemma}
\label{lem:point_shaped_gates_are_portals}
    The sets $P_x, P_y$ and $P_z$ form a portal in $x$-, $y$- and $z$-direction in the region $M$ respectively. 
    Furthermore, each $gg'$-path within $M$ intersects with $P_x, P_y$ and $P_z$.
\end{lemma}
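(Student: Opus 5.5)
We work with the $q$-portal graph $\mathcal P_q$ of $M$ for a fixed $q\in\{x,y,z\}$; the other directions are symmetric. Since $M$ is simple (it arises from splits of a simple tunnel region, see the Observation after \Cref{lem:number_of_gates}), \Cref{lem:portal_tree} says that $\mathcal P_q$ is a tree. Let $\Pi_q$ be the unique path in $\mathcal P_q$ from $\portal_q(g)$ to $\portal_q(g')$. It has length $d_q$.

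\emph{First claim: $P_q$ is a portal.} Take $v\in P_q$. Then $d_q(g,v)+d_q(g',v)=\lceil d_q/2\rceil+\lfloor d_q/2\rfloor=d_q$, so in the tree $\mathcal P_q$ the node $\portal_q(v)$ lies on $\Pi_q$. Its distance from $\portal_q(g)$ along $\Pi_q$ is exactly $\lceil d_q/2\rceil$. Only one node of $\Pi_q$ is at that distance, so every $v\in P_q$ lies in one fixed $q$-portal $Q$. Conversely, every node of $Q$ has $q$-distances $\lceil d_q/2\rceil$ and $\lfloor d_q/2\rfloor$ to $g$ and $g'$, because $q$-distance depends only on the portal. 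Hence $P_q=Q$, which is a $q$-portal of $M$. It is nonempty because $\Pi_q$ has a node at every distance from $0$ to $d_q$.

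\emph{Second claim: every $gg'$-path in $M$ meets $P_q$.} Take a $gg'$-path $(v_0,\dots,v_k)$ in $M$. Its sequence of $q$-portals $\portal_q(v_0),\dots,\portal_q(v_k)$ is a walk in $\mathcal P_q$, possibly staying in one portal between consecutive steps, since consecutive nodes are equal portals or adjacent portals. Any walk in a tree between two nodes visits every node of the unique path between them; otherwise deleting that node would disconnect the endpoints. So some $v_i$ has $\portal_q(v_i)=P_q$, that is, $v_i\in P_q$.

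The only delicate point is whether the distances $d_q(g,\cdot)$ in \Cref{def:point_shape_decomposition} are measured in $M$'s own portal graph, which is a tree, rather than in $\Gamma$'s. We read them as $d_{M,q}$, consistent with the definition of $d_q$. We also use that the gates $g,g'$ are single nodes of $M$, so their portals are well defined. With this reading, both claims follow from the tree structure given by \Cref{lem:portal_tree}.
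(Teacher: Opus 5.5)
Your proposal is correct and follows the same route as the paper: by \Cref{lem:portal_tree} the $q$-portal graph of the simple region $M$ is a tree, and the unique path from $\portal_q(g)$ to $\portal_q(g')$ both pins $P_q$ down to a single $q$-portal and forces every $gg'$-path in $M$ to pass through it. You also spell out steps the paper leaves implicit: the distance-sum criterion for lying on that path, the reverse inclusion showing $P_q$ is the whole portal, nonemptiness, and the fact that a walk in a tree must visit every node of the connecting path.
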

\begin{proof}
    We show the statement for $P_x$, as the other cases follow by symmetry.
    Consider the $x$-portal graph $\mathcal{P}_x$ of $M$.
    Since $M$ is simple, $\mathcal{P}_x$ is a tree by \Cref{lem:portal_tree}, and therefore, there is a unique path from $\operatorname{portal}_x(g)$ to $\operatorname{portal}_x(g')$ in $\mathcal{P}_x$. 
    Thus, all nodes $v \in M$ with $d_x(g,v) = \lceil \frac{d_x}{2} \rceil$ and $d_x(g',v) = \lfloor \frac{d_x}{2} \rfloor$ belong to the same portal $P_x$.
    Moreover, any $gg'$-path within $M$ intersects with $P_x$, as there is a unique path from $\operatorname{portal}_x(g)$ to $\operatorname{portal}_x(g')$ in $\mathcal{P}_x$ which intersects $P_x$. 
\hfill $\square$ 
\end{proof}

We will now prove that the regions created by splitting $M$ at $P_x$, $P_y$ and $P_z$ are indeed convex.
We start off by showing that for any two nodes $u,v$ of the same region $R$, a shortest path \textit{within} $M$ must also stay inside $R$.

\begin{lemma}
\label{lem:path_convex_inside_tunnel}
    Let $u,v \in R$ for some region $R \subseteq M$ resulting from splitting $M$ at $P_x, P_y$ and $P_z$. Then, $d_R(u,v) = d_M(u,v)$.
\end{lemma}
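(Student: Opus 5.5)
Since $R \subseteq M$, we always have $d_R(u,v) \geq d_M(u,v)$, so the work is in the reverse inequality. I would prove it by showing that some shortest $uv$-path within $M$ stays inside $R$. The basic tool is \Cref{lem:portal_distances}: $M$ and $R$ are simple triangular grid graphs (by the Observation on splitting simple regions), so
\[ d_M(u,v)=\tfrac12\sum_{q\in\{x,y,z\}} d_{M,q}(u,v) \quad\text{and}\quad d_R(u,v)=\tfrac12\sum_{q\in\{x,y,z\}} d_{R,q}(u,v). \]
It therefore suffices to show $d_{R,q}(u,v)=d_{M,q}(u,v)$ for each direction $q$. This reduces the claim to a statement about the portal trees of \Cref{lem:portal_tree}.

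\textbf{Step 1: splitting along a $q$-portal.} Fix $q$. The portal $P_q$ is a single node of the tree $\mathcal P_q(M)$. Splitting $M$ at $P_q$ (with $P_q$ duplicated into both sides) produces two regions whose $q$-portal trees are the two subtrees of $\mathcal P_q(M)$ on either side of $P_q$, each containing a copy of $P_q$. The tree path between two portals in the same subtree stays in that subtree, so distances in $\mathcal P_q$ are preserved.

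\textbf{Step 2: splits along the other directions.} Splitting at a portal $P_{q'}$ with $q'\neq q$ cuts $q$-portals into pieces. I would argue that each $q$-portal meets $P_{q'}$ in at most one node, since two straight lines in different directions cross at most once. So a split at $P_{q'}$ cuts each $q$-portal into at most two pieces sharing the duplicated crossing node. Restricted to one side, the $q$-portal graph of $R$ is then obtained from the corresponding part of $\mathcal P_q(M)$ by keeping, for each portal, its piece on that side. Adjacency between pieces on the same side is inherited, because two adjacent $q$-portals that both reach that side are connected through nodes on that side, using that $P_{q'}$ is straight and $R$ is simple. Hence the tree path between $\portal_q(u)$ and $\portal_q(v)$ in $\mathcal P_q(M)$ survives in $\mathcal P_q(R)$, which gives $d_{R,q}(u,v)\le d_{M,q}(u,v)$. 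The reverse inequality holds because $R\subseteq M$ yields a homomorphism of portal graphs.

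\textbf{Step 3: the additional node splits.} The node splits at $b_q$ (from \Cref{def:point_shape_decomposition}) only separate regions along a boundary point. Within a region these cause no change beyond the portal splits, so they are handled the same way.

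\textbf{Main obstacle.} I expect Step 2 to be the hard part: showing that the tree path in $\mathcal P_q(M)$ between the portals of $u$ and $v$ never has to pass through a piece of a $q$-portal lying on the other side of some $P_{q'}$. My first attempt would be a crossing argument. Suppose a shortest $uv$-path in $M$ leaves $R$. Then it crosses some splitting portal $P_{q'}$ twice, once leaving $R$ and once reentering it. Replacing the subpath between the two crossings by the segment of $P_{q'}$ between them does not increase the length, exactly as in the proof of \Cref{lem:portal_splitting_remain_pathconvex}, because $P_{q'}$ is a shortest path in $M$ between any two of its nodes. Repeating this replacement (each one strictly reduces the number of crossings) yields a shortest $uv$-path inside $R$, and this may replace Steps 1 and 2 altogether.

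The delicate point in this crossing argument is that the path could exit through $P_{q'}$ and reenter through a different splitting portal $P_{q''}$, or around a split node $b_q$. To rule this out I would use \Cref{lem:point_shaped_gates_are_portals} and the simplicity of $M$: the three portals $P_x,P_y,P_z$ meet $M$ in a way that makes each resulting region bounded only by pieces of these portals and by the boundary of $M$. A path that leaves $R$ through one portal and comes back through another would then enclose a boundary node of $M$, which is impossible because $M$ has no holes.
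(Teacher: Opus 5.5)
\textbf{Gap in the crossing argument.} Your crossing argument follows the paper's route. You assume a shortest $uv$-path in $M$ that leaves $R$, and if it exits and re-enters through the same portal you shortcut along that portal. It breaks down in the mixed case. You claim that leaving through $P_{q'}$ and returning through a different $P_{q''}$ is impossible, because the resulting loop would enclose a point outside $M$. That only holds if the two portals do not meet inside $M$.

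In general they do cross in the interior. For example, if $M$ is a large hexagon, all three portals pass close to the midpoint between $g$ and $g'$ and cross there. So $R$ can have a corner at a crossing node $c$. A path can then leave $R$ through $P_{q'}$, walk around $c$ through the neighbouring regions, and re-enter through $P_{q''}$. The loop encloses only $c$ and nodes of $M$ around it, so the simplicity of $M$ gives no contradiction.

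This case needs a length comparison, which is what the paper supplies: it replaces the detour by the route along the portals bounding $R$. In the corner case, the detour must contain a node $w\neq c$ on the grid line of one of the two portals, beyond $c$. If the exit and re-entry nodes lie at distances $a$ and $b$ from $c$, the triangle inequality through $w$ gives length at least $a+b+1$, whereas the route through $c$ has length $a+b$. Your Step~3 on the node splits at $b_q$ is likewise only asserted, not argued.

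\textbf{Your first route can be completed.} Steps 1--3 are one observation away from a complete proof, and that proof would differ from the paper's. Each $q$-portal of $M$ has at most one piece in $R$. A piece cut off at the crossing with $P_{q'}$ lies on the other side of the chord $P_{q'}$, and the two pieces of $P_q$ created at $b_q$ end up in different regions. Hence the map sending each $q$-portal of $R$ to the $q$-portal of $M$ containing it is an \emph{injective} homomorphism from the tree $\mathcal P_q(R)$ into the tree $\mathcal P_q(M)$. It maps the tree path between the portals of $u$ and $v$ in $\mathcal P_q(R)$ to a walk without repeated vertices. That walk is therefore the tree path in $\mathcal P_q(M)$, so $d_{R,q}(u,v)=d_{M,q}(u,v)$ for every $q$. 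Applying \Cref{lem:portal_distances} to $R$ and $M$ then gives the lemma.

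This avoids any case analysis of how a path exits and re-enters $R$. It also proves \Cref{cor:path_convex_inside_tunnel} first and derives the lemma from it, the reverse of the paper's order. As written, however, you leave exactly this step open (your ``main obstacle'') and fall back on the flawed crossing argument, so the proposal is not yet a proof.
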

\begin{proof}
    Assume there is a shortest $uv$-path $\widetilde{\Pi}$ that is partially outside of $R$ but stays within $M$. 
    Path $\widetilde{\Pi}$ does not exit and re-enter $R$ through the same portal, as such a path can be made shorter by traversing that portal. 
    W.l.o.g. assume that $\widetilde{\Pi}$ exits $R$ through portal $P_x$ at node $v_x$ and re-enters $R$ through portal $P_y$ at node $v_y$. 
    As $\widetilde{\Pi}$ stays within $M$ and $M$ is simple, we can shorten $\widetilde{\Pi}$ by replacing the subpath from $v_x$ to $v_y$ with the simple path from $v_x$ to $v_y$ along the portals that make up the boundary of $R$.
    Therefore, $\widetilde{\Pi}$ is not a shortest $uv$-path, as it can be shortened by traversing that boundary instead of going around $R$. 
    As $\widetilde{\Pi}$ cannot exist, any shortest $uv$-path within $M$ also stays within $R$ and it is $d_R(u,v) = d_M(u,v)$.
\hfill $\square$ 
\end{proof}

\begin{corollary}
\label{cor:path_convex_inside_tunnel}
    Let $u,v \in R$ for some region $R \subseteq M$ resulting from splitting $M$ at $P_x, P_y$ and $P_z$. Then, $d_{R,x}(u,v) = d_{M,x}(u,v)$, $d_{R,y}(u,v) = d_{M,y}(u,v)$ and $d_{R,z}(u,v) = d_{M,z}(u,v)$.
\end{corollary}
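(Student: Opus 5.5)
We derive the corollary from \Cref{lem:path_convex_inside_tunnel} together with the portal-distance formula of \Cref{lem:portal_distances}, using the fact that portal distances can only grow when passing to a subregion.

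\textbf{Monotonicity.} Since $R \subseteq M$ and both are simple triangular grid graphs (by the observation on splitting simple regions), each $q$-portal of $R$ is contained in a $q$-portal of $M$. Any walk in the $q$-portal graph of $R$ therefore maps to a walk in the $q$-portal graph of $M$ of at most the same length. Hence $d_{R,q}(u,v) \ge d_{M,q}(u,v)$ for every $q \in \{x,y,z\}$.

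\textbf{Equality of the sum.} By \Cref{lem:portal_distances} applied to $R$ and to $M$,
\[
d_{R,x}(u,v)+d_{R,y}(u,v)+d_{R,z}(u,v) = 2d_R(u,v)
\]
and
\[
2d_M(u,v) = d_{M,x}(u,v)+d_{M,y}(u,v)+d_{M,z}(u,v).
\]
By \Cref{lem:path_convex_inside_tunnel}, $d_R(u,v)=d_M(u,v)$, so the two sums are equal.

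\textbf{Conclusion.} Three nonnegative differences $d_{R,q}(u,v)-d_{M,q}(u,v)$ sum to zero, so each one vanishes. This gives the claim for each direction.

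The only delicate point is the monotonicity step. A portal of $M$ may be cut into several portals of $R$, since splitting at nodes on $P_x,P_y,P_z$ can separate a portal. This is harmless: mapping each $R$-portal to the unique $M$-portal containing it sends adjacent portals to adjacent or equal portals. Thus every path in $\mathcal{P}_q(R)$ yields a walk in $\mathcal{P}_q(M)$ that is no longer, and the inequality holds. We should also check that $R$ is a genuine simple triangular grid graph, so that \Cref{lem:portal_distances} applies to it. The observation that splitting a simple region at a portal keeps it simple and duplicate-free provides exactly this.
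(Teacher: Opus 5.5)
Your proof is correct, but it is organized differently from the paper's.

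The paper argues via a specific path. It takes the path $\Pi$ from \Cref{lem:path_convex_inside_tunnel}, which is shortest in $M$ and lies in $R$. It then uses the fact that the $q$-portals traversed by a shortest path form a shortest path in the $q$-portal graph. Hence $\Pi$ already realizes $d_{M,q}(u,v)$ using only portals of $R$. The paper leaves two things implicit: the reverse inequality $d_{R,q}(u,v)\ge d_{M,q}(u,v)$, and the fact that a portal of $M$ may be cut into several portals of $R$.

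You never fix a particular path. You make the easy inequality $d_{R,q}(u,v)\ge d_{M,q}(u,v)$ explicit through the portal-refinement map. You then squeeze using the identity $d_{\Gamma,x}+d_{\Gamma,y}+d_{\Gamma,z}=2d_\Gamma$ from \Cref{lem:portal_distances}, for both $\Gamma=R$ and $\Gamma=M$.

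Your version relies only on stated lemmas. The paper instead uses the unproven portal-sequence fact, which itself follows from \Cref{lem:portal_distances} by edge counting. The price is that you apply \Cref{lem:portal_distances} to $R$, so you need $R$ to be simple. The paper's path-based argument needs simplicity only for $M$, and it exhibits one path that realizes all three portal distances in $R$ at once.

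You can remove that dependence on $R$. Every edge of a triangular grid graph stays inside one portal of its own direction and moves to an adjacent portal in each of the other two portal graphs. Hence $d_{R,x}(u,v)+d_{R,y}(u,v)+d_{R,z}(u,v)\le 2d_R(u,v)$ holds for any triangular grid graph, simple or not. Combined with your monotonicity and the equality for $M$, this one-sided bound already closes the squeeze.
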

\begin{proof}
    By Lemma \ref{lem:path_convex_inside_tunnel}, we know that there is a shortest $uv$-path $\Pi$ within $M$ that stays inside $R$. 
    As the portals of a fixed direction traversed by any shortest path correspond to a shortest path in the respective portal graph, the existence of $\Pi$ implies that there is a shortest $uv$-path within $M$ using only portals in $R$.
    
    \hfill $\square$ 
\end{proof}

Next, we show that nodes $u,v$ on two (possibly different) shortest $gg'$-paths within $M$, that are \textit{on the same side} of a portal $P_x$, $P_y$ or $P_z$, are not far away from each other.
In particular, Corollary \ref{cor:distance_bound_shortest_path_inside_region} implies that if $u,v$ are inside the same region, then the shortest $uv$-path stays within that region, because any $uv$-path leaving and reentering tunnel $M$ would be longer with respect to every direction.
These properties will allow us to show convexity for other pairs of nodes later on.

\begin{lemma}
\label{lem:distance_bound_shortest_path_same_side}
    Let $\Pi, \Pi'$ be two shortest $gg'$-paths within $M$. 
    Further, let $u$ on $\Pi$ and $v$ on $\Pi'$ such that both $u$ and $v$ are in the same region after splitting $M$ at (only) $P_q$ for a $q \in \{x,y,z\}$. 
    Then, $d_{M,q}(u,v) \leq \lceil \frac{d_q}{2} \rceil$.
\end{lemma}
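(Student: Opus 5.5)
Since $M$ is simple, \Cref{lem:portal_tree} makes the $q$-portal graph $\mathcal{P}_q$ of $M$ a tree. The plan is to reduce the statement to a distance bound inside this tree.

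\textbf{Step 1: shortest paths induce the tree path.} Let $\sigma = (Q_0,\dots,Q_{d_q})$ be the unique path in $\mathcal{P}_q$ from $\portal_q(g)$ to $\portal_q(g')$. By \Cref{lem:portal_distances}, a shortest $gg'$-path in $M$ must realise the $q$-distance $d_q$ exactly. If it visited a portal off $\sigma$, or revisited a portal, its induced walk in $\mathcal{P}_q$ would be longer than $d_q$ and the path would not be shortest. Hence every node of $\Pi$ or $\Pi'$ lies on some portal $Q_i$ of $\sigma$, and this $Q_i$ satisfies $d_{M,q}(g,\cdot)=i$ and $d_{M,q}(\cdot,g')=d_q-i$. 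In particular, $u\in Q_i$ and $v\in Q_j$ for some $0\le i,j\le d_q$, and therefore $d_{M,q}(u,v)=|i-j|$, since $Q_i$ and $Q_j$ both lie on the tree path $\sigma$.

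\textbf{Step 2: determine the side of the split.} By \Cref{def:point_shape_decomposition} and \Cref{lem:point_shaped_gates_are_portals}, $P_q=Q_{\lceil d_q/2\rceil}$. Splitting the tree $\mathcal{P}_q$ at this node separates $\sigma$ into two parts. The $g$-side contains $Q_0,\dots,Q_{\lceil d_q/2\rceil}$ (copies of $P_q$ belong to both sides). The $g'$-side contains $Q_{\lceil d_q/2\rceil},\dots,Q_{d_q}$.

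A subtlety arises when $R_q(g)=R_q(g')$. Then the split at $P_q$ alone does not separate $M$, and the additional split at the boundary node $b_q$ is what separates the regions. Since $b_q$ lies on every shortest $gg'$-path, both $\Pi$ and $\Pi'$ pass through $b_q$. Each of them therefore runs through one side of $b_q$ before reaching $P_q$ and the other side afterward. So even in this case, a node of $\Pi$ or $\Pi'$ is on the $g$-side exactly when its index is at most $\lceil d_q/2\rceil$, and on the $g'$-side exactly when its index is at least $\lceil d_q/2\rceil$. The reading "same region after splitting at (only) $P_q$" should be taken together with this node split.

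\textbf{Step 3: bound the distance.} If $u$ and $v$ are both on the $g$-side, then $i,j\in[0,\lceil d_q/2\rceil]$, so $|i-j|\le\lceil d_q/2\rceil$. If both are on the $g'$-side, then $i,j\in[\lceil d_q/2\rceil,d_q]$, so $|i-j|\le\lfloor d_q/2\rfloor\le\lceil d_q/2\rceil$. In either case $d_{M,q}(u,v)=|i-j|\le\lceil d_q/2\rceil$.

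The main obstacle is making Step 1 rigorous: showing that the portal sequence of a shortest path in $M$ is exactly the tree path $\sigma$. I would get this from \Cref{lem:portal_distances} together with the fact that each edge changes the $q$-portal by at most one. Summing over the three directions gives $|\Pi|\ge\tfrac12(d_x+d_y+d_z)$, with equality forcing each projection to be a shortest walk in its portal tree. The handling of $b_q$ in Step 2 is secondary.
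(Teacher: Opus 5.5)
Your proposal is correct and takes the same route as the paper. The $q$-portal tree forces $\Pi$ and $\Pi'$ through the same portal sequence, which passes $P_q=Q_{\lceil d_q/2\rceil}$ exactly once. The split at $P_q$, together with the node split at $b_q$ when $R_q(g)=R_q(g')$, separates the $g$-prefix of that sequence from the $g'$-suffix, and the bound follows from where $P_q$ sits on it.

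One justification in Step 1 needs fixing. ``Each edge changes the $q$-portal by at most one,'' summed over the three directions, only bounds the total projected walk length by $3|\Pi|$, which does not give the half-sum. What you need is that each edge lies inside a portal of its own direction, so it advances in at most two of the three portal trees. Then $d_x+d_y+d_z\le\sum_q(\text{walk length in }\mathcal{P}_q)\le 2|\Pi|=d_x+d_y+d_z$ by \Cref{lem:portal_distances}, which forces every projected walk to be the tree path.
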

\begin{proof}
    By \Cref{lem:point_shaped_gates_are_portals}, both $\Pi$ and $\Pi'$ intersect with $P_q$.
    Further, since $M$ is simple, the $q$-portal graph of $M$ is a tree by Lemma \ref{lem:portal_tree}.
    Therefore, both $\Pi$ and $\Pi'$ visit the same $q$-portals.
    We denote this sequence of $q$-portals as $(p_1, \dots, p_{k-1},\allowbreak P_q, p_{k+1},\allowbreak \dots, p_\ell)$ where $p_1 = \operatorname{portal}_q(g)$ and $p_\ell = \operatorname{portal}_q(g')$.
    Note that after splitting at $P_q$, no node of a portal in the subsequence $(p_1, \dots, p_{k-1})$ lies in the same region as any node of a portal in $(p_{k+1}, \dots, p_\ell)$.
    This clearly holds if $p_{k-1}$ and $p_{k+1}$ lie on opposite sides of $P_q$, as $P_q$ appears only once in the sequence and $M$ is simple.
    Otherwise, if $p_{k-1}$ and $p_{k+1}$ lie on the same side of $P_q$, $P_q$ has to touch a boundary node of $R_q(g)$ that lies on $\Pi$ and $\Pi'$.
    Since we split at that node, we get that $p_{k-1}$ and $p_{k+1}$ lie in different regions and the statement holds.
    Hence, $\operatorname{portal}_q(u)$ and $\operatorname{portal}_q(v)$ either both lie between $P_q$ and $\operatorname{portal}_q(g)$ or between $P_q$ and $\operatorname{portal}_q(g')$.
    From the definition of $P_q$, it follows that $d_{M,q}(u,v) \leq \lceil \frac{d_q}{2} \rceil$.
\hfill $\square$ 
\end{proof}

\begin{corollary}
\label{cor:distance_bound_shortest_path_inside_region}
    Let $\Pi, \Pi'$ be two shortest $gg'$-paths within $M$.
    Further, let $u$ on $\Pi$ and $v$ on $\Pi'$ be nodes in the same region $R \subseteq M$ resulting from splitting $M$ at $P_x, P_y$ and $P_z$. 
    Then, $d_{R,x}(u,v) \leq \lceil \frac{d_x}{2} \rceil$, $d_{R,y}(u,v) \leq \lceil \frac{d_y}{2} \rceil$ and $d_{R,z}(u,v) \leq \lceil \frac{d_z}{2} \rceil$.
\end{corollary}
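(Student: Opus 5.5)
The corollary follows by combining \Cref{lem:distance_bound_shortest_path_same_side} with \Cref{cor:path_convex_inside_tunnel}, one direction $q \in \{x,y,z\}$ at a time.

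Fix $q$. The region $R$ arises from splitting $M$ at all three portals $P_x$, $P_y$, $P_z$, together with the possible additional node splits of \Cref{def:point_shape_decomposition}. Every such split only refines the partition produced by splitting at $P_q$ (and its node split) alone. Hence $R$ is contained in a single region $R'$ of the decomposition obtained by splitting $M$ only at $P_q$. Strictly, $R$ contains copies of nodes, so we identify each copy with its original node and work in $M$. In particular, $u$ and $v$ lie in the same region after splitting $M$ at only $P_q$. This is exactly the hypothesis of \Cref{lem:distance_bound_shortest_path_same_side}, since $u \in \Pi$ and $v \in \Pi'$ with $\Pi, \Pi'$ shortest $gg'$-paths within $M$. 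That lemma gives $d_{M,q}(u,v) \le \lceil d_q/2 \rceil$.

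Next we transfer this bound from $M$ to $R$. Since $u, v \in R$ and $R$ is one of the regions obtained by splitting $M$ at $P_x$, $P_y$ and $P_z$, \Cref{cor:path_convex_inside_tunnel} gives $d_{R,q}(u,v) = d_{M,q}(u,v)$. Chaining the two gives $d_{R,q}(u,v) \le \lceil d_q/2 \rceil$. Doing this for each $q \in \{x,y,z\}$ yields the three inequalities.

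The main point needing care is the first step: that membership in a common region of the full decomposition implies membership in a common region of the decomposition at $P_q$ alone. I would argue this directly. Any path inside $R$ between $u$ and $v$ uses only edges that survive all splits. In particular these edges survive the split at $P_q$ and its associated node split, so the path connects $u$ and $v$ in the single-split graph as well.
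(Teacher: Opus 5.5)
Your proof is correct and is essentially the paper's argument: for each $q$, note that $u,v\in R$ lie in the same region after splitting $M$ at $P_q$ alone, apply \Cref{lem:distance_bound_shortest_path_same_side} to get $d_{M,q}(u,v)\le\lceil d_q/2\rceil$, and transfer the bound to $R$ via \Cref{cor:path_convex_inside_tunnel}. Your refinement argument (a $uv$-path inside $R$ survives the single split at $P_q$ and its node split) spells out a step that the paper states in one line.
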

\begin{proof}
    Since $u, v \in R$, both $u$ and $v$ are on the same sides of $P_x$ within $M$. 
    By \Cref{lem:distance_bound_shortest_path_same_side}, it is $d_{M,x}(u,v) \leq \lceil \frac{d_x}{2} \rceil$.
    As $d_{R,x}(u,v) = d_{M,x}(u,v)$ by Corollary \ref{cor:path_convex_inside_tunnel}, the statement follows.
    The same argument applies for the other two distance cases.
\hfill $\square$ \end{proof}

In order to argue about nodes $v$ that do not lie on shortest $gg'$-paths within $M$, we will consider shortest $gg'$-paths that are \emph{closest} to $v$.
To characterize the structure of these paths, we prove the following two general lemmas that we will use later.

\begin{lemma}
\label{lem:shortest_paths_stay_inside}
    Let $\Gamma = (V_\Gamma, E_\Gamma)$ be a simple triangular grid graph, let $u,v \in V_\Gamma$, and let $\mathcal{U}_{uv}$ be the set of nodes that are on a shortest $uv$-path in $\Gamma$. Then, $\mathcal{U}_{uv}$ is convex.
\end{lemma}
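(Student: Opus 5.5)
The core of the argument is that, in a simple graph, being on a shortest $uv$-path is equivalent to an additive identity, and that identity propagates to every node of a shortest path between two such nodes. The plan rests on \Cref{lem:portal_distances} and \Cref{lem:portal_tree}, together with the triangle inequality in each portal tree.

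The first step is a characterisation of $\mathcal{U}_{uv}$. A node $w$ lies on a shortest $uv$-path if and only if $d_\Gamma(u,w)+d_\Gamma(w,v)=d_\Gamma(u,v)$. Write $d_q$ for $d_{\Gamma,q}$. By \Cref{lem:portal_distances}, $d_\Gamma=\frac12(d_x+d_y+d_z)$, and each $d_q$ is a tree metric by \Cref{lem:portal_tree}, so it satisfies the triangle inequality. Summing over $q\in\{x,y,z\}$ then shows that the identity above holds if and only if
\[
d_q(u,w)+d_q(w,v)=d_q(u,v)\quad\text{for each } q\in\{x,y,z\}.
\]
Because $\mathcal P_q$ is a tree, this per-direction equality says exactly that $\portal_q(w)$ lies on the unique path $\sigma_q$ from $\portal_q(u)$ to $\portal_q(v)$ in $\mathcal P_q$. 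Hence
\[
\mathcal{U}_{uv}=\{\,w\in V_\Gamma \mid \portal_q(w)\in\sigma_q \text{ for all } q\in\{x,y,z\}\,\}.
\]

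The second step proves convexity. Take $a,b\in\mathcal{U}_{uv}$ and an arbitrary shortest $ab$-path $\Pi$ in $\Gamma$, and fix a node $w\in\Pi$. Applying the first step to the pair $a,b$ shows that $\portal_q(w)$ lies on the tree path between $\portal_q(a)$ and $\portal_q(b)$ for each $q$. Both endpoints lie on $\sigma_q$, and in a tree the path between two vertices of a path is contained in that path. So $\portal_q(w)\in\sigma_q$ for every $q$, and therefore $w\in\mathcal{U}_{uv}$. Thus every shortest path between nodes of $\mathcal{U}_{uv}$ stays inside $\mathcal{U}_{uv}$, which is convexity.

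I expect the main obstacle to be the characterisation in the first step, specifically the claim that $d_\Gamma(u,w)+d_\Gamma(w,v)=d_\Gamma(u,v)$ forces equality in every direction separately. This step is where \Cref{lem:portal_distances} is used, and it needs the triangle inequality for each $d_q$, which holds because the $d_q$ are graph metrics on the trees $\mathcal P_q$. If convexity is also meant to include connectivity of $\mathcal{U}_{uv}$, this is immediate, since every node of $\mathcal{U}_{uv}$ lies on some shortest $uv$-path, which is itself contained in $\mathcal{U}_{uv}$.
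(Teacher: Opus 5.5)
Your proof is correct, and it takes a genuinely different route from the paper's.

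\textbf{The paper's argument.} It argues by contradiction with a geometric exchange argument. A shortest $xy$-path $\Pi$ is assumed to leave $\mathcal{U}_{uv}$ after $x'$ and return before $y'$. The boundary of $\mathcal{U}_{uv}$ is asserted to consist of two shortest $uv$-paths $\Pi'$ and $\Pi''$. Then $\Pi$ is shortcut either along one of them (if $x',y'$ lie on the same one), or through $u$ or $v$ (if they lie on different ones, since $\Pi$ must ``get around'' an endpoint).

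\textbf{Your argument.} You instead upgrade \Cref{lem:portal_distances} to an exact characterization. Each $d_{\Gamma,q}$ is a graph metric on the tree $\mathcal{P}_q$ (\Cref{lem:portal_tree}). Hence the defect $d_\Gamma(u,w)+d_\Gamma(w,v)-d_\Gamma(u,v)$ is half a sum of three nonnegative per-direction defects. So $\mathcal{U}_{uv}$ is exactly the set of nodes whose $q$-portals all lie on the tree paths $\sigma_q$, and convexity reduces to the convexity of a path inside a tree.

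\textbf{What your route buys.} It is shorter and fully rigorous given the two cited lemmas. It avoids the planar-topology claims the paper leaves unproved: the two-path structure of the boundary of $\mathcal{U}_{uv}$, and that a detour from $\Pi'$ to $\Pi''$ is strictly longer than the route through $u$ or $v$. In the paper's first case, the replacement along $\Pi'_{x'y'}$ is only guaranteed to be no longer than $\Pi_{x'y'}$. One needs the extra remark that equality would put the detour's interior nodes on a shortest $uv$-path, contradicting the choice of $x',y'$. As a bonus, your characterization is precisely the description of the set $S_M$ that the paper later imports from Padalkin and Scheideler for the amoebot implementation.

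\textbf{What the paper's route buys.} It works directly with paths and uses simplicity of $\Gamma$ only topologically. It needs neither the portal-distance formula nor the tree structure of the portal graphs.
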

\begin{proof}
    We will prove the lemma by contradiction. 
    Let $x,y \in \mathcal{U}_{uv}$ and assume that a shortest $xy$-path $\Pi$ leaves $\mathcal{U}_{uv}$ directly after some node $x'$ and joins $\mathcal{U}_{uv}$ again directly before some node $y'$, while no nodes between $x'$ and $y'$ on $\Pi$ are in $\mathcal{U}_{uv}$. 
    Note that the boundary of $\mathcal{U}_{uv}$ consists of two shortest $uv$-paths $\Pi'$ and $\Pi''$. 
    If $x'$ and $y'$ are on the same path $\Pi'$, then $\Pi_{xx'} \circ \Pi'_{x'y'} \circ \Pi_{y',y}$ is shorter than $\Pi$, which contradicts $\Pi$ being a shortest path.
    If, however, $x'$ is on the path $\Pi'$ and $y'$ is on the other path $\Pi''$, then $\Pi$ eventually has to get around $u$ or $v$.
    Then, either $\Pi_{xx'} \circ \Pi'_{x'u} \circ \Pi''_{uy'} \circ \Pi_{y',y}$ or $\Pi_{xx'} \circ \Pi'_{x'v} \circ \Pi''_{vy'} \circ \Pi_{y',y}$ is shorter than $\Pi$, which again contradicts $\Pi$ being a shortest path.
\hfill $\square$ \end{proof}

\begin{lemma}
\label{lem:closest_path}
    Given a simple triangular grid graph $\Gamma = (V_\Gamma, E_\Gamma)$, let $a,b,v \in V_\Gamma$, and let $\Pi$ be a shortest $ab$-path closest to $v$ with the biggest number of closest points to $v$, i.e., for
    $$
        \hat{\Pi} \coloneqq \argmin_{\Pi' \text{ shortest $ab$-path}}\; \min_{u \in \Pi'} d_\Gamma(u,v)\text{, we obtain}\quad \Pi \in \argmax_{\Pi' \in \hat{\Pi}} |V'_{\Pi'}|
    $$
    where $V'_{\Pi'}$ denotes the set of closest points to $v$ on a path $\Pi'$, i.e., 
    $$
        V'_{\Pi'} \coloneqq \argmin_{u\in\Pi'} d_\Gamma(u,v)\text{.}
    $$
    Further, let $v'_a \coloneqq \argmin_{v'\in V'_\Pi} d_\Gamma(v',a)$ be the closest one of these nodes to $a$ and let\hspace{1pt} $v'_b :=\argmin_{v'\in V'_\Pi} d_\Gamma(v',b)$ be the closest one of these nodes to $b$.
    Both of the following hold:
    \begin{enumerate}
        \item All nodes of $V'_\Pi$ are continuous nodes on the same axis with endpoints $v'_a$ and $v'_b$.
        \item Every shortest $vv'_aa$-path is a shortest $va$-path and every shortest $vv'_bb$-path is a shortest $vb$-path.
    \end{enumerate}  
\end{lemma}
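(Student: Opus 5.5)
The plan is to run everything through the portal-distance formula of \Cref{lem:portal_distances} and the tree structure of the portal graphs from \Cref{lem:portal_tree}. Write $\delta := \min_{u\in\Pi} d_\Gamma(u,v)$, the minimum distance from $v$ to the fixed path $\Pi$.

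\textbf{Step 1: membership of a node $w$ on some shortest $ab$-path.} Since each $\mathcal P_q$ is a tree, $w$ lies on a shortest $ab$-path in $\Gamma$ exactly when, for every $q\in\{x,y,z\}$, $\portal_q(w)$ lies on the unique path from $\portal_q(a)$ to $\portal_q(b)$ in $\mathcal P_q$. This follows from
\[
d_\Gamma(a,w)+d_\Gamma(w,b)=\tfrac12\sum_q \bigl(d_{\Gamma,q}(a,w)+d_{\Gamma,q}(w,b)\bigr)
\]
together with the tree triangle equality in each coordinate. Call the union of the tree paths $I_q$; then $\mathcal U_{ab}$ is the set of nodes whose three portals lie in $I_x,I_y,I_z$.

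\textbf{Step 2: projecting $v$.} In each tree $\mathcal P_q$, let $\pi_q$ be the projection of $\portal_q(v)$ onto $I_q$, i.e.\ the node of $I_q$ closest to it in the tree. Every $w\in\mathcal U_{ab}$ then satisfies $d_{\Gamma,q}(v,w)=d_{\Gamma,q}(v,\pi_q)+d_{\Gamma,q}(\pi_q,w)$. Summing over $q$ gives
\[
d_\Gamma(v,w)=c+\tfrac12\sum_q d_{\Gamma,q}(\pi_q,w),
\]
where $c$ depends only on $v$. The nodes of $\Pi$ closest to $v$ are therefore those minimizing $\sum_q d_{\Gamma,q}(\pi_q,w)$ along $\Pi$.

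\textbf{Claim 1.} Along any shortest $ab$-path, each $q$-portal index is monotone. Hence each term $d_{\Gamma,q}(\pi_q,w)$ is unimodal along $\Pi$: it decreases until the path reaches $\pi_q$ and increases afterwards. The sum of three such functions need not be unimodal a priori, so I will use that each step changes exactly two of the three coordinates, each by $\pm1$. Each step changes the sum by $-2$, $0$, or $+2$, and a step can change a term from decreasing to increasing at most once per coordinate. Combining these facts with the maximality of $|V'_\Pi|$ and the choice of $\Pi$ among the closest paths should show that the minimizers form one contiguous subpath. Contiguity alone does not give collinearity, however. A zero-change step means one coordinate goes down while another goes up, and two consecutive zero steps in different directions would be replaceable by a rerouted step that strictly decreases the sum, or by one that keeps it constant in a straight direction. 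The rerouting uses the rhombus/triangle local moves inside $\mathcal U_{ab}$, and simplicity of $\Gamma$ together with convexity (\Cref{lem:shortest_paths_stay_inside}) guarantees those nodes exist. Either outcome contradicts minimality of $\delta$ or maximality of $|V'_\Pi|$. I expect this to be the main obstacle: handling the boundary of $\Gamma$, so that the straightened alternative path genuinely exists. I would argue it via the portal-tree characterization of Step 1: the candidate replacement nodes have all three portals within the $I_q$, and they exist because portals are connected segments.

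\textbf{Claim 2.} Let $w=v'_a$. Every shortest $vw$-path concatenated with $\Pi_{wa}$ is a shortest $va$-path exactly when $d_\Gamma(v,a)=d_\Gamma(v,w)+d_\Gamma(w,a)$. Coordinatewise, this requires $\portal_q(w)$ to lie on the tree path from $\portal_q(v)$ to $\portal_q(a)$ for each $q$, which holds iff $\portal_q(w)$ lies between $\pi_q$ and $\portal_q(a)$ on $I_q$. Suppose this fails for some $q$. Then $\Pi$ has already passed $\pi_q$ before reaching $w$, and by the step analysis the node just before $w$ on $\Pi$, or a rerouted neighbour, would be at distance at most $\delta$. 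If it is at distance less than $\delta$, this contradicts the minimality of $\delta$; if it is at distance exactly $\delta$, it contradicts the choice of $v'_a$ as the closest element of $V'_\Pi$ to $a$, or the maximality of $|V'_\Pi|$. The statement for $b$ is symmetric, and any shortest $vv'_aa$-path then has length $d_\Gamma(v,a)$, which gives part 2.
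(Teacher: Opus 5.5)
Your reformulation is sound. Steps 1 and 2 are correct: on $\mathcal U_{ab}$ you indeed have $d_\Gamma(v,w)=c+\tfrac12 f(w)$ with $f(w)=\sum_q d_{\Gamma,q}(\pi_q,w)$, and part 2 is equivalent to $\portal_q(v'_a)$ lying between $\portal_q(a)$ and $\pi_q$ on $I_q$ for every $q$. Contiguity is even easier than you suggest. After a $+2$ step, two coordinates have passed their projections and never decrease again, so no $-2$ step can follow; hence the minimizers of $f$ on $\Pi$ are exactly the block between the last $-2$ step and the first $+2$ step.

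The genuine gap is in the two places where you rely on local rerouting, which is the step you yourself flag, and the sketch you give for it is not correct. For collinearity, let $w_1\to w_2\to w_3$ be consecutive zero steps along $u_1\neq u_2$ (necessarily at $60^\circ$), and let $q_3$ be the coordinate changed by both steps. The only other shortest $w_1w_3$-path passes through $w_1+u_2$, and it turns as well, so there is no ``straight'' replacement. Moreover, $f(w_1+u_2)-f(w_2)=-2$ only if the $q_3$-contribution decreases in the first step and increases in the second (i.e.\ $\pi_{q_3}=\portal_{q_3}(w_2)$). It is $0$ when that contribution moves in the same sense in both steps. In that case the rerouted path has the same minimum and the same number of minimizers, so neither the minimality of $\delta$ nor the maximality of $|V'_\Pi|$ is contradicted. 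An additional ingredient is needed, e.g.\ that in a simple grid graph no three pairwise adjacent nodes (here $w_1,w_2,w_1+u_2$) are equidistant from $v$. This can plausibly be derived from the sides of the three corresponding portal-tree edges on which $v$ would have to lie, but it needs simplicity and is absent from your proposal.

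Second, the replacement nodes need not exist. Three portals with ``consistent'' indices do not force a node of $\Gamma$ at their intersection; in an L-shaped corridor the inner corner of the turn is empty. Also, \Cref{lem:shortest_paths_stay_inside} only constrains shortest paths inside $\Gamma$, so it cannot produce such a node. If $w_1+u_2\notin\Gamma$, this point belongs to the outer hole. This is exactly where simplicity must be used to rule the configuration out, for instance by checking on which sides of the cuts induced by the portal-tree edges crossed at $w_1w_2$ and $w_2w_3$ the node $v$ would have to lie.

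The same problem breaks Claim 2. Since $v'_a$ is the first minimizer along $\Pi$, the step into it is a $-2$ step. Its predecessor is therefore at distance $\delta+1$, not at most $\delta$. This step also cannot change the $q$-coordinate, so it runs along the $q$-axis. To get a contradiction you would have to shift the whole straight run along the $q$-axis ending at $v'_a$, together with the preceding $q$-changing step. That again presupposes nodes that may be missing, unless you supply a separate argument based on simplicity.

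For comparison, the paper argues part 1 directly with the ball $B(v)$: a shortcut through its interior when the closest points change axis, and convexity of $B(v)$ for continuity. For part 2 it shows via \Cref{lem:shortest_paths_stay_inside} that $\Pi$ leaves $\mathcal U_{av}$ exactly once, namely at $v'_a$.
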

\begin{proof}
    We start with the proof of Property (1), i.e., we prove that the nodes of $V'_\Pi$ are continuous nodes on the same axis with endpoints $v'_a$ and $v'_b$.
    We denote the nodes of $V'_\Pi$ on $\Pi$ by $p_1, \dots, p_\ell$, where $\ell = |V'_\Pi|$.
    
    First, we prove that the nodes of $V'_\Pi$ are on the same axis by contradiction.
    Assume that the nodes of $V'_\Pi$ are not on the same axis.
    Let $B(v) := \{u \in V_\Gamma \mid d(u,v) \leq d(p_1,v)\}$ be all nodes of distance at most $d(p_1,v)$ from $v$ in $\Gamma$.
    As all nodes of $V'_\Pi$ have the same distance to $v$ and $p_1\in V'_\Pi$, it is $V'_\Pi \subseteq B(v)$ and all nodes $p \in V'_\Pi$ lie on the boundary of $B(v)$.
    Let $p_i,p_j \in V'_\Pi$ be two nodes on different axes.
    By definition of $B(v)$ and since $\Gamma$ is simple, for any two nodes $u_1, u_2 \in B(v)$ that are on different axes, there is a shortest path between $u_1$ and $u_2$ that goes through the interior of $B(v)$.
    In particular, this holds for $p_i$ and $p_j$.
    This, however, contradicts the fact that $\Pi$ is a shortest $ab$-path closest to $v$, as the nodes in the interior of $B(v)$ are closer to $v$ than the nodes on the boundary of $B(v)$. 
    Thus, the nodes of $V'_\Pi$ are on the same axis.

    Next, we prove that the nodes of $V'_\Pi$ are continuous by contradiction.
    Assume that the nodes of $V'_\Pi$ are not continuous, i.e., there are two nodes $p_i,p_j\in V'_\Pi$ such that the subpath $\Pi_{p_ip_j}$ from $p_i$ to $p_j$ of $\Pi$ has nodes not in $V'_\Pi$.
    Since $\Pi$ is a shortest $ab$-path closest to $v$, $\Pi_{p_ip_j}$ can not be in the interior of $B(v)$.
    Therefore, $\Pi_{p_ip_j}$ leave and reenter $B(v)$.
    Since the subgraph of $\Gamma$ induced by $B(v)$ is a simple subgraph of a hexagon, $B(v)$ is convex.
    Therefore, all shortest paths between $p_i$ and $p_j$ stay inside $B(v)$.
    This contradicts with $\Pi$ being a shortest $ab$-path and leaving and reentering $B(v)$.
    Thus, the nodes of $V'_\Pi$ are continuous.

    Finally, note that $v'_a$ (and $v'_b$) are clearly endpoints of the subpath of the nodes of $V'_\Pi$ on $\Pi$, as they are closest to $a$ (and $b$) among the nodes of $V'_\Pi$ by definition.
    
    \medskip

    We now prove Property (2), i.e., we show that every shortest $vv'_aa$-path is a shortest $va$-path and every shortest $vv'_bb$-path is a shortest $vb$-path.
    We will prove the statement for any shortest $vv'_aa$-path. 
    The shortest $vv'_bb$-path-case is analogous.
    In the following, we refer to the nodes of $\Pi$ as $v = p_1, \dots, p_\ell = v'_a$.
    Furthermore, let $\mathcal{U}_{av}$ denote the set containing all nodes on a shortest $av$-path, i.e. $\mathcal{U}_{av} \coloneqq \bigcup_{\text{$\widetilde{\Pi}$ shortest $av$-path}}\widetilde{\Pi}$.
    
    We start by proving by contradiction that $\Pi$ leaves $\mathcal{U}_{av}$ at most once, i.e. there is a unique point $p_i \in \Pi$ such that $p_{i+1} \not\in \mathcal{U}_{av}$.
    Assume $\Pi$ would leave $\mathcal{U}_{av}$ more than once and let $p_i$ be the first leaving node.
    Then there would be reentering node $p_j$ with $j>i$ s.t. $p_{j-1}\not\in\mathcal{U}_{av}$ but $p_j \in \mathcal{U}_{av}$.
    According to Lemma \ref{lem:shortest_paths_stay_inside}, all shortest $p_ip_j$-paths are inside $\mathcal{U}_{av}$.
    Therefore, $\Pi$ could be shortened by replacing its segment $\Pi_{p_ip_j}$ with such a shortest path. 
    This is a contradiction to $\Pi$ being a shortest path.

    We now call the unique point where $\Pi$ leaves $\mathcal{U}_{av}$ $\overline{v'}$.
    Our next goal is to show $\overline{v'} = v'_a$, i.e., to show that $\Pi$ leaves $\mathcal{U}_{av}$ precisely at the closest point to $v$ on $\Pi$ which is also closest to $a$.
    To this end, we show that $\overline{v'}$ can not be closer to $a$ than $v'_a$ and can not be further away from $a$ than $v'_a$.
    
    We start by proving the former by contradiction.
    Assume $d(\overline{v'},a) < d(v'_a,a)$. Since $\Pi$ is a shortest $ab$-path, the next node in $\Pi$ after $\overline{v'}$ must be closer to $v'_a$ than $\overline{v'}$. 
    As $v'_a$ is the first node closest to $v$ on $\Pi$ starting from $\overline{v'}$ and $\Pi$ contains the most closest points to $v$ among the paths closest to $v$, it must also be closer to $v$. 
    Hence it is part of a shortest $\overline{v'}v$-path and, since $\overline{v'} \in \mathcal{U}_{av}$, it is part of $\mathcal{U}_{av}$.
    This is a contradiction to the definition of $\overline{v'}$, as $\overline{v'}$ would not be the point where $\Pi$ leaves $\mathcal{U}_{av}$.

    Assume $d(\overline{v'},a) > d(v'_a,a)$. By definition of $v'_a$, the node after $v'_a$ on $\Pi$ has at least the same distance to $v$ and a larger distance to $a$.
    Therefore, it can not be part of $\mathcal{U}_{av}$, as a shortest $vv'_aa$-path is shorter than a $va$-path via that node.
    This, however, contradicts the definition of $\overline{v'}$, as $\overline{v'}$ would not be the point where $\Pi$ leaves $\mathcal{U}_{av}$.

    Hence $d(\overline{v'},a) = d(v'_a,a)$. Since both $\overline{v'}$ and $v'_a$ are on the shortest $ab$-path $\Pi$, this yields $\overline{v'} = v'_a$. 
    By definition of $\overline{v'}$, we conclude $v'_a \in \mathcal{U}_{av}$, i.e. $v'_a$ is contained in the set of shortest paths from $a$ to $v$. 
    Thus, a shortest $vv'_aa$-path is a shortest $va$-path.
\hfill $\square$ \end{proof}

We are now ready to prove that the regions created by splitting $M$ at $P_x$, $P_y$ and $P_z$ are convex.
Coy \emph{et al.} \cite{DBLP:journals/tcs/CoyCSSW24} exploited that in a square grid graph, for any node $u$ inside a region $R$, there is a node $u'_g \in R$, such that $u'_g$ is on a shortest $gg'$-path closest to $u$ within $M$ and on a shortest $ug$-path.
For triangular grid graphs, this property does not hold since there can be regions that are not crossed by any shortest $gg'$-path within $M$.
Thus, such a $u'_g$ clearly cannot exist for nodes $u$ from such a region.
Moreover, even if a region $R$ is crossed by a shortest $gg'$-path $\Pi$ within $M$, such a $u'_g$ doesn't necessarily exist, since by \Cref{lem:closest_path}, a shortest $ug$-path only connects with $\Pi$ at the node \textit{closest} to $g$ among the nodes of $\Pi$ closest to $u$.
This closest node, however, might already be outside of $R$.
Therefore, we have to explore in more detail what happens in the area \emph{between} region $R$ and where a shortest $gg'$-path meets the shortest $ug$- (or $vg'$-)path.
Ultimately, we reach the following result, which concludes the decomposition.

\begin{lemma}
\label{lem:point_shape_tunnel_path_convex}
    Let $R \subseteq M$ be a region resulting from splitting $M$ at $P_x, P_y$ and $P_z$.
    Then, $R$ is convex.
\end{lemma}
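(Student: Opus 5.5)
We argue by contradiction. Let $u,v \in R$ and let $\Pi$ be a shortest $uv$-path in $\Gamma$ that leaves $R$. By \Cref{lem:path_convex_inside_tunnel}, a shortest $uv$-path inside $M$ stays in $R$, so $d_R(u,v)=d_M(u,v)$. Hence the only bad case is that $\Pi$ leaves $M$ through one gate and reenters through the other. By the crucial property of $g$ and $g'$, we may assume w.l.o.g. that $\Pi$ is a $u\,g\,g'\,v$-path (or $u\,g'\,g\,v$). Its length is therefore at least $d_M(u,g)+d_\Gamma(g,g')+d_M(g',v)$. Moreover $d_\Gamma(g,g')\ge 1$, since $g\ne g'$ (\Cref{lem:g_not_on_G} guarantees that the gates are genuinely separated). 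The goal is to show
\[
d_M(u,v) \le d_M(u,g)+d_M(g',v),
\]
which contradicts $\Pi$ being strictly shorter than every path inside $M$.

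To compare these distances, I would work with the portal-distance formula of \Cref{lem:portal_distances}, applied in $M$, which is simple. It suffices to show, for each direction $q\in\{x,y,z\}$,
\[
d_{M,q}(u,v)\le d_{M,q}(u,g)+d_{M,q}(g',v),
\]
with strict slack in at least one direction, or a parity argument to absorb the missing $1$. To get this, I route through closest shortest $gg'$-paths. By \Cref{lem:closest_path}, choose a shortest $gg'$-path $\Pi_u$ closest to $u$ and let $u'_g$ be the node in $V'_{\Pi_u}$ closest to $g$. Then $d_M(u,g)=d_M(u,u'_g)+d_M(u'_g,g)$. Define $v'_{g'}$ symmetrically for $v$ and $g'$. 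In the portal trees, the triangle inequality gives
\[
d_{M,q}(u,v)\le d_{M,q}(u,u'_g)+d_{M,q}(u'_g,v'_{g'})+d_{M,q}(v'_{g'},v).
\]
The remaining task is to bound the middle term $d_{M,q}(u'_g,v'_{g'})$ by the portion of $d_{M,q}(u'_g,g)+d_{M,q}(g',v'_{g'})$ that is not already used.

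The main obstacle is the situation described before the lemma: $u'_g$ or $v'_{g'}$ may lie outside $R$, and $R$ may not be crossed by any shortest $gg'$-path at all. In that case \Cref{cor:distance_bound_shortest_path_inside_region} cannot be applied directly. My plan is a case analysis on which side of each $P_q$ the points $u'_g$ and $v'_{g'}$ fall.

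If $u'_g$ and $v'_{g'}$ lie on the same side of $P_q$ as $R$, then \Cref{lem:distance_bound_shortest_path_same_side} gives $d_{M,q}(u'_g,v'_{g'})\le\lceil d_q/2\rceil$. In the portal tree, $g$ is at distance at least $\lceil d_q/2\rceil$ from everything on the far side, which yields the needed inequality.

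If instead $u'_g$ lies beyond $P_q$, then the shortest $u u'_g$-path must cross $P_q$. In the tree $\mathcal P_q$, the portal path from $u$ to $g$ then passes through $P_q$, so $d_{M,q}(u,g)\ge d_{M,q}(u,P_q)+\lceil d_q/2\rceil$. Since $v$ lies on the same side of $P_q$ as $u$, the same reasoning bounds $d_{M,q}(u,v)$ via the subtree containing $R$.

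Summing over the three directions and halving via \Cref{lem:portal_distances} should give $d_M(u,v)\le d_M(u,g)+d_M(g',v) < |\Pi|$, which is the desired contradiction. I expect the difficult part to be getting the bookkeeping right in the mixed case, where one direction's split separates $u$ from its closest-path point and another direction's split does not. In particular, the boundary-node splits from \Cref{def:point_shape_decomposition} are needed to ensure the "same side" property when $R_q(g)=R_q(g')$.
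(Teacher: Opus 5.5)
\paragraph{The target inequality is false.} Your plan needs $d_M(u,v)\le d_M(u,g)+d_M(g',v)$, via the per-direction inequalities $d_{M,q}(u,v)\le d_{M,q}(u,g)+d_{M,q}(g',v)$, with $d_\Gamma(g,g')\ge 1$ supplying strictness. This fails in general because \Cref{def:point_shape_decomposition} is asymmetric: $P_q$ is at $q$-distance $\lceil d_q/2\rceil$ from $g$ but only $\lfloor d_q/2\rfloor$ from $g'$.

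Write nodes as $(a,b)$ with neighbours $(a\pm1,b)$, $(a,b\pm1)$, $(a\pm1,b\mp1)$. Then $x$-, $y$-, $z$-portals are the segments of constant $b$, $a$, $a+b$. Take $M=\{(a,b):0\le a\le 3,\ 0\le b\le 2\}$, $g=(0,0)$, $g'=(3,2)$. Then $d_x=2$, $d_y=3$, $d_z=5$. The portals $P_x=\{b=1\}$, $P_y=\{a=2\}$, $P_z=\{a+b=3\}$ all pass through $w=(2,1)$. This node lies in the region $R=\{(a,b)\in M: a\le 2,\ b\le 1,\ a+b\le 3\}$ containing $g$. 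For $u=g$, $v=w$ you get $d_M(u,v)=3$ but $d_M(u,g)+d_M(g',v)=0+2=2$; the $y$- and $z$-inequalities each fail by one.

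So there is no slack to find, and no parity argument inside $M$ can absorb the missing $1$. With only $d_\Gamma(g,g')\ge 1$, your chain yields $|\Pi|\ge d_M(u,v)$, not $|\Pi|>d_M(u,v)$. You need the ingredient the paper uses at the end. Since $g\notin G$ and $g'\notin G'$ (\Cref{lem:g_not_on_G}), the $g$--$g'$ segment of $\Pi$ that leaves $T$ has length at least $2$, not merely $1$. It then suffices to prove $d_M(u,v)\le d_M(u,g)+d_M(g',v)+1$. The paper's estimate delivers exactly this, with an error term of $\tfrac12$ times the number of odd values among $d_{q'},d_{q''}$.

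\paragraph{Where the case analysis misses this, and how to close it.} Your ``same side'' case tacitly assumes that $P_q$ separates $R$ from $g$. When $P_q$ separates $R$ from $g'$ instead, as in the example, routing through $u'_g$ and $v'_{g'}$ loses $\lceil d_q/2\rceil-\lfloor d_q/2\rfloor$.

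Once a deficit of $1$ is allowed, a per-direction argument in your spirit does close. For each $q$, the split at $P_q$ (together with $b_q$) ensures that the $q$-portal-tree path from $R$ to $g$, or the one from $R$ to $g'$, runs through $P_q$. In the first case, $d_{M,q}(u,v)\le d_{M,q}(u,P_q)+\lfloor d_q/2\rfloor+d_{M,q}(g',v)\le d_{M,q}(u,g)+d_{M,q}(g',v)$. In the second case, routing through $g$ and $P_q$ gives $d_{M,q}(u,v)\le d_{M,q}(u,g)+d_{M,q}(g',v)+\lceil d_q/2\rceil-\lfloor d_q/2\rfloor$. Summing over $q$, halving with \Cref{lem:portal_distances}, and using integrality gives the $+1$ bound. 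Together with the length-$2$ detour this yields $|\Pi|>d_R(u,v)$. The paper reaches the same deficit differently, via the exit points $u'\in P_q$, $v'\in P_{q'}$ and property $(\Delta)$ for the third direction.
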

\begin{proof}
    \begin{figure}[!ht]
      \centering
      \subfloat{\label{fig:lemma56:a}
      \includegraphics[width=0.7\textwidth]{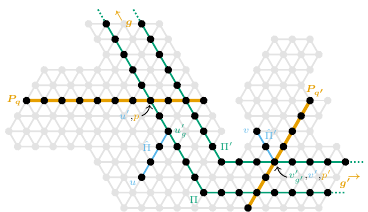}}\quad
      \subfloat{\label{fig:lemma56:b}
      \includegraphics[width=0.7\textwidth]{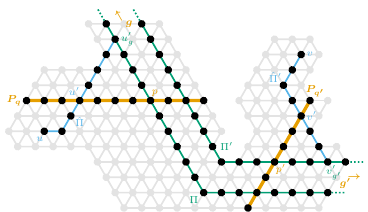}}\quad
      \caption{Examples of the notation used in the proof of Lemma \ref{lem:point_shape_tunnel_path_convex}. Region $R$ intersects with $P_x = P_q$ and $P_y = P_{q'}$. In the top figure, both $u'_g, v'_{g'} \in R$. In the bottom figure, both $u'_g, v'_{g'} \not\in R$.}
      \label{fig:lemma56}
    \end{figure}
    Let $u,v \in R$ and let $\widetilde{\Pi}$ be a $uv$-path that is partially outside of $R$.
    We will prove that $|\widetilde{\Pi}| > d_R(u,v)$ by showing that there is another path that stays within $R$ and is shorter than $\widetilde{\Pi}$. 
    We will assume w.l.o.g. that neither $g$ nor $g'$ lie in $R$, as this is the more interesting case.
    The other case can be argued analogously.
    
    First, assume that $\widetilde{\Pi}$ leaves $R$ but not $M$.
    Then, as argued in Lemma \ref{lem:path_convex_inside_tunnel}, $\widetilde{\Pi}$ can be shortened by staying within $M$.
    Thus, $|\widetilde{\Pi}| > d_R(u,v)$ in this case.
    Therefore, we only have to consider paths $\widetilde{\Pi}$ that leave $M$.
    Furthermore, we only have to consider paths $\widetilde{\Pi}$ that connect $u$ to one of the two gates of $M$ and $v$ to the other gate of $M$ via shortest paths, as no shortest path leaves and reenters $M$ through the same gate.
    W.l.o.g. assume that $\widetilde{\Pi}$ first connects $u$ with $g$ on a shortest path and later reenters $R$ to connect $g'$ with $v$ on a shortest path.
    Let $\Pi$ and $\Pi'$ each be a shortest $gg'$-path within $M$ that is closest to $u$ and $v$ respectively and that also has the biggest number of closest points to $u$ and $v$ respectively.
    Now, let $u'_g$ be the closest node to $g$ on $\Pi$ among the nodes of $\Pi$ closest to $u$ (see \Cref{fig:lemma56}).
    Note that $u'_g \not\in R$ may hold.
    Further, let $\hat{\Pi}$ be a shortest path from $u$ to $u'_g$ that stays inside $R$ as long as possible.
    By \Cref{lem:closest_path}, we know that $\hat{\Pi} \circ \Pi_{u'_gg}$ is a shortest $ug$-path.
    Let $u'$ be the last node on $\hat{\Pi} \circ \Pi_{u'_gg}$ within $R$.
    Since $g \not\in R$, $u'$ has to lie on (at least) one of the portals $P_x, P_y$ or $P_z$.
    Let $P_q$ with $q \in \{x,y,z\}$ be a portal such that $u' \in P_q$ (if there is more than one such portal, pick one arbitrarily).
    We define analogous nodes on $v$'s side: let $v'_{g'}$ be the closest node to $g'$ on $\Pi'$ among the nodes of $\Pi'$ closest to $v$ and let $\hat{\Pi}'$ be a shortest path from $v$ to $v'_{g'}$ that stays inside $R$ as long as possible.
    Let $v'$ be the last node within $R$ on the shortest $vg$-path $\hat{\Pi}' \circ \Pi'_{v'_{g'}g'}$.
    Note that $v'$ has to lie on (at least) one of the portals $P_x, P_y$ or $P_z$ that is different from $P_q$, as otherwise one of the paths $\hat{\Pi} \circ \Pi_{u'_gg}$ or $\hat{\Pi}' \circ \Pi'_{v'_{g'}g'}$ would cross $P_q$ twice which contradicts them being shortest paths.
    Hence, let $P_{q'}$ with $q' \in \{x,y,z\}$ be a portal such that $v' \in P_{q'}$ and $q \not= q'$ (if there is more than one such portal, pick one arbitrarily).
    We denote the remaining direction by $q'' \in \{x,y,z\} \setminus\{q,q'\}$.
    See \Cref{fig:lemma56} for an example.

    With these definitions in place, we will first argue that $u'$ and $u'_g$ have to lie on the same side of $P_{q''}$.
    This is clearly true if $u'_g \in R$, since we have $u' \in R$ by definition and all nodes of the same region lie on the same side of all portals.
    It remains to consider the case that $u'_g \not\in R$.
    Let $p$ be the node in $\Pi \cap P_q$ that is closest to $u'$.
    Since $u'_g$ is a closest node on $\Pi$ to $u$ and $u'$ is on a shortest $uu'_g$-path, $u'_g$ is also a closest node on $\Pi$ to $u'$.
    Further, since $p$ is the closest node to $u'$ in $\Pi \cap P_q$, since $\Pi$ maximizes the number of closest points to $u$ and since $M$ is simple, $p$ is also a closest node on $\Pi$ to $u'$.
    Recall that by \Cref{lem:closest_path}, the subpath $\Pi_{u'_gp}$ is a straight line.
    This implies that $\hat{\Pi}_{u'u'_g}$, $\Pi_{u'_gp}$ and the path between $p$ and $u'$ on $P_q$ together form an equilateral triangle.
    Now, assume for contradiction that $u'$ and $u'_g$ lie on different sides of $P_{q''}$, i.e. $P_{q''}$ cuts through that triangle between $u'$ and $u'_g$.
    Then, $\hat{\Pi}_{u'u'_g}$ is a straight line on the $q'$-axis and $\Pi_{u'_gp}$ is a straight line on the $q''$-axis.
    Since $\Pi$ is a shortest $gg'$-path, one of the subpaths $\Pi_{gu'_g}$ and $\Pi_{pg'}$ has to intersect $P_{q''}$.
    Let $a \in \Pi \cap P_{q''}$ and let $b \in \hat{\Pi}_{u'u'_g} \cap P_{q''}$.
    If $a \in \Pi_{gu'_g}$, then we can identify a shortest $ap$-path by traversing $P_{q''}$ and $P_q$.
    Hence, we can replace the subpath $\Pi_{ap}$ with this new path to obtain another shortest $gg'$-path.
    However, since $b$ lies on this path and $b$ is closer to $u'$ than $u'_g$ (and thus also closer to $u$ than $u'_g$), this contradicts $\Pi$ being a shortest $gg'$-path closest to $u$.
    Similarly, if $a \in \Pi_{pg'}$, we can identify a shortest $u'_ga$-path by traversing $\hat{\Pi}_{u'_gb}$ and $P_{q''}$.
    This again yields a shortest $gg'$-path that is closer to $u$ than $\Pi$.
    We reach a contradiction in both cases and conclude that indeed both $u'$ and $u'_g$ lie on the same side of $P_{q''}$.
    Analogously, we can argue that both $v'$ and $v'_{g'}$ lie on the same side of $P_{q''}$.
    Since both $u', v' \in R$ by definition, we get that $u'_g$ and $v'_{g'}$ lie on the same side of $P_{q''}$.
    We call this property ($\Delta$).

    We will now upper bound the $q$-, $q'$- and $q''$-distances between $u'$ and $v'$ in $R$.
    We begin with the $q$-direction.
    First, note that it is $d_{M,q}(u',g') = d_{M,q}(p,g')$ because both $u',p \in P_q$.
    Moreover, it is $d_{M,q}(p, g') \leq \lceil \frac{d_q}{2} \rceil$ by \Cref{lem:distance_bound_shortest_path_same_side} since both $p$ and $g'$ lie on a shortest $gg'$-path and are on the same side of $P_q$.
    Combining both facts we get $d_{M,q}(u',g') \leq \lceil \frac{d_q}{2} \rceil$.
    Altogether, we conclude:
    \[d_{R,q}(u',v') \overset{\text{Cor. } \ref{cor:path_convex_inside_tunnel}}{=} d_{M,q}(u',v') \leq d_{M,q}(u',g') + d_{M,q}(g', v') \leq \lceil \frac{d_q}{2} \rceil + d_{M,q}(v',g').\]

    We bound the $q'$-distance between $u'$ and $v'$ in $R$ similarly.
    Since both $v',p' \in P_{q'}$, we get $d_{M,q'}(g,v') = d_{M,q'}(g,p')$.
    Moreover, it is $d_{M,q'}(g,p') \leq \lceil \frac{d_{q'}}{2} \rceil$ by \Cref{lem:distance_bound_shortest_path_same_side} since both $p'$ and $g$ lie on a shortest $gg'$-path and are on the same side of $P_{q'}$.
    Combining both facts we get $d_{M,q'}(g,v') \leq \lceil \frac{d_{q'}}{2} \rceil$.
    Altogether, we conclude:
    \[d_{R,q'}(u',v') \overset{\text{Cor. }\ref{cor:path_convex_inside_tunnel}}{=} d_{M,q'}(u',v') \leq d_{M,q'}(u',g) + d_{M,q'}(g, v') \leq d_{M,q'}(u',g) + \lceil \frac{d_{q'}}{2} \rceil.\]
    Finally, for the $q''$-distance between $u'$ and $v'$ in $R$, we know by property $(\Delta)$ that both $u'_g$ and $v'_{g'}$ lie on the same side of $P_{q''}$.
    Using \Cref{lem:distance_bound_shortest_path_same_side}, we get:
    \begin{align*}
        d_{R,q''}(u',v') \overset{\text{Cor. }\ref{cor:path_convex_inside_tunnel}}{=} d_{M,q''}(u',v') &\leq d_{M,q''}(u',u'_g) + d_{M,q''}(u'_g, v'_{g'}) + d_{M,q''}(v'_{g'},v')\\ &\overset{\text{Lem. }\ref{lem:distance_bound_shortest_path_same_side}}{\leq} d_{M,q''}(u',u'_g) + \lceil \frac{d_{q''}}{2} \rceil + d_{M,q''}(v'_{g'},v').
    \end{align*}
    We call the three inequalities above property (A).

    Next, we lower bound $d_M(u',g) + d_M(v',g')$.
    Since $u' \in P_q$ and $v' \in P_{q'}$, it is $d_{M,q}(u',g) \geq \lceil \frac{d_q}{2} \rceil$ and $d_{M,q'}(v',g') \geq \lfloor \frac{d_{q'}}{2} \rfloor$ by the definitions of $P_q$ and $P_{q'}$.
    We get the following lower bounds for the $q$- and $q'$-directions:
    \begin{align*}
        d_{M,q}(u',g) + d_{M,q}(v',g') &\geq \lceil \frac{d_q}{2} \rceil + d_{M,q}(v',g')\\
        d_{M,q'}(u',g) + d_{M,q'}(v',g') &\geq d_{M,q'}(u',g) + \lfloor \frac{d_{q'}}{2} \rfloor   
    \end{align*}
    Furthermore, we know by property $(\Delta)$ that one of the shortest paths $\Pi_{u'_gg}$ or $\Pi'_{v'_{g'}g'}$ crosses $P_{q''}$.
    Therefore, it is $d_{M,q''}(u'_g,g) + d_{M,q''}(v'_{g'},g') \geq \lfloor \frac{d_{q''}}{2} \rfloor$ by the definition of $P_{q''}$.
    We get the following lower bound for the $q''$-direction:
    \begin{align*}
        d_{M,q''}(u',g) + d_{M,q''}(v',g') &= d_{M,q''}(u',u'_g) + d_{M,q''}(u'_g, g) + d_{M,q''}(v',v'_{g'}) +\\&\hspace{6.1cm} d_{M,q''}(v'_{g'},g')\\ &\geq d_{M,q''}(u',u'_g) + \lfloor \frac{d_{q''}}{2} \rfloor + d_{M,q''}(v',v'_{g'})
    \end{align*}

    We call the three inequalities above property (B).

    Plugging it all together, we get for the part $\widetilde{\Pi}_M$ of $\widetilde{\Pi}$ inside $M$:
    \begin{align*}
   		|\widetilde{\Pi}_M|   &\geq d_M(u,g) + d_M(v,g')\\
   		&= d_R(u,u') + d_M(u',g) + d_R(v,v') + d_M(v',g')\\
   		&\overset{(B),\text{Lem. }\ref{lem:portal_distances}}{\geq} d_R(u,u') + d_R(v,v') + \frac{1}{2}\Big( \lceil \frac{d_q}{2} \rceil + d_{M,q}(v',g') +\\ &\hspace{4.5cm}d_{M,q'}(u',g) + \lfloor \frac{d_{q'}}{2} \rfloor +\\ &\hspace{4.5cm} d_{M,q''}(u',u'_g) + \lfloor \frac{d_{q''}}{2} \rfloor + d_{M,q''}(v'_{g'},v') \Big)\\
   		&= d_R(u,u') + d_R(v,v') + \frac{1}{2}\Big( \lceil \frac{d_q}{2} \rceil + d_{M,q}(v',g') +\\ &\hspace{4.5cm} d_{M,q'}(u',g) + \lceil \frac{d_{q'}}{2} \rceil -\mathbbm{1}_{\text{[$d_{q'}$ odd]}} +\\ &\hspace{4.5cm} d_{M,q''}(u',u'_g) + \lceil \frac{d_{q''}}{2} \rceil - \mathbbm{1}_{\text{[$d_{q''}$ odd]}} +\\&\hspace{8.2cm} d_{M,q''}(v'_{g'},v') \Big)\\
   		&= d_R(u,u') + d_R(v,v') + \frac{1}{2}\Big( \lceil \frac{d_q}{2} \rceil + d_{M,q}(v',g') +\\ &\hspace{4.5cm} d_{M,q'}(u',g) + \lceil \frac{d_{q'}}{2} \rceil +\\ &\hspace{4.5cm} d_{M,q''}(u',u'_g) + \lceil \frac{d_{q''}}{2} \rceil + d_{M,q''}(v'_{g'},v') \Big)\\ &\hspace{3.6cm} -\frac{\mathbbm{1}_{\text{[$d_{q'}$ odd]}} + \mathbbm{1}_{\text{[$d_{q''}$ odd]}}}{2}\\
   		&\overset{(A)}{\geq} d_R(u,u') + d_R(v,v') + \frac{1}{2}\Big(d_{R,q}(u',v') + d_{R,q'}(u',v') + d_{R,q''}(u',v')\Big)\\
   		&\hspace{3.6cm}- \frac{\mathbbm{1}_{\text{[$d_{q'}$ odd]}} + \mathbbm{1}_{\text{[$d_{q''}$ odd]}}}{2}\\
   		&\overset{Lem. \ref{lem:portal_distances}}{=} d_R(u,u') + d_R(v,v') + d_R(u',v') - \frac{\mathbbm{1}_{\text{[$d_{q'}$ odd]}} + \mathbbm{1}_{\text{[$d_{q''}$ odd]}}}{2}\\
   		&\geq d_R(u,v) - \frac{\mathbbm{1}_{\text{[$d_{q'}$ odd]}} + \mathbbm{1}_{\text{[$d_{q''}$ odd]}}}{2}
    \end{align*}
    Note that we lower bounded the part of $\widetilde{\Pi}$ inside $M$. 
    By \Cref{lem:g_not_on_G}, we know that the distance between $g$ and $g'$ outside of $M$ is at least $2$.
    Hence, we have $|\widetilde{\Pi}| \geq |\widetilde{\Pi}_M| + 2 > d_R(u,v)$.
    We conclude that $R$ is indeed convex.
\hfill $\square$ 
\end{proof}


\section{Implementation in the Amoebot Model}

In this section, we show how we compute the convex decomposition in the geometric amoebot model \cite{DBLP:journals/dc/DaymudeRS23,DBLP:conf/spaa/DerakhshandehDGRSS14} with reconfigurable circuits \cite{DBLP:journals/jcb/FeldmannPSD22}, which we describe in \Cref{sec:amoebot:model}.
In \Cref{sec:amoebot:algorithms,sec:amoebot:global_maxima}, we introduce algorithms from previous work that we will use as subroutines.
Finally, in \Cref{sec:amoebot:decomposition}, we present our convex decomposition algorithm.


\subsection{Geometric Amoebot Model with Reconfigurable Circuits}
\label{sec:amoebot:model}

We now formally introduce the \emph{geometric amoebot model}.
We will explain the model to a level of detail that is sufficient to understand the results of this paper.
For all other (unused) features of the model, e.g., movements, we refer to \cite{DBLP:journals/dc/DaymudeRS23,DBLP:conf/spaa/DerakhshandehDGRSS14}.
The model places a set of $n$ anonymous finite state machines called \emph{amoebots} with constant memory on the infinite regular triangular grid graph $G_\Delta = (V_\Delta, E_\Delta)$.
Each amoebot occupies one node and every node is occupied by at most one amoebot.
We assume that all amoebots have the same compass orientation (it defines one of its incident edges in $G_\Delta$ as the eastern direction) and chirality.%
\footnote{This is reasonable under the considered communication model since Feldmann \emph{et al.} \cite{DBLP:journals/jcb/FeldmannPSD22} showed that all amoebots are able to quickly come to an agreement.}
Let the \emph{amoebot structure} $V_\Gamma \subseteq V_\Delta$ be the set of nodes occupied by the amoebots.
By abuse of notation, we identify amoebots with their nodes.
We assume that $G_\Gamma = (V_\Gamma, E_\Gamma)$ is connected, where $G_\Gamma = G|_{V_\Gamma}$ is the graph induced by $V_\Gamma$.

We utilize the \emph{reconfigurable circuit extension} by Feldmann \emph{et al.} \cite{DBLP:journals/jcb/FeldmannPSD22} as our communication model.
In this extension, each edge between two neighboring amoebots $u$ and $v$ is replaced by $c$ edges called \emph{external links} with endpoints called \emph{pins}, for some constant $c \geq 1$ that is the same for all amoebots.
For each of these links, one pin is owned by $u$ while the other pin is owned by $v$.
In this paper, we assume that neighboring amoebots have a common labeling of their incident external links.

Each amoebot $u$ \emph{partitions} its \emph{pin set} $P(u)$ into a collection $\mathcal C(u)$ of pairwise disjoint subsets such that the union equals the pin set, i.e., $P(u) = \bigcup_{C \in \mathcal C(u)} C$.
We call $\mathcal C(u)$ the \emph{pin configuration} of $u$ and $C \in \mathcal C(u)$ a \emph{partition set} of $u$.
Let $\mathcal C = \bigcup_{u \in S} \mathcal C(u)$ be the collection of all partition sets in the system.
Two partition sets are \emph{connected} if and only if there is at least one external link between those sets.
Let $L$ be the set of all connections between the partition sets in the system.
Then, we call $H=(\mathcal C,L)$ the \emph{pin configuration} of the system and any connected component $C$ of $H$ a \emph{circuit}.
An amoebot is part of a circuit if and only if the circuit contains at least one of its partition sets.
A priori, an amoebot $u$ may not know whether two of its partition sets belong to the same circuit or not since initially it only knows $\mathcal C(u)$.

Each amoebot $u$ can send a primitive signal (a \emph{beep}) via any of its partition sets $C \in \mathcal C(u)$ that is received by all partition sets of the circuit containing $C$ at the beginning of the next round.
The amoebots are able to distinguish between beeps arriving at different partition sets.
More specifically, an amoebot receives a beep at partition set $C$ if at least one amoebot sends a beep on the circuit belonging to $C$, but the amoebots neither know the origin of the signal nor the number of origins.

We assume the fully synchronous activation model, i.e., the time is divided into synchronous rounds, and every amoebot is active in each round.
On activation, each amoebot may update its state, reconfigure its pin configuration, and activate an arbitrary number of its partition sets.
The beeps are propagated on the updated pin configurations.
The time complexity of an algorithm is measured by the number of synchronized rounds required by it.



\subsection{Subroutines from Prior Work}
\label{sec:amoebot:algorithms}


In this section, we list algorithms from previous work on the model that we will use as a blackbox in our decomposition algorithm.



\begin{theorem}[Adapted from \cite{DBLP:journals/nc/PadalkinSW24}]
\label{th:leader_election}
    Let $C_1, \dots, C_m$ be sets of candidates such that each set $C_i$ is connected by a unique circuit $\mathcal C_i$.
    The amoebot structure can elect a leader from each set of candidates within $\Theta(\log n)$ rounds w.h.p.
\end{theorem}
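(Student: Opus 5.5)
Since the statement is marked as adapted from \cite{DBLP:journals/nc/PadalkinSW24}, the plan is to re-derive it from the standard randomized elimination scheme for beeping circuits, run in parallel and independently on every circuit $\mathcal C_i$. Initially every amoebot of $C_i$ is an active candidate. The algorithm proceeds in phases of a constant number of rounds. In each phase, every active candidate flips a fair coin. Candidates that flip heads beep on $\mathcal C_i$, and candidates that flip tails listen. A tails-candidate that hears a beep becomes passive. A phase in which nobody flips heads leaves everyone silent, so nobody is eliminated; this guarantees that at least one candidate always survives. Because the circuits $\mathcal C_1,\dots,\mathcal C_m$ are distinct, they can be assigned disjoint partition sets, so beeps on one circuit do not interfere with another. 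All sets therefore run the scheme simultaneously, and the round count is a maximum over sets rather than a sum.

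For correctness and termination, I would argue as follows. In any phase where at least two candidates remain active, with constant probability at least one flips heads and at least one flips tails, and then the tails-candidates are eliminated. Moreover, a standard calculation shows that the expected number of surviving candidates is at most roughly a constant fraction of the current number, plus a constant. A Chernoff bound over $\Theta(\log n)$ phases shows that the active set drops to one within $c\log n$ phases with probability $1-n^{-c'}$. A union bound over all $m\le n$ sets (any candidate set has at most $n$ amoebots) keeps the failure probability polynomially small.

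Two subtleties remain. First, the candidates do not know when a single candidate remains. I would fix the number of phases to $\Theta(\log n)$ in advance; since the amoebots have constant memory and cannot count to $\log n$, the count is maintained with a global synchronization signal. Concretely, the whole structure repeatedly checks in $O(1)$ rounds whether two candidates in the same set disagree: run a phase, then interpret ``both heads and tails occurred'' as progress. Second, to certify that a unique leader remains, I would use the known trick of running the competition until $\Theta(\log n)$ consecutive phases show no split. This is equivalent to a coin-flip race that each survivor can detect on its circuit, and the end of the race is agreed globally via an OR-beep on a global circuit.

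The main obstacle is this termination detection with constant memory. I would follow \cite{DBLP:journals/nc/PadalkinSW24} and take as the stopping rule that a single global ``still two candidates flipping differently'' circuit beep be absent for a run of phases. The length of that run is controlled by geometric trials: each unresolved set produces a split with constant probability per phase, which gives the $\Theta(\log n)$ bound w.h.p. The matching lower bound is the trivial one inherent in the ``$\Theta$'' of the claim.
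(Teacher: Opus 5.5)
Your elimination scheme is the right core: heads beep on $\mathcal C_i$, tails-candidates that hear a beep withdraw, all circuits run in parallel, and a union bound covers the $O(n)$ sets. The gap is termination, which you correctly flag as the main obstacle but never resolve. Amoebots have constant memory and do not know $n$. Your proposal gives no mechanism by which they can ``fix the number of phases to $\Theta(\log n)$'' or detect a run of $\Theta(\log n)$ split-free phases. A global OR-beep tells everyone whether some set split in the current phase, but it carries no counter. So the one concrete stopping rule you state can only use a run of constant length $r$: stop once the global ``some set still splits'' beep has been absent for $r$ phases. That rule fails with constant probability. Suppose one set still has two candidates and all other sets are resolved. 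Then each phase is silent with probability $1/2$, so the algorithm halts with two leaders in that set with probability at least $2^{-r}$. Your geometric-trials argument bounds the time until every set \emph{is} resolved. It says nothing about the probability of stopping \emph{before} that, which is exactly what w.h.p.\ correctness requires.

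What is missing is a constant-memory way to produce a window of $\Theta(\log n)$ phases w.h.p., using the randomness of all $n$ amoebots rather than only the candidates. One standard device works as follows. Set up a separate global circuit, which is possible since $\Gamma$ is connected. Every amoebot keeps a flag that it clears at its first tails and beeps on this circuit while the flag is set. The elimination phases run as long as this circuit carries a beep. The number of phases is then the maximum of $n$ independent geometric random variables. It lies between $\frac{1}{2}\log n$ and $(c+1)\log n$ except with probability $e^{-\Omega(\sqrt{n})}+n^{-c}$. Repeating the timer a constant number $\kappa$ of times is countable with constant memory and yields $\Theta(\log n)$ phases. Two fixed remaining candidates of a set both survive a phase with probability at most $1/2$ (both heads, or everyone tails). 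A union bound over the $O(n^3)$ candidate pairs of all sets then shows that, for $\kappa$ large enough relative to $c$, every set ends with exactly one candidate w.h.p. The same timer also makes the $\Omega(\log n)$ part of the running-time claim immediate. Without such a device, your proposal does not establish the theorem.
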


\begin{remark}
\label{rm:leader_election}
    We can use \Cref{th:leader_election} to elect leaders in a portal graph by selecting a representative for each portal, e.g., the northermost amoebot of each $y$-portal.
\end{remark}


\begin{theorem}[Adapted from \cite{DBLP:journals/nc/PadalkinSW24}]
\label{th:boundary_test}
    An amoebot structure can determine for each boundary set whether it is an inner boundary set or an outer boundary set within $O(\log n)$ rounds w.h.p.
\end{theorem}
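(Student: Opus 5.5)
We would prove the statement with a turning-angle argument that is evaluated on circuits. We traverse the boundary of each hole with the hole on a fixed side, say the left. For the outer boundary the total signed turning angle is $+360^\circ$, and for an inner boundary it is $-360^\circ$ (or the reverse, depending on the chirality convention). The goal is to give every boundary set a circuit on which this total can be read off, using only constantly many pins and $O(\log n)$ rounds.

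\textbf{Step 1: one circuit per boundary set.} Every amoebot inspects its six neighbor positions and finds its maximal runs of empty grid points. Each run is one \emph{occurrence} of the amoebot on some boundary, so an amoebot has at most three occurrences. For each occurrence, the amoebot knows its predecessor and successor along the boundary walk, which are the occupied neighbors flanking the run. It also knows the local turning angle, a multiple of $60^\circ$ in $\{-120^\circ,\dots,+240^\circ\}$ determined by the length of the run.

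We give each occurrence its own group of partition sets, connecting the pins towards the predecessor with the pins towards the successor. Then every boundary cycle, including degenerate ones such as single-node holes or boundary walks that reuse an edge, becomes its own circuit. Neighbors agree on which external links belong to which boundary walk through the common link labeling.

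\textbf{Step 2: one leader per boundary set.} All occurrences on a boundary are connected by that boundary's circuit. Applying \Cref{th:leader_election} with the candidate set $C_i$ equal to all occurrences on boundary $i$, we elect a unique leader occurrence on every boundary simultaneously within $O(\log n)$ rounds w.h.p.

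\textbf{Step 3: summing the turning angles modulo 12.} Each boundary walk now uses $12$ parallel lanes, so $c\ge 12$ per walk and occurrence, which is a constant. Measured in units of $60^\circ$, an occurrence with local turn $t$ connects incoming lane $j$ to outgoing lane $j+t \bmod 12$. The leader cuts the cycle at itself, beeps on outgoing lane $0$, and observes on which incoming lane $k$ the beep returns. Then $k$ is the total turn modulo $12$, i.e. $6$ or $-6\equiv 6$, and that is a problem: $\pm 360^\circ$ must be distinguished. Hence the modulus must be at least $13$, so we use $24$ lanes (still constant), for which $+6$ and $-6\equiv 18$ differ. The leader then broadcasts the one-bit answer over the plain boundary circuit in one more round, so every occurrence learns whether its boundary is inner or outer. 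This phase takes $O(1)$ rounds.

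\textbf{Main obstacle.} Leader election is a black box and the modular sum is a constant-round computation. The real difficulty is Step 1: the local turn values must correctly sum to exactly $\pm 360^\circ$ even for degenerate boundaries, such as amoebots visited several times by the same walk, one-node-wide bridges, and holes consisting of a single grid point. We would handle this by the standard argument that the boundary walk traces a closed polygonal curve around the hole. In this walk, every run of $k$ empty neighbors contributes a turn of $(3-k)\cdot 60^\circ$ in the appropriate orientation, and the theorem of turning tangents then gives the claimed totals.
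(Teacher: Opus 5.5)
The paper does not prove this statement; it imports it as a black box from Padalkin et al. Your architecture is a reasonable way to prove it: one circuit per boundary walk, one leader per circuit, then an $O(1)$-round sum of exterior angles on cyclically shifted lanes. However, the step you yourself identify as the crux is wrong as stated.

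\textbf{The local turn formula is off by one unit.} Suppose the walk $p\to u\to s$ keeps the hole on its left and the run between $p$ and $s$ has $k$ empty positions. Then the heading changes by exactly $(2-k)\cdot 60^\circ$, not $(3-k)\cdot 60^\circ$. For example, take $p$ west of $u$, the single empty position north-west of $u$, and $s$ north-east of $u$: the walk turns left by $60^\circ$.
\begin{itemize}
\item Your formula over-counts every occurrence by one unit. The leader therefore reads $\pm 6+N \bmod 24$ (in units of $60^\circ$), where $N$ is the number of occurrences on that boundary, and this says nothing about the type of the boundary.
\item Concretely, for six amoebots around one empty grid point, your values sum to $720^\circ$ on the inner boundary and $0^\circ$ on the outer one.
\item With the correct formula and the hole on the left, the totals are $+360^\circ$ for inner boundaries and $-360^\circ$ for the outer boundary.
\item The tip case $k=5$ (where $p=s$) is then automatically counted as $-180^\circ$. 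This matters, because $+180^\circ$ and $-180^\circ$ differ by exactly the $6$ units you are trying to detect.
\item The case $n=1$ ($k=6$, no flanking neighbor) has to be handled separately.
\end{itemize}

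\textbf{The turning-tangents argument does not apply to the walk through amoebot centers.} It fails in precisely the degenerate situations you list. Across a one-node-wide bridge the walk traverses an edge in both directions, and at a tip it reverses. So the walk is not a simple closed curve, and the sign of a reversal is not determined by the curve.

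The fix is to argue on the boundary of the union of the hexagonal (Voronoi) cells of the amoebots.
\begin{itemize}
\item Exactly three cells meet at every vertex of the hexagonal tiling, so this boundary has no pinch points and is a disjoint union of simple closed polygons.
\item Since $\Gamma$ is connected, every hole is bounded by exactly one of these polygons. Hence each boundary set really is one circuit, and an inner hole lies on the bounded side of its polygon.
\item An occurrence with a run of $k$ empty positions corresponds to $k$ consecutive edges of $u$'s hexagon. With the hole on the left, these give $k-1$ turns of $-60^\circ$, followed by one turn of $+60^\circ$ at the vertex shared by $u$, $s$ and the last empty cell of the run. The sum is exactly $(2-k)\cdot 60^\circ$.
\item The theorem of turning tangents for these simple polygons then gives $\pm 360^\circ$ with the signs stated above.
\end{itemize}

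With these two corrections the rest of your plan goes through. Each grid edge lies on at most two boundary walks (one per incident triangle), so a constant number of pins suffices. Leader election takes $O(\log n)$ rounds w.h.p., and the shifted-lane sum plus broadcast takes $O(1)$ rounds. One small correction to Step 3: the real condition on the modulus is that it must not divide $12$, not that it is at least $13$. So $5$ or $8$ would already suffice, though $24$ is fine.
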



Padalkin and Scheideler proposed various tree primitives utilizing reconfigurable circuits.
They have also shown how an amoebot structure can simulate these on portal trees.


\begin{theorem}[Adapted from \cite{DBLP:conf/podc/PadalkinS24}]
    \label{th:tp:pasc}
    Let $\mathcal P = (V_{\mathcal P}, E_{\mathcal P})$ be a portal tree.
    Let $R \in V_{\mathcal P}$.
    Then, the \emph{PASC algorithm} lets each portal $P \in V_{\mathcal P}$ compute distances $d(R, P)$ and $m = \max_{P' \in V_{\mathcal P}} d(R, P')$ within $O(\log m) = O(\log n)$ rounds.%
    \footnote{Note that in general a portal is not able to store its distance since amoebots only have constant sized memory. Hence, the computation happens in iterations. In each iteration, each portal computes a single bit of the two distances. We are still able to perform simple operations in an online fashion, e.g., comparisons and bit shifts (in order to divide by $2$) \cite{DBLP:conf/wdag/ArtmannPS25}.}
\end{theorem}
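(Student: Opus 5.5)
The plan is to follow the simulation approach of Padalkin and Scheideler: first describe PASC (primary-and-secondary circuit) on an abstract rooted tree, then show that amoebots can emulate each portal as a single tree node using only constantly many pins per link. On an abstract tree rooted at $R$, every node owns two \emph{lanes} (primary and secondary), and all nodes start out \emph{active}. In each iteration:
\begin{enumerate}
\item a passive node connects the primary lane from its parent straight through to the primary lanes toward its children, and likewise for the secondary lane;
\item an active node crosses the lanes, sending the incoming primary lane to the outgoing secondary lanes and vice versa;
\item the root beeps on its primary lane.
\end{enumerate}
A node $P$ then receives the beep on its primary or secondary lane according to the parity of the number of active nodes on the root path strictly above it. The invariant to maintain is that, after $i$ iterations, the active nodes on every root path are exactly those whose $i$ lowest bits of depth agree with the lowest $i$ bits of $d(R,P)$ in the appropriate sense. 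I will prove it by induction: each iteration reveals bit $i$ of $d(R,P)$ to $P$, and nodes receiving on the secondary lane then become passive, which halves the number of active ancestors. After $\lceil \log(m+1)\rceil$ iterations every non-root node is passive. The termination test costs one extra round per iteration: each still-active non-root node beeps on a global circuit, and silence means we stop. The total is therefore $O(\log m)$ rounds, and $O(\log m) = O(\log n)$ since $m \le n$. Because each bit is revealed least-significant first, the footnote's online operations (comparisons, halving) can be applied bit by bit.

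For the simulation on a portal tree $\mathcal P$, a portal acts as one node by letting all of its amoebots join their corresponding lane pins along the portal axis, so that each lane of $P$ is a single circuit segment spanning the portal. The tree edges are realised by the (possibly many) grid edges between two adjacent portals. These parallel external links are harmless: they connect the same lane of $P$ to the same lane of $P'$ everywhere, and since $\mathcal P$ is a tree (\Cref{lem:portal_tree}) they can never close a cycle that mixes the lanes. The active/passive flag of a portal must be uniform across its amoebots. This holds automatically, because every amoebot of $P$ lies on the same circuit and therefore receives the same beep. The root portal $R$ starts the beep from its representative, e.g.\ the leader of \Cref{rm:leader_election}.

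The crossing rule requires every amoebot to know whether each adjacent portal is $P$'s parent or one of its children. I would therefore first root $\mathcal P$ at $R$ using the rooting primitive of Padalkin and Scheideler on portal trees (Euler tour with a PASC-based prefix computation), which takes $O(\log n)$ rounds. After rooting, an amoebot of a $y$-portal knows, for each incident link, whether it leads to the parent portal (on the $\mathit{WNW}$ or $\mathit{ESE}$ side) or to a child portal, and configures its pins accordingly.

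I expect computing $m$ within the same bound to be the main obstacle. The bits of $m$ are not simply the OR of the bits of all distances, and constant memory forbids storing the distances. My first attempt is to take the length of $m$ from the number of PASC iterations until termination. I would then recover the remaining bits by letting every portal keep a constant-size comparison state against the running maximum: after each revealed bit, portals whose prefix is dominated drop out, which is decided by a global beep among the remaining candidates. If this least-significant-first comparison turns out not to close, the fallback is to run PASC a second time in a way that emits bits most-significant first, as done in \cite{DBLP:conf/wdag/ArtmannPS25}, and eliminate candidates greedily. That would still take $O(\log m)$ rounds.
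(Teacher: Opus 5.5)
The paper does not prove this statement; it imports it from Padalkin and Scheideler. Your distance part is the standard PASC argument and is fine after the small repairs below. The computation of $m$, which you flag yourself, is a genuine gap, and neither of your routes closes it.

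\begin{itemize}
\item \emph{The least-significant-first elimination is unsound.} Once bits $0,\dots,j$ are revealed, no candidate may be discarded, because a more significant bit can overturn any comparison: $3=011_2$ beats $4=100_2$ on bit $0$. A \emph{dominated prefix} only makes sense most-significant first. The termination round only gives you $\lfloor\log_2 m\rfloor+1$.
\item \emph{The fallback presupposes an unavailable primitive.} It assumes a PASC variant that emits bits most-significant first within $O(\log m)$ rounds. \Cref{sec:amoebot:global_maxima} of the paper says the opposite: PASC yields bits only least-significant first, and since amoebots cannot store their distances, a most-significant-first consumer must recompute them once per bit. That is exactly why \Cref{th:global_maxima} costs $O(\log^2 n)$. \cite{DBLP:conf/wdag/ArtmannPS25} only supplies online least-significant-first operations such as comparisons and shifts. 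So your fallback costs $O(\log^2 m)$, not $O(\log m)$.
\item \emph{Global beeps cannot substitute.} They only give ORs over portals, and the bits of a maximum are not ORs of bits (again $3$ and $4$). An arbitrary portal tree would therefore need a genuinely different maximum computation, which you do not provide.
\end{itemize}

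What does suffice, for the only place the paper uses the theorem, is the chain version of PASC. There the theorem is applied after the root-and-prune primitive with $|\mathcal Q|=2$, so the portal tree is a path and $m$ is the distance of its unique non-root endpoint. That portal knows it is the non-root element of $\mathcal Q$. In every iteration it beeps its freshly computed bit on a global circuit. Every portal thus receives the bits of $m$ least-significant first, in lockstep with its own distance, at $O(1)$ extra rounds per iteration.

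Two smaller repairs are needed.
\begin{itemize}
\item \emph{Parity convention.} Bit $j$ of $d=d(R,P)$ is the parity of $\lfloor d/2^j\rfloor$, the number of active portals on the $R$--$P$ path with $R$ excluded and $P$ included. So $P$ must combine the lane it hears with its own state. Counting only portals strictly above $P$ gives $\lceil d/2^j\rceil$ or $\lceil d/2^j\rceil-1$, depending on whether $R$ is counted, and each has the wrong parity for some $d$. The invariant is simply that after $i$ iterations the active portals are exactly those with $2^i\mid d(R,P)$.
\item \emph{Rooting cost.} Your $O(\log n)$ Euler-tour rooting breaks the claimed $O(\log m)$ bound. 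In the paper the tree is already rooted by \Cref{th:tp:rap}, in $O(\log|\mathcal Q|)=O(1)$ rounds, before PASC is invoked. Rooting should therefore be a precondition, not an extra step.
\end{itemize}
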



\begin{theorem}[Adapted from \cite{DBLP:conf/podc/PadalkinS24}]
\label{th:tp:rap}
    Let $\mathcal P = (V_{\mathcal P}, E_{\mathcal P})$ be a portal tree.
    Let $\mathcal Q \subseteq V_{\mathcal P}$ and $R \in \mathcal Q$.
    Then, the \emph{root and prune primitive} roots $\mathcal P$ at $R$ and prunes all subtrees without a portal in $\mathcal Q$ within $O(\log |\mathcal Q|) = O(\log n)$ rounds.
\end{theorem}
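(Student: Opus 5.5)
The plan is to reduce both tasks, rooting and pruning, to prefix-sum computations along an Euler tour of the portal tree $\mathcal P$. These prefix sums are evaluated with the PASC machinery of \Cref{th:tp:pasc}, run bit by bit and with only constant memory per amoebot.

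\textbf{Step 1: the Euler tour circuit.} Since the tree is a portal tree of a simple region, its embedding is planar. An edge of $\mathcal P$ corresponds to adjacency between two portals. A walk around the tree that visits each edge twice can therefore be realized by the amoebots themselves. Each portal routes one pin pair along its western side and one along its eastern side, and it connects consecutive adjacencies in the cyclic order given by the geometry. This order is locally known, since adjacent portals lie on one side of $P$ and are ordered north to south. The result is a single cyclic circuit, which we call the Euler cycle, in which every tree edge occurs exactly twice. The portal $R$ cuts the cycle at one of its occurrences, which turns it into an Euler path starting and ending at $R$.

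\textbf{Step 2: rooting.} For each edge $\{P,P'\}$, its first occurrence on the Euler path (counted from $R$) leads away from $R$, and its second occurrence leads back. Hence the parent of $P$ is the neighbour $P'$ for which the edge occurrence leaving $P$ towards $P'$ comes later than the one entering $P$ from $P'$. To decide which occurrence is earlier, I would run PASC along the Euler path, marking every edge occurrence. Each occurrence then learns its prefix rank, one bit per iteration. The two occurrences of an edge compare their ranks online, most significant bit first, exchanging one beep per bit over a dedicated circuit that links both occurrences through the edge's amoebots. This costs $O(\log n)$ rounds.

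\textbf{Step 3: pruning.} The subtree of a non-root portal $P'$ consists exactly of the portals visited between the two occurrences of the edge to its parent. So the subtree contains a portal of $\mathcal Q$ if and only if the number of $\mathcal Q$-marks strictly between these two occurrences is positive. I would run PASC again, but this time only portals in $\mathcal Q$ act as marked elements, with their first occurrence on the tour as the mark. Then each edge occurrence obtains the count of $\mathcal Q$-marks before it. Because only $|\mathcal Q|$ elements are counted, the prefix values have $O(\log |\mathcal Q|)$ bits, so PASC finishes in $O(\log|\mathcal Q|)$ iterations. Each pair of occurrences then tests its two counts for equality bitwise, and a subtree is pruned exactly when the counts are equal. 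Since every bit is compared in constant rounds, the whole test also takes $O(\log|\mathcal Q|)$ rounds.

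\textbf{Main obstacle.} The step I expect to be delicate is achieving $O(\log|\mathcal Q|)$ rather than $O(\log n)$ for the rooting as well. Step 2 as described counts all edges, which gives $O(\log n)$. The bound $O(\log n)$ suffices for our purposes, since the statement equates it with $O(\log|\mathcal Q|)$ anyway. If the finer bound is needed, I would first try to replace all-edge counting by $\mathcal Q$-counting, with $R \in \mathcal Q$ fixed as the first mark. I would then break the ties between occurrences with equal $\mathcal Q$-counts (these form unmarked stretches of the tour) by a single beep sent along each such stretch from its start. That beep identifies the earlier occurrence without any further counting.
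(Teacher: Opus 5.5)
The paper does not prove this theorem; it imports it from Padalkin and Scheideler, so your proposal has to stand on its own. Its core is the right mechanism: an Euler tour of the portal tree cut at $R$, PASC with only the $\mathcal Q$-portals active, and a comparison of the two prefix counts at the two occurrences of each edge.

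\textbf{The gap is the running time.} Your rooting (Step~2) counts all edge occurrences, so it takes $O(\log n)$ iterations, not $O(\log|\mathcal Q|)$. You dismiss this by reading ``$O(\log|\mathcal Q|)=O(\log n)$'' as an equivalence. It only expresses the inclusion $O(\log|\mathcal Q|)\subseteq O(\log n)$, which holds because $|\mathcal Q|\le n$. So an $O(\log n)$ bound does not establish the claimed one.

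Your repair does not work either. A beep reaches every partition set of its circuit in the same round and carries no positional information. One beep along an unmarked stretch therefore cannot tell which of two occurrences on it comes first. Moreover, such a stretch can contain an entire pruned subtree, so the tie-breaking you ask for amounts to rooting that subtree. Under your reading, $\mathcal Q=\{R\}$ would even require rooting an arbitrary portal tree in $O(1)$ rounds.

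\textbf{The missing observation: pruned portals need no parent.} The primitive only roots the unpruned tree; pruned portals merely learn that they are pruned. That is all the paper ever uses, e.g.\ the path of portals between $\operatorname{portal}_q(g)$ and $\operatorname{portal}_q(g')$. Your Step~3 already orients every unpruned edge. Take an edge $\{P,C\}$ with $P$ the parent, and let $T_C$ be the subtree of $C$. Every visit of a portal in $T_C$ lies strictly between the two occurrences of this edge, and every other visit lies outside them. Hence the two prefix counts differ by exactly $|\mathcal Q\cap T_C|$. Equal counts mean prune; otherwise the occurrence with the smaller count is the downward one, which gives $C$ its parent. So drop Step~2 entirely, and a single PASC run with $|\mathcal Q|$ active participants gives $O(\log|\mathcal Q|)$.

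Two details then need adjusting:
\begin{itemize}
\item Without Step~2, a portal no longer knows which of its visits is its \emph{first}. Marking \emph{any single} visit per $\mathcal Q$-portal yields the same count differences, by the argument above.
\item PASC delivers the bits least significant first, so with constant memory you cannot compare most significant bit first, as you propose in Step~2. Compare least significant bit first instead, keeping the verdict of the last differing bit. The equality test is order-independent anyway.
\end{itemize}
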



Finally, we list methods for portals and regions which were also introduced by Padalkin and Scheideler.

\begin{lemma}[Adapted from \cite{DBLP:conf/podc/PadalkinS24}]
\label{lem:tp:deg}
    Let $\mathcal P = (V_{\mathcal P}, E_{\mathcal P})$ be a portal tree.
    Let $\deg(P)$ denote the degree of $P$.
    Let $c$ be a constant.
    Then, each portal $P$ can check whether $\deg(P) \geq c$ within $O(\min\{ c, \log \deg_\mathcal P (P) \}) = O(1)$ rounds.
\end{lemma}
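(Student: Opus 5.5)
The plan is to implement the degree check with a single circuit per portal, whose outcome is read at the portal's representative (e.g., its northernmost amoebot, as in \Cref{rm:leader_election}). I will describe it for $y$-portals; $x$- and $z$-portals are symmetric. Every amoebot $u$ of a portal $P$ knows, from its local neighborhood, which of its neighbors lie outside $P$. The neighbors on the west side (directions $W$ and $\mathit{NNW}$) belong to portals on the west of $P$, and those on the east side ($E$ and $\mathit{SSE}$) to portals on the east. Each adjacent portal of $P$ corresponds to a maximal run of consecutive amoebots of $P$ that have a neighbor on the same side. Along one side of $P$, two adjacent portals are always separated by at least one empty grid point on that side.

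The first step is for each amoebot to determine locally whether it is the \emph{start} of such a run on a given side. It is the start if it has a neighbor on that side and its predecessor in $P$ (in $\mathit{NNE}$ direction) either does not exist or has no neighbor in that portal. This test needs only neighbor information, because two neighboring amoebots on a side belong to the same adjacent portal exactly when they are connected along the $y$-axis. The subtle point is to get this local test right on the triangular grid. Consecutive $P$-nodes $p$ and $q$ can share neighbors: $p$'s $\mathit{NNW}$-neighbor may be $q$'s $W$-neighbor. So the test must be formulated as "the neighbor on that side of $q$ and the neighbor on that side of $p$ are equal or adjacent in the same $y$-portal", and this has to be checked carefully. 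It is the main (though minor) obstacle.

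Once run starts are marked, $\deg(P)$ equals the number of marked starts. Then:
\begin{enumerate}
\item If $c$ is small, I do $c$ iterations. Each iteration builds a circuit along $P$, elects or removes the northernmost remaining start by a beep propagated from north to south, and counts up to $c$ at the representative. This takes $O(c)$ rounds.
\item Alternatively, to get the $O(\log \deg(P))$ bound, I use the standard doubling technique: in phase $i$ a portal checks whether it has at least $2^i$ starts, using a circuit-based counter along the portal with bits computed online as in the PASC footnote of \Cref{th:tp:pasc}. All portals stop once $2^i \ge c$ or the count is exhausted.
\end{enumerate}
Taking whichever finishes first gives $O(\min\{c,\log\deg(P)\})$ rounds, and this is $O(1)$ since $c$ is constant. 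All portals run in parallel, since their circuits use disjoint pins.
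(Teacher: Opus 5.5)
Your proposal is correct and is essentially the intended argument (the paper gives no proof and only cites Padalkin and Scheideler). You mark the northernmost amoebot of $P$ adjacent to each neighboring portal, which is the same locally decidable marking the paper uses when splitting gates in the tunnel-region phase, and then count the marks up to $c$ either by repeatedly removing the closest mark via \Cref{lem:tp:unmark} in $O(c)$ rounds or by PASC over the marked amoebots in $O(\log \deg(P))$ rounds. One small correction: your opening description (maximal runs of amoebots that merely have some neighbor on that side) is wrong, because two portals separated by a single empty grid point touch consecutive amoebots of $P$. Your refined local test is right, though: it reduces to checking whether the side grid point that an amoebot shares with its $\mathit{NNE}$-neighbor (its $\mathit{NNW}$-neighbor on the west side, its $E$-neighbor on the east side) is empty, and you should count marks rather than amoebots, since an amoebot can carry one mark per side.
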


\begin{lemma}[Adapted from \cite{DBLP:conf/podc/PadalkinS24}]
\label{lem:tp:property:region}
    Let $R \subseteq V_\Gamma$ be a connected subset, e.g., a region or a portal.
    Let $S \subseteq V_\Gamma$.
    Then, region $R$ can determine whether $S \cap R \neq \emptyset$ within $O(1)$ rounds.
\end{lemma}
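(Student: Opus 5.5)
We prove the lemma with a single global circuit. Two facts about \emph{any} connected set $R$ are used: it can synchronize internally, and it can carry one circuit spanning all of its amoebots. The circuit is built as follows. Every amoebot $u \in R$ puts into one common partition set a fixed pin (say, the pin with label $1$) of each external link to a neighbor in $R \cap N(u)$. Each amoebot knows $R \cap N(u)$ by the convention for regions, or by the portal definition when $R$ is a portal. Because $R$ is connected in $\Gamma$, the partition sets of the amoebots in $R$ are joined by external links into a single circuit $\mathcal C_R$ that spans $R$. It is disjoint from all other circuits, since no pin is shared across the boundary of $R$.

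With $\mathcal C_R$ in place, the test is one round of beeping. Every amoebot $u \in S \cap R$ beeps on its partition set in $\mathcal C_R$. At the start of the next round, every amoebot of $R$ receives a beep on $\mathcal C_R$ if and only if some amoebot of $R$ beeped, that is, if and only if $S \cap R \neq \emptyset$. All amoebots of $R$ therefore learn the answer simultaneously. The cost is $O(1)$ rounds: one to set the pin configuration and one to beep.

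When several regions or portals test at once, as in our applications, we need them not to interfere. If the sets are pairwise disjoint, the circuits are disjoint and independent. If they overlap, as regions do at gates, the copies of a node in different regions must use different pin labels. Since every node lies in $O(1)$ regions (the splitting operations create only a constant number of copies per node), a constant number $c$ of pins per link suffices. Alternatively, the tests can be run in $O(1)$ sequential batches, which keeps the bound at $O(1)$.

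The only point that needs care is this overlap bookkeeping: we must assign each copy of a node its own partition set, and check that two different regions never merge through a shared link. That is guaranteed if each copy uses only links to neighbors inside its own region, which is exactly what the splitting operations in \Cref{def:splitting_operations} specify.
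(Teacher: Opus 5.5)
Your argument is correct and is essentially the standard proof behind the cited result (the paper itself only cites it): the amoebots of $R$ form one circuit spanning $R$, and every amoebot of $S\cap R$ beeps on it. The discussion of simultaneous tests goes beyond the statement, but one claim there is incomplete: regions on the two sides of a split portal both keep the portal's links, so ``each copy uses only links inside its own region'' does not by itself prevent two region circuits from merging, and you also need both endpoints of each shared link to agree on which pin label belongs to which region (e.g., by the side of the portal on which the region lies).
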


\begin{lemma}[Adapted from \cite{DBLP:conf/podc/PadalkinS24}]
\label{lem:tp:unmark}
    Let $P$ be a portal.
    Let $S \subseteq P$ be a non-empty set of amoebots and $u \in P$ one of the endpoints of the portal.
    Then, we can compute the closest amoebot $v \in S$ to $u$ in $O(1)$ rounds.
\end{lemma}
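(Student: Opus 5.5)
The statement is \Cref{lem:tp:unmark}: given a portal $P$, a non-empty subset $S \subseteq P$, and an endpoint $u$ of $P$, we must find the amoebot of $S$ closest to $u$ in $O(1)$ rounds. The plan is to use the fact that a portal is a straight line of amoebots, so every amoebot knows its two neighbours along $P$ in the portal's direction, and also knows whether it is an endpoint. Since all amoebots share compass and chirality, each one can tell which side of it points towards $u$. The target is the unique amoebot $v \in S$ such that no other member of $S$ lies between $u$ and $v$.

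The construction uses one round of circuit communication. Each amoebot $w \in P$ sets up a single partition set joining its pin towards $u$ with its pin away from $u$, with one exception. Every $w \in S$ \emph{cuts} the circuit: it keeps the pin towards $u$ and the pin away from $u$ in two separate partition sets. This splits $P$ into segments that are bounded by the members of $S$ (and by the two endpoints of $P$). Then the endpoint $u$ beeps on its partition set facing into the portal. The beep travels along the portal only until it reaches the first cut. So among the members of $S$, exactly the closest one to $u$ receives the beep on its $u$-facing partition set. That amoebot marks itself as $v$, and every other member of $S$ stays unmarked.

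It remains to check the degenerate cases.
\begin{itemize}
\item If $u \in S$ itself, then $u$ is the closest member (it is at distance $0$). To handle this, $u$ decides locally before beeping: if $u \in S$, it marks itself and no beep is needed.
\item Because $S$ is non-empty, some cut exists, so exactly one amoebot is marked.
\item If every member of $S$ should also learn whether it is the selected one, the result is already local: each member knows whether it received the beep.
\end{itemize}
Each of these steps takes a constant number of rounds, which gives the $O(1)$ bound.

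The main subtlety is that each amoebot needs to know its orientation relative to $u$, i.e.\ which direction along $P$ points towards $u$. This follows from the common compass if $u$ is specified as, for example, the $\mathit{NNE}$- or $\mathit{SSW}$-endpoint. If instead only $u$ itself knows that it is the chosen endpoint, we drop the need for orientation: we cut at every member of $S$ on both sides, $u$ beeps, and the member of $S$ that hears the beep on the side facing $u$ is exactly the closest one.
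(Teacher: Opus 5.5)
Your argument is correct and is the standard one behind this lemma, which the paper imports from Padalkin and Scheideler without giving a proof: amoebots of $P \setminus S$ join their two portal pins, members of $S$ keep them apart, $u$ beeps, and the unique member of $S$ that hears the beep is the closest one, all within a single round. The orientation worry in your last paragraph is unnecessary, since this pin configuration never depends on which side faces $u$; any member of $S$ that receives a beep at all is the answer.
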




\subsection{Global Maxima}
\label{sec:amoebot:global_maxima}

Let $d$ be a cardinal direction and let $R \subseteq V_\Gamma$ be a non-empty set of amoebots.
We call $\argmin_{w \in R} \operatorname f_d(R,w)$ the \emph{global maxima} of $R$ in direction $d$ where $\operatorname f_d(R,w)$ denotes the number of amoebots in $R$ that lie in direction $d$ from amoebot $w$.

\begin{theorem}[Adapted from \cite{DBLP:journals/nc/PadalkinSW24}]
\label{th:global_maxima}
    The amoebot structure can compute the global maxima of $R$ within $O(\log^2 n)$ rounds w.h.p.
\end{theorem}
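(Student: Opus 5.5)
The plan is to reduce the problem to comparing a small number of ``local'' candidates and then eliminate them tournament-style, where each comparison costs $O(\log n)$ rounds and there are $O(\log n)$ elimination phases.

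\textbf{Candidates.} Fix the direction $d$, say north, and consider the lines of the grid orthogonal to $d$. Intersect each such line with $R$ to obtain maximal contiguous segments. A segment $S$ is a \emph{local maximum} if no amoebot of $R$ is adjacent to $S$ in direction $d$. Every global maximum lies in a local-maximum segment. Each segment can test this condition in $O(1)$ rounds: the segment forms a circuit along its line, and every amoebot with a northern neighbor in $R$ beeps on it (as in \Cref{lem:tp:property:region}). The segments that pass the test are the candidate set $\mathcal C$; each candidate elects a representative via \Cref{th:leader_election} and \Cref{rm:leader_election}.

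\textbf{Comparing two candidates.} Two candidates $S, S'$ are compared by their $d$-coordinates. First elect a global leader $\ell$. Then compute, for every candidate, its $d$-offset relative to $\ell$ bit by bit, in the style of the PASC algorithm (\Cref{th:tp:pasc}). The PASC chain runs along a path from $\ell$ to the candidate that alternates between segments and the edges joining them, and each edge contributes $-1$, $0$ or $+1$ to the $d$-coordinate. A chain whose steps can be negative is not directly a PASC instance. I would handle this by running two PASC instances, one counting the $+1$ steps and one counting the $-1$ steps. Their difference is then compared in online fashion, streaming bits from least significant to most significant, which the footnote to \Cref{th:tp:pasc} permits. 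This uses $O(\log n)$ rounds. Then:
\begin{enumerate}
\item In each phase, every surviving candidate flips a coin.
\item The ``heads'' candidates learn, via a global OR computed bitwise, the maximum offset among the ``tails'' candidates.
\item Every candidate strictly below the current best offset drops out.
\end{enumerate}
A standard argument shows that a constant fraction of the non-maximal candidates is eliminated in each phase with constant probability. Hence $O(\log n)$ phases suffice w.h.p., for a total of $O(\log^2 n)$ rounds. Finally, the amoebots in the surviving segments, which all have the maximal $d$-coordinate, are exactly $\argmin_{w} \operatorname f_d(R,w)$.

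\textbf{Main obstacle.} The hard step is building the path structure along which coordinate offsets are accumulated, because $R$ may contain holes and we have no spanning tree yet; the spanning tree is precisely the thing that depends on global maxima. I would first try to replace tree paths with a global synchronized comparison: in iteration $i$, all amoebots agree on the $i$-th bit of a threshold, and each candidate tests whether its coordinate exceeds the threshold. This turns the computation into a binary search on the maximal coordinate, with $O(\log n)$ bits each decided in $O(\log n)$ rounds. Its bottleneck is that it still needs each candidate's coordinate relative to a common reference, which is the same obstacle restated. Making these relative coordinates consistent despite holes is the step I expect to need the actual machinery of \cite{DBLP:journals/nc/PadalkinSW24}, and I have not resolved it here.
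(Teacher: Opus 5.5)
Your proposal has a genuine gap, and it is the one you flag yourself. You never obtain a candidate's offset in direction $d$ relative to a common reference. Both your two-PASC comparison and your fallback threshold search need a chain from $\ell$ to every candidate, which amounts to the spanning tree you do not yet have.

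The missing idea is that the offset in direction $d$ is a geometric, path-independent quantity, so PASC need not run along any path. The paper sketches the cited algorithm as a \emph{spatial} version of PASC run from an elected reference amoebot. The lines orthogonal to $d$ play the role of the chain elements, and one way to realize this is as follows.
\begin{itemize}
\item Each amoebot keeps a pair of lanes on its $d$-side and a pair on the opposite side.
\item It links its $d$-side pair to the $d$-side pairs of its neighbours on its own line, and to the opposite-side pairs of its neighbours on the next line in direction $d$ (symmetrically for the other side).
\item Internally it connects its two sides straight or crossed, according to whether its line is currently active.
\end{itemize}
Every closed walk in $\Gamma$ crosses each line an even number of times, and each crossing of a line applies that line's swap, which is the same transposition or the identity. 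Hence the lane permutation around any cycle, including cycles around holes, is the identity, and the circuits are consistent. The active/passive update depends only on a line's offset, so all segments of a line behave identically even though they need not be connected to each other. Run on the connected structure $\Gamma$ rather than on $R$ (which may be disconnected), this gives every amoebot of $R$ the bits of its $d$-distance to the reference, least significant first, in $O(\log n)$ rounds.

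Your elimination tournament is also unsound as stated. A bitwise OR of the tails offsets is not their maximum. For example, tails offsets $4$ and $3$ give $7$, so a heads candidate with offset $5$, possibly the global maximum, would drop out. If step 2 instead really computed the maximum, it would already solve the problem, and the tournament would be circular.

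No randomized elimination is needed. Once offsets can be generated, the maximum is found by a deterministic consensus over the bits from most to least significant: surviving candidates holding a $1$ beep on a global circuit, and if a beep is heard, those holding a $0$ drop out. The only difficulty, and the source of the $O(\log^2 n)$ bound, is that PASC delivers bits least significant first while constant-memory amoebots cannot store them. The spatial PASC is therefore rerun for each of the $O(\log n)$ consensus iterations. Randomness enters only through the leader election of the reference. Your local-maximum segments are harmless but unnecessary.
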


The idea of their algorithm is to first elect an arbitrary reference amoebot $u \in R$, then apply a spatial version of the PASC algorithm to compute the distance to that reference amoebot with respect to the cardinal direction $d$ and finally apply a consensus algorithm to determine the amoebots with the maximal distance.
However, the PASC algorithm computes the distances from the least significant bit to the most significant bit while the consensus algorithm requires the distances from the most significant bit to the least significant bit.
Since the amoebots cannot store their distances, they have to recompute the distances for each iteration of the consensus algorithm.

We can improve the runtime for boundary sets as follows.
First, we apply \Cref{th:leader_election} to split the boundary cycle into a chain.
Then, we apply the block primitive of \cite{DBLP:journals/nc/PadalkinSW24} to divide the chain into blocks of length $\Theta(\log n)$.
Next, each block applies \Cref{th:global_maxima} to compute its global maxima in $R$.
Let $M \subseteq R$ denote the set of all global maxima of all blocks.
Finally, we apply \Cref{th:global_maxima} to compute the global maxima of $M$.
However, each amoebot in $M$ now stores its distance in its block such that we do not need to recompute the distances for each iteration of the consensus algorithm.

\begin{theorem}
\label{th:global_maxima:boundary_set}
    A boundary set can compute the global maxima of $R$ within $O(\log n)$ rounds w.h.p.
\end{theorem}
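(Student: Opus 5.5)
The algorithm sketched just before the statement can be made rigorous in four steps.

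First, elect a leader on the boundary set's circuit with \Cref{th:leader_election} and cut the boundary cycle there into a chain. This takes $O(\log n)$ rounds w.h.p. Next, apply the block primitive of \cite{DBLP:journals/nc/PadalkinSW24}. It partitions the chain into consecutive blocks, each of length $\Theta(\log n)$ (the last possibly shorter), using $O(\log n)$ rounds. Each block $B$ is a connected subchain. It can therefore establish its own circuits, and blocks can operate in parallel without interference, since each amoebot belongs to exactly one block. We interpret the task per block as computing the global maxima of $R \cap B$ in direction $d$.

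Second, run the algorithm of \Cref{th:global_maxima} inside every block simultaneously. I would not use the theorem's $O(\log^2 n)$ bound as a black box. Instead I would reuse its structure. Distances within a block are bounded by $O(\log n)$, so each PASC run inside a block needs only $O(\log\log n)$ rounds. The consensus algorithm iterates over the $O(\log\log n)$ bits, so the per-block cost is $O((\log\log n)^2) = O(\log n)$. Leader election per block still costs $O(\log n)$ w.h.p., which is acceptable. The output is the set $M$ of block maxima. In addition, each amoebot $w\in M$ needs its $d$-offset relative to its block's reference amoebot. This offset has only $O(\log\log n)$ bits, which cannot be held in constant memory. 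The fix is to spread it over the $\Theta(\log n)$ amoebots of the block, one bit per amoebot, so that the block jointly stores the offset of its maximum. I expect this storage argument, together with making the block maxima comparable across blocks, to be the main obstacle.

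Third, compute the global maxima of $M$, which are exactly the global maxima of $R$ on the boundary set, since every global maximum is a maximum of its own block. To compare blocks, the global offset of $w$ to a single global reference amoebot is the sum of two parts: the offset of its block's reference amoebot along the chain, obtained by a PASC run along the chain, and the locally stored in-block offset. Bit-serial addition with the stored bits gives the global offset. Run the consensus of \Cref{th:global_maxima} on $M$ from the most significant bit down. The aim is that each consensus iteration costs $O(1)$ rounds, reading one bit of each candidate's offset, plus a one-time precomputation. I would first try a single PASC pass that writes every candidate's global offset bits into its block's storage, from least to most significant bit. This is valid only if global offsets of candidates also fit in $O(\log n)$ bits per block, which holds because the boundary set has at most $n$ amoebots. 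The consensus then reads the stored bits in reverse order in $O(\log n)$ total rounds.

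Fourth, sum the costs. Each step is $O(\log n)$ w.h.p., and a union bound over the polynomially many randomized subroutines keeps the high-probability guarantee. The one step I would check most carefully is that the $\Theta(\log n)$-length blocks really provide enough storage for the stored offset bits.
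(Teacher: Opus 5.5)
Your proposal is correct and follows the paper's route: cut the boundary cycle into a chain by leader election, split it into blocks of length $\Theta(\log n)$, compute each block's maxima in $O((\log\log n)^2)$ rounds, and then run one consensus over the block maxima, with their distances stored bitwise in their blocks so that no PASC recomputation is needed per consensus iteration. Your use of a common global reference (block-reference offset plus bit-serial addition) makes explicit the cross-block comparability that the paper leaves implicit; the one detail to tidy is that a short last block should be merged with its predecessor so that it has room for the $O(\log n)$ stored bits.
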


\begin{proof}
    If an amoebot is a global maximum of $R$, then it is also a global maximum of its block.
    Hence, the final application of the global maxima algorithm also computes the global maxima of $R$.
    Note that each block is able to store the distance since it has $\Theta(\log n)$ amoebots.
    If a block has more than one global maxima, they must have the same distance such that each block has to only store one distance.

    By \Cref{th:leader_election}, the splitting of the boundary cycle into a chain requires $O(\log n)$ rounds w.h.p.
    The block primitive requires $O(\log n)$ rounds.
    We refer to \cite{DBLP:journals/nc/PadalkinSW24} for the details.
    Since the blocks have size $\Theta(\log n)$, the first application of \Cref{th:global_maxima} requires $O(\log^2 \log n) = O(\log n)$ rounds w.h.p.
    Since we do not have to recompute for each iteration of the consensus algorithm, the second application of \Cref{th:global_maxima} requires $O(\log n)$ rounds w.h.p.
    Overall, the algorithm requires $O(\log n)$ rounds w.h.p.
\hfill $\square$ 
\end{proof}

\begin{remark}
    \Cref{th:global_maxima:boundary_set} immediately improves the runtime of the spanning tree algorithm of Padalkin \emph{et al.} \cite{DBLP:journals/nc/PadalkinSW24} to $O(\log n)$ rounds w.h.p.
\end{remark}

If $R$ is connected, the global maxima must be on the outer boundary.
Hence, \Cref{th:boundary_test,th:global_maxima:boundary_set} imply the following corollary.

\begin{corollary}
\label{cor:global_maxima:connected}
    If $R$ is connected, the amoebot structure can compute the global maxima of $R$ within $O(\log n)$ rounds w.h.p.
\end{corollary}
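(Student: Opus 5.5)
The plan is to reduce the corollary to the two cited theorems by showing that, for connected $R$, every global maximum in direction $d$ lies on the outer boundary set of $R$. We then run \Cref{th:global_maxima:boundary_set} on that boundary set alone.

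\textbf{Step 1: maxima lie on the outer boundary.} Let $w \in R$ minimize $\operatorname f_d(R,w)$. Suppose the grid neighbor $w'$ of $w$ in direction $d$ belonged to $R$. Then every amoebot of $R$ in direction $d$ from $w'$ is also in direction $d$ from $w$, and so is $w'$ itself. Hence $\operatorname f_d(R,w') < \operatorname f_d(R,w)$, a contradiction.

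It follows that $w'$ is an empty grid point, and $w$ is incident to some hole of $R$. I would still have to argue that this hole is the outer hole. The first thing I would try is to follow the ray from $w$ in direction $d$. If every point of $R$ on that ray (counted by $\operatorname f_d$) were absent, the ray would escape to infinity inside $V_\Delta\setminus R$. The first empty point $w'$ would then be connected to the unbounded component, so it lies in the outer hole.

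This works if $\operatorname f_d(R,w)=0$. However, $\operatorname f_d$ counts amoebots along the ray, so the minimum may be positive. This is where I expect the main obstacle. I would handle it by interpreting the global maxima in the sense used in \cite{DBLP:journals/nc/PadalkinSW24}: the extremal amoebots, which are exactly those with no amoebot of $R$ further in direction $d$. Those have $\operatorname f_d = 0$, and the ray argument above places them on the outer boundary. If the minimum is genuinely positive, I would instead argue via connectivity of $R$. An extremal amoebot in direction $d$, meaning the one maximizing the coordinate along $d$, exists and has $\operatorname f_d=0$. So the minimum is $0$, and all minimizers sit on the outer boundary by the ray argument.

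\textbf{Step 2: algorithm.} First apply \Cref{th:boundary_test} to let each boundary set of $R$ learn whether it is the outer boundary set; this takes $O(\log n)$ rounds w.h.p. Since $R$ is connected, the outer boundary set is a single cycle. On it, apply \Cref{th:global_maxima:boundary_set} to compute the global maxima of $R$ restricted to that set, in $O(\log n)$ rounds w.h.p.

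By Step 1, the minimizers of $\operatorname f_d$ over the outer boundary set coincide with those over all of $R$. The minimum value is $0$ in both cases, and every amoebot with value $0$ lies on the outer boundary. Summing the costs and taking a union bound over the two w.h.p. events gives $O(\log n)$ rounds w.h.p.
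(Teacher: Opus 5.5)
Your proposal is correct and follows the paper's own route. The paper simply asserts that for connected $R$ the global maxima lie on the outer boundary and then combines \Cref{th:boundary_test} with \Cref{th:global_maxima:boundary_set}, exactly as in your Step 2; your Step 1 only supplies the justification the paper leaves implicit. Two small remarks: the worry about a positive minimum is moot, since $\min_{w}\operatorname f_d(R,w)=0$ for any finite $R$; and if $d$ is not a grid-edge direction (the paper also uses $\mathit{WNW}$ and $\mathit{ESE}$), your ray has no grid neighbor to start from, so replace it by the set of grid points strictly beyond the extremal level in direction $d$, which contains no point of $R$, is connected and unbounded, and contains a neighbor of every extremal amoebot.
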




\subsection{Convex Decomposition}
\label{sec:amoebot:decomposition}

In this section, we show how an amoebot structure can compute the convex decomposition described in \Cref{sec:pathconvex_region_decomposition}.


\subsubsection{Decomposition into $\Theta(|\mathcal{H}|)$ simple regions}
First, each boundary set determines whether it is an inner or outer boundary set (see \Cref{th:boundary_test}).
Second, each inner boundary set computes the $\mathit{NNE}$-most amoebot of the $\mathit{WNW}$-most amoebots and the $\mathit{NNE}$-most amoebot of the $\mathit{ESE}$-most amoebots (see \Cref{th:global_maxima:boundary_set}).
Let $S$ denote the set of amoebots computed by the inner boundary sets.
Third, each $y$-portal $P$ determines whether $S \cap P \neq \emptyset$ (see \Cref{lem:tp:property:region}).
Let $\mathcal P = \{P \in V_{\mathcal P_y} \mid S \cap P \neq \emptyset \}$.
We split the amoebot structure at each portal in $\mathcal P$ and each amoebot in $S$.

\begin{lemma}
\label{lem:amoebot:simple}
    An amoebot structure computes a decomposition consisting of $\Theta(|\mathcal{H}|)$ simple regions within $O(\log n)$ rounds w.h.p.
\end{lemma}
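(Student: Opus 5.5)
The plan is to separate correctness from runtime. For correctness, I show that the three algorithmic steps realize exactly the splits of \Cref{sec:simple_regions}. For runtime, I add up the costs of the subroutines.

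\emph{Correctness.} By \Cref{th:boundary_test}, every boundary set learns whether it belongs to an inner hole or to the outer hole. Each inner boundary set then runs \Cref{th:global_maxima:boundary_set} twice, first for direction $\mathit{WNW}$ and then for direction $\mathit{ESE}$. A third run in direction $\mathit{NNE}$, restricted to the computed maxima, breaks ties. Since the result of that theorem holds for an arbitrary subset $R$, the tie-breaking run is the same theorem applied to the set of maxima. This yields exactly $v_{\mathit{WNW}}(H)$ and $v_{\mathit{ESE}}(H)$ for every $H \in \mathcal{H}$.

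Each chosen amoebot knows which of its neighbouring grid points lies in $H$, because it belongs to the boundary set of $H$. This gives the specified empty grid point required in \Cref{def:splitting_operations} Case (2). One subtlety needs attention. An amoebot may lie on several boundary sets, and the same amoebot, or the same $y$-portal, may be selected for several holes. I handle this by running the boundary computations for different boundary sets on disjoint pin sets in parallel, as in \cite{DBLP:journals/nc/PadalkinSW24}, and by noting that a duplicate portal split has no additional effect.

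\Cref{lem:tp:property:region}, applied to every $y$-portal in parallel, lets each portal decide whether it contains an element of $S$, and it broadcasts this bit to all of its amoebots. Every amoebot then knows whether its portal is split and whether it is itself a splitting node, so it can compute $R \cap N(u)$ locally for each resulting region $R$ containing it. By \Cref{lem:decomposition:simple}, the resulting regions are simple, and there are at most $3|\mathcal{H}|+1$ of them.

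For the lower bound $\Omega(|\mathcal{H}|)$, I argue that no region can contain an inner hole, since every region is simple. Each hole is therefore bordered by at least two regions, those on either side of its splitting portals. This gives at least $|\mathcal{H}|$ regions, up to a constant, and hence $\Theta(|\mathcal{H}|)$. When $|\mathcal{H}| = 0$, we read the bound as $\Theta(|\mathcal{H}|+1)$.

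\emph{Runtime.} Adding the costs gives $O(\log n)$ w.h.p. for \Cref{th:boundary_test}, $O(\log n)$ w.h.p. for each of a constant number of applications of \Cref{th:global_maxima:boundary_set}, and $O(1)$ for \Cref{lem:tp:property:region}. A union bound over constantly many w.h.p. events keeps the overall success probability high.

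The main obstacle I expect is making the parallel runs of the global-maxima computation on all inner boundary sets sound. Boundary sets can share amoebots, and the block and leader-election primitives must not interfere with one another. I would resolve this by giving each amoebot a separate constant-size group of pins per incident boundary segment, of which there are at most $3$. The circuits of distinct boundary sets are then pin-disjoint, and \Cref{th:leader_election} can be run simultaneously on all of them.
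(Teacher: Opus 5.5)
Your correctness and runtime parts coincide with the paper's proof. That proof cites \Cref{lem:decomposition:simple} for correctness and \Cref{th:boundary_test,th:global_maxima:boundary_set,lem:tp:property:region} for the running time. Your extra implementation details are sound elaborations: the $\mathit{NNE}$ tie-breaking as a further maxima computation, pin-disjoint parallel runs per boundary segment, and local computation of $R \cap N(u)$. One minor correction: the union bound is over up to $n$ parallel boundary-set computations, not constantly many events. This is still fine, since $c$ is arbitrary.

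The step that fails is your lower bound. That each hole is bordered by at least two regions says nothing about how many regions there are, because the same two regions can border arbitrarily many holes. For example, take a wide strip with a row of single-point holes and split it along the one $x$-line through all of them. It falls into just two simple regions, each bordering every hole. So ``this gives at least $|\mathcal{H}|$ regions, up to a constant'' is a non sequitur. The paper itself only proves the upper bound $3|\mathcal{H}|+1$ (via \Cref{lem:decomposition:simple}) and reads $\Theta$ loosely, so you could drop the paragraph. If you want the lower bound, you need an argument that uses the specific splits.

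One argument that works is an Euler count.
\begin{itemize}
\item For a finite induced subgraph of the triangular grid with $t$ unit triangles, Euler's formula gives $\#\text{components} - \#\text{inner holes} = |V| - |E| + t$, because the bounded non-triangular faces correspond exactly to the inner holes. For $\Gamma$ this is $1-|\mathcal{H}|$.
\item Splitting at a portal $P$ adds $|P|$ nodes and $|P|-1$ edges, and keeps every unit triangle on exactly one side.
\item A node split adds one node and no edges. It destroys no triangle, because the only triangle it would cut contains the specified empty grid point.
\item There are exactly $2|\mathcal{H}|$ node splits. Distinct holes cannot share a split node on the same side, since the two empty neighbours on one side of a node are adjacent and hence lie in the same hole.
\item There are $p$ distinct split portals, with $2 \le p \le 2|\mathcal{H}|$ when $|\mathcal{H}| \ge 1$.
\end{itemize}
By \Cref{lem:decomposition:simple}, the final regions are simple triangular grid graphs. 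Their number is therefore exactly $1-|\mathcal{H}|+p+2|\mathcal{H}| = 1+|\mathcal{H}|+p$, which lies between $|\mathcal{H}|+3$ and $3|\mathcal{H}|+1$.
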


\begin{proof}
    The correctness follows from \Cref{lem:decomposition:simple}, and the runtime follows from \Cref{th:boundary_test,th:global_maxima:boundary_set,lem:tp:property:region}.
\hfill $\square$ 
\end{proof}


\subsubsection{Decomposition into $\Theta(|\mathcal{H}|)$ simple tunnel regions}
We apply the decomposition algorithm by Padalkin and Scheideler \cite{DBLP:conf/podc/PadalkinS24}.
For the sake of completeness, we outline the algorithm in the following.
First, each region applies a leader election on the gates it intersects (see \Cref{th:leader_election,rm:leader_election}).
This allows us to apply the root and prune primitive to prune all subtrees without a gate (see \Cref{th:tp:rap}).
Then, we split the region at all non-gate portals of degree at least $3$ (see \Cref{lem:tp:deg}).
Finally, we split each gate of degree at least $2$ as follows.
The gate marks the northernmost amoebot adjacent to each adjacent portal.
Note that each amoebot can locally decide whether it is one of these amoebots.
Then, the gate removes the northernmost marked amoebot and splits the region at the remaining marked amoebots (see \Cref{lem:tp:unmark}).

\begin{lemma}[Adapted from \cite{DBLP:conf/podc/PadalkinS24}]
\label{lem:amoebot:tunnel}
    Given a decomposition consisting of $\Theta(|\mathcal{H}|)$ simple regions,
    an amoebot structure computes a decomposition consisting of $\Theta(|\mathcal{H}|)$ simple tunnel regions within $O(\log n)$ rounds w.h.p.
\end{lemma}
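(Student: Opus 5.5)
The argument has two parts: correctness follows from the combinatorial results of \Cref{sec:tunnel_regions}, and the round bound is obtained by charging each step of the outlined algorithm to a primitive from \Cref{sec:amoebot:algorithms}. All regions operate in parallel. Each region is simple (\Cref{lem:amoebot:simple}), so its $y$-portal graph is a tree by \Cref{lem:portal_tree}. Hence the tree primitives \Cref{th:tp:rap}, \Cref{lem:tp:deg} and \Cref{lem:tp:unmark} can be simulated on it as in \cite{DBLP:conf/podc/PadalkinS24}. Circuits are confined to regions by letting amoebots on split portals and split nodes keep separate partition sets for each copy, so distinct regions do not interfere.

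\textbf{Runtime.} I will go through the algorithm step by step.
\begin{enumerate}
\item Each region first identifies the gates it intersects. An amoebot locally knows whether its portal was a splitting portal and on which side its region lies.
\item The region elects a leader gate via \Cref{th:leader_election} and \Cref{rm:leader_election}. The candidate representatives are connected by a region-wide circuit, so this takes $O(\log n)$ rounds w.h.p.
\item Rooting at the leader gate and pruning gate-free subtrees uses \Cref{th:tp:rap} with $\mathcal Q$ the set of gates, costing $O(\log n)$ rounds.
\item Non-gate portals test $\deg\ge 3$ in the pruned tree in $O(1)$ rounds by \Cref{lem:tp:deg}, and the region splits there.
\item Each gate marks, for every adjacent portal, its northernmost amoebot adjacent to that portal. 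This is a local decision: an amoebot checks whether its north neighbor on the gate is also adjacent to the same portal, which can be read off the neighbor's side links. The gate then unmarks the northernmost marked amoebot using \Cref{lem:tp:unmark} in $O(1)$ rounds.
\end{enumerate}
The total is $O(\log n)$ rounds w.h.p., and a union bound over the at most $n$ concurrent leader elections preserves the high-probability guarantee.

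\textbf{Correctness and count.} The result coincides with the construction of \Cref{sec:tunnel_regions}. By \Cref{lem:tunnel_region}, a region with $k$ gates yields $\Theta(k)$ tunnel regions. Summing over regions and using \Cref{lem:number_of_gates} gives $O(|\mathcal H|)$ tunnel regions in total. The matching lower bound holds because each hole contributes gates and each gate lies in at least one tunnel region, so the count is $\Theta(|\mathcal H|)$, as in \Cref{cor:tunnel_regions}. Simplicity is preserved by the observation that splitting a simple region at a portal (and possibly some nodes on it) keeps the parts simple.

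\textbf{Main obstacle.} I expect the delicate point to be justifying that the portal-tree primitives still work after the earlier splits. A portal of the original structure may now belong to several regions, so its amoebots must simulate distinct portal nodes per region copy, and the gate-side adjacency information must be available locally. I would handle this by assigning disjoint pin subsets to each copy, which is possible because each node has at most a constant number of copies (at most three by \Cref{def:splitting_operations}) and $c$ is a constant. With that in place, the primitives apply verbatim.
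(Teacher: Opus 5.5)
Your proposal is correct and follows the paper's route: the same pipeline of gate leader election, root-and-prune on the $y$-portal tree, splitting non-gate portals of degree at least $3$, and marking/unmarking northernmost gate amoebots via \Cref{lem:tp:unmark}, with the runtime charged to the same primitives and correctness and the $\Theta(|\mathcal H|)$ count taken from \Cref{lem:tunnel_region} and \Cref{lem:number_of_gates}. Two minor precision points, neither a gap: in step~5 only adjacent portals that survived the pruning should be marked (this is locally checkable, and needed because otherwise a gate with many gate-free side branches would be cut into too many pieces), and a node can already have four copies after the first phase (when it is both $v_{\mathit{WNW}}(H)$ and $v_{\mathit{ESE}}(H')$), which is harmless since you only need a constant bound.
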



\subsubsection{Decomposition into $\Theta(|\mathcal{H}|)$ simple, geodesically convex regions}
First, each tunnel region that intersects two gates applies a leader election on the gates it intersects (see \Cref{th:leader_election,rm:leader_election}).
Let $G$ be the elected gate and $G'$ the other one.
Let $\operatorname{portal}_q(S) = \{ \operatorname{portal}_q(u) \mid u \in S \}$ for a set $S$ of amoebots.
Now, for $q \in \{ x, z \}$, each $q$-portal determines whether it is in $\operatorname{portal}_q(G)$ or $\operatorname{portal}_q(G')$ (see \Cref{lem:tp:property:region}) and shares this information with adjacent portals.
This allows portals $P_{\uparrow,q}$ and $P_{\downarrow,q}$ to identify themselves if they exist.
If they exist, they identify $b_{\uparrow,q}$ ($b_{\downarrow,q}$) by determining their westernmost northern (southern) boundary amoebot that is not on $G$ or $G'$ (see \Cref{lem:tp:unmark}).

If they do not exist, i.e., $\operatorname{portal}_q(G)$ and $\operatorname{portal}_q(G')$ are disjoint, we compute $\gate{G}{q}$ and $\gate{G'}{q}$ as follows.
Consider the $q$-portal graph $\mathcal P_q$ of the region.
Since the tunnel region is simple, $\mathcal P_q$ is a tree.
Let $\mathcal Q_q = \operatorname{portal}_q(G \cup G')$.
Let $R_q \in \mathcal Q_q$ be the northernmost portal in $\operatorname{portal}_q(G)$.
We apply the root and prune primitive on $\mathcal P_q$ with $\mathcal Q_q$ and $R_q$ (see \Cref{th:tp:rap}).
This allows $\gate{G}{q}$ ($\gate{G'}{q}$) to identify itself since it is the only gate in $\operatorname{portal}_q(G)$ ($\operatorname{portal}_q(G')$) with a neighbor not in $\operatorname{portal}_q(G)$ ($\operatorname{portal}_q(G')$).
$\gate{G}{q}$ ($\gate{G'}{q}$) identifies $\boundarynode{G}{q}$ ($\boundarynode{G'}{q}$) by determining its westernmost boundary amoebot that is not on $G$ ($G'$) (see \Cref{lem:tp:unmark}).

Next, the whole tunnel region checks whether $P_{\uparrow,x}$/$P_{\downarrow,x}$, or $P_{\uparrow,z}$/$P_{\downarrow,z}$ exist (see \Cref{lem:tp:property:region}).
If at least one pair of portals exists, Case 1 holds for $q = x$, $q = z$, or both so that we terminate (see \Cref{fig:convex_decomposition}).
Otherwise, Case 2 holds for both, $q = x$ and $q = z$, and we proceed as follows.
First, $M$ identifies itself by checking whether it is intersected by both, $\gate{G}{x} \cup \gate{G}{z}$ and $\gate{G'}{x} \cup \gate{G'}{z}$ (see \Cref{lem:tp:property:region}).
Then, $g$ ($g'$) identifies itself.
It is either the intersection of $\gate{G}{x}$ and $\gate{G}{z}$ ($\gate{G'}{x}$ and $\gate{G'}{z}$), $\boundarynode{G}{x}$ ($\boundarynode{G'}{x}$), or $\boundarynode{G}{z}$ ($\boundarynode{G'}{z}$).
Note that only one of these cases can hold at the same time.

For $q \in \{ x, y, z \}$, we apply the root and prune algorithm on $\mathcal P_q$ with $\mathcal Q = \{ \operatorname{portal}_q(g), \operatorname{portal}_q(g') \}$ and $R = \operatorname{portal}_q(g)$ (see \Cref{th:tp:rap}).
Note that since $|\mathcal Q| = 2$, we obtain a path of portals.
We get $d_q = d_q(\operatorname{portal}_q(g),d_q(\operatorname{portal}_q(g'))) = \max_{P \in \mathcal P_q} d(\operatorname{portal}_q(g),P) = \max_{P \in \mathcal P_q} d(\operatorname{portal}_q(g'),P)$.
Hence, we use the PASC algorithm with $\operatorname{portal}_q(g)$ ($\operatorname{portal}_q(g')$) as the root to compute the values $d_q$, $d_q(\operatorname{portal}_q(g),P)$ and $d_q(\operatorname{portal}_q(g'),P)$ for each portal $P$ (see \Cref{th:tp:pasc}).
Each portal $P$ compares their distances to $d_q/2$.
This allows portal $P_q$ to identify itself.

Finally, let $\mathcal P'_q = (V_{\mathcal P'_q},E_{\mathcal P'_q})$ denote the pruned portal tree.
Then, $S_M = \bigcap_{q \in \{ x, y, z \}}\bigcup_{P \in V_{\mathcal P'_q}}P$ denotes the set of all amoebots that are on a shortest path between $g$ and $g'$ \cite{DBLP:conf/podc/PadalkinS24}.
Note that each amoebot can determine whether it is in $S_M$ since it knows whether it is on a portal of $\mathcal P'_q$.
It can also determine whether its removal would disconnect $S_M$, which implies that all shortest paths between $g$ and $g'$ within $M$ go through it.
Let $B_q$ denote the set of all these amoebots.
For $q \in \{ x, y, z \}$, we check whether $R_q(g) = R_q(g')$ (see \Cref{lem:tp:property:region}).
If this is the case, portal $P_q$ identifies amoebot $b_q$ by computing the westernmost amoebot of $B_q \cap P_q$ (see \Cref{lem:tp:unmark}).

\begin{lemma}
\label{lem:amoebot:convex}
    Given a decomposition consisting of $\Theta(|\mathcal{H}|)$ simple tunnel regions,
    an amoebot structure computes a decomposition consisting of $\Theta(|\mathcal{H}|)$ simple convex regions within $O(\log n)$ rounds w.h.p.
\end{lemma}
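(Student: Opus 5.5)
The proof splits into three parts: correctness, the number of regions, and the running time. Each part mirrors the algorithm described just before the lemma.

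\textbf{Correctness.} I will argue that the splits computed by the amoebots are exactly the splits of \Cref{sec:pathconvex_regions}, so that convexity follows from \Cref{lem:path_convex_regions_correctness} and \Cref{lem:point_shape_tunnel_path_convex}. Simplicity of every region follows from the observation in \Cref{sec:simple_regions}.

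For Case~1 I check three things:
\begin{itemize}
\item A $q$-portal can recognize that it belongs to both $\portal_q(G)$ and $\portal_q(G')$.
\item By exchanging this bit with its neighboring portals, it can tell whether it is the northernmost or southernmost such portal.
\item \Cref{lem:tp:unmark} then yields the required boundary node.
\end{itemize}

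For Case~2 I use the root and prune primitive with $\mathcal Q_q=\portal_q(G\cup G')$. Since $\mathcal P_q$ is a tree by \Cref{lem:portal_tree}, the pruned tree is the union of $\portal_q(G)$, $\portal_q(G')$, and the unique connecting path. Hence $\gate{G}{q}$ is the unique portal of $\portal_q(G)$ with a neighbor outside it, i.e., the minimizer of $d_{T,q}(\cdot,G')$.

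For region $M$ I rely on the following facts:
\begin{itemize}
\item Pruning with $\mathcal Q=\{\portal_q(g),\portal_q(g')\}$ leaves exactly the portal path between them.
\item PASC gives each portal its distances to both ends, so comparing with $\lceil d_q/2\rceil$ identifies $P_q$.
\item $S_M$ is the set of nodes on shortest $gg'$-paths, via \Cref{lem:portal_distances} and the cited result.
\item The cut amoebots in $S_M$ are exactly the nodes lying on every shortest path, so the westernmost of them on $P_q$ is $b_q$.
\end{itemize}

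\textbf{Number of regions.} Each tunnel region undergoes a constant number of portal splits and node splits: at most four portal splits plus boundary-node splits in the first step, then three portal splits plus three node splits inside $M$. Each split raises the region count by at most a constant. Together with \Cref{cor:tunnel_regions}, this gives $\Theta(|\mathcal H|)$ regions in total. The lower bound is inherited, since every tunnel region yields at least one region.

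\textbf{Running time.} There are constantly many invocations, each costing at most $O(\log n)$ rounds w.h.p.:
\begin{itemize}
\item leader election (\Cref{th:leader_election});
\item root and prune (\Cref{th:tp:rap});
\item PASC (\Cref{th:tp:pasc});
\item \Cref{lem:tp:property:region}, \Cref{lem:tp:unmark}, and \Cref{lem:tp:deg}, each in $O(1)$ rounds.
\end{itemize}
All tunnel regions run in parallel on disjoint circuits. Overlapping copies of split nodes are handled by giving each copy its own pin partition sets, and since a node is a copy in only constantly many regions, the constant number of pins suffices. A union bound over polynomially many events preserves w.h.p.

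\textbf{Main obstacle.} I expect the hardest step to be the bookkeeping around PASC: the distances are available only bit by bit, so the comparison with $\lceil d_q/2\rceil$ and $\lfloor d_q/2\rfloor$ has to be done online. I would handle this with the bit-serial comparison and shift remark in the footnote of \Cref{th:tp:pasc}, running the two PASC instances (rooted at $g$ and at $g'$) in lockstep. A second, smaller step is to check that $g$ and $g'$ identify themselves uniquely; for this I would cite the case distinction preceding \Cref{lem:g_not_on_G}.
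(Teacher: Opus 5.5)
Your proposal is correct and follows the paper's proof: you establish correctness by checking that the amoebots realize exactly the splits of Section~\ref{sec:pathconvex_regions} and then invoking the convexity lemmas, and you get the $O(\log n)$ bound by summing constantly many calls to leader election, root-and-prune, PASC, and the $O(1)$ portal/region primitives. Your version is more complete than the paper's one-line argument, since you also cite \Cref{lem:point_shape_tunnel_path_convex} for the splits of $M$ (which \Cref{lem:path_convex_regions_correctness} explicitly excludes) and you explicitly bound the number of regions per tunnel.
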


\begin{proof}
    The correctness follows from \Cref{lem:path_convex_regions_correctness}, and the runtime follows from \Cref{th:leader_election,th:tp:pasc,th:tp:rap,lem:tp:property:region,lem:tp:unmark}.
\hfill $\square$ 
\end{proof}

Combining \Cref{lem:amoebot:simple,lem:amoebot:tunnel,lem:amoebot:convex} proves \Cref{th:main_theorem}.



\section{Conclusion and Future Work}
\label{sec:conclusion}

We have shown how to decompose general regular triangular grid graphs into simple, geodesically convex regions and how to compute such a decomposition in the amoebot model with reconfigurable circuits in logarithmic time.
Notably, the decomposition is model-independent and might be of broader interest beyond the amoebot model.
While our focus has been on the decomposition problem itself, prior work suggests that a wide range of problems can benefit from such decompositions (e.g. motion planning).
Exploring these applications remains an interesting direction for future work.

\subsubsection*{Acknowledgements.} 
We thank Dona Davis, Anns Mary Francis, and Alex Poovathummoottil Sibichen for their contributions to an early draft of the paper.



\bibliography{references}

\end{document}